\newcommand{\ignore}[1]{}
\def\calT{{\mathcal T}}
\def\calW{{\mathcal W}}
\def\calI{{\mathcal I}}
\def\calS{{\mathcal S}}
\def\calR{{\mathcal R}}
\def\calRS{{\calR}_{\calS}}
\def\TtoW{\calT2\calW}
\def\WtoT{\calW2\calT}
\newcommand{\mc}[1]{\mathcal{ #1}}
\newcommand{\boxtheorem}{\hfill $\Box$}
\newcommand{\nit}[1]{{\it #1}}
\newcommand{\mf}[1]{\mathfrak{ #1}}
\newcommand{\an}[1]{\!\!:\!\!{#1}}
\newcommand{\dpm}{Datalog$^{\pm}$\xspace}
\newcommand{\dpmsn}{Datalog$^{\pm,\neg sg}$\xspace}
\newcommand{\mult}[2]{\nit{mult}(#1,#2)}
\newcommand{\pt}{\textsf{ProofTree}\xspace}
\newcommand{\head}{{\it head}}
\newcommand{\body}{{\it body}}
\newcommand{\depth}{{\it depth}}
\newcommand{\chase}{{\it chase}}
\newcommand{\var}{{\it var}}
\newcommand{\nulls}{{\it nulls}}
\newcommand{\length}{{\it length}}
\newcommand{\ptt}{\textsf{Multiplicity}\xspace}
\newcommand{\Tab}{{\it Tab}}
\newcommand{\Stack}{{\it Stack}}
\newcommand{\tgd}{{\em tgd}}
\newcommand{\bl}[1]{{#1}}
\newcommand{\green}[1]{\textcolor{green}{#1}}
\newcommand{\red}[1]{\textcolor{red}{#1}}
\newcommand{\blue}[1]{\textcolor{blue}{#1}}
\title{ Datalog: Bag Semantics via Set Semantics}
\titlerunning{Datalog: Bag Semantics via Set Semantics}
\author{Leopoldo Bertossi}{RelationalAI \ Inc., USA \ and \ Carleton University,  Ottawa, Canada\\
Member of the ``Millenium Institute for Foundational Research on Data'' (IMFD, Chile)}
{bertossi@scs.carleton.ca}{}{}
\author{Georg Gottlob}{University of Oxford, UK and
TU Wien, Austria}
{georg.gottlob@cs.ox.ac.uk}{}{}
\author{Reinhard Pichler}
{TU Wien, Austria}{pichler@dbai.tuwien.ac.at}{}{}
\authorrunning{L.~Bertossi, G.~Gottlob, and R.~Pichler}
\keywords{Datalog, duplicates, multisets, query answering, chase, Datalog$^\pm$}
\begin{document}
\maketitle

\begin{abstract}
Duplicates in data management are common and problematic.
In this work, we present a translation of Datalog under bag semantics into a
well-behaved extension of Datalog, the so-called {\em warded Datalog}$^\pm$, under set semantics.
From a theoretical point of view, this allows us to reason on bag semantics
by making use of the well-established theoretical foundations of set semantics.
From a practical point of view, this allows us to handle the bag semantics of
Datalog by powerful, existing query engines for the required extension of Datalog. This use of Datalog$^\pm$ is extended to give a set semantics to duplicates in Datalog$^\pm$ itself.
We investigate the properties of the resulting Datalog$^\pm$ programs, the problem of deciding multiplicities,
and expressibility of some bag operations.
Moreover, the proposed translation has the potential for interesting applications such as
to Multiset Relational Algebra
and the semantic web query language SPARQL with bag semantics.
\end{abstract}


\section{Introduction}
\label{sect:introduction}

Duplicates are a common feature in data management. They appear, for instance, in the result of SQL queries over relational databa\-ses or when a
SPARQL query is posed over RDF data.
However, the semantics of data operations and queries in the presence of duplicates is not always clear, because duplicates are handled by bags or multisets, but common logic-based semantics in data management are set-theoretical, making it difficult to tell apart duplicates through the use of sets alone.
To address this problem, a bag semantics for Datalog programs was proposed in \cite{DBLP:conf/vldb/MumickPR90}, what we refer to as  the {\em derivation-tree bag semantics} (DTB semantics). Intuitively, two duplicates of the same tuple in an intentional predicate are accepted as such  if they have syntactically different derivation trees.  The DTB semantics was used in
\cite{DBLP:conf/semweb/AnglesG16} to provide a bag semantics for SPARQL.


\ignore{*************
The DTB semantics has two important drawbacks: first, it is operational --
thus losing the declarative, logic-based semantics of Datalog; and second, it is defined via new constructs (the derivation-trees) --
thus leaving the world of established query language concepts and losing the
applicability of the large body of query optimization techniques.
*************}

The DTB semantics follows a proof-theoretic approach, which requires external, meta-level reasoning over the set of all derivation trees
rather than allowing for a query\ignore{/programming} language that inherently collects those duplicates.
The main goal of this paper is to identify a syntactic class of extended
Datalog programs, such that: (a) it extends classical Datalog with stratified negation and has a classical model-based semantics,
\ignore{*************
(b)  it can be used for representing duplicates and computing with them
according to a natural and intended bag semantics,
*************}
(b) for every program in the class with a bag semantics\ignore{ for duplicates},
another program in the same class can be built that has a set-semantics and fully captures the bag semantics of the initial program, (c) it can be used in particular to give a set-semantics for classical Datalog with stratified negation with bag semantics\ignore{ for duplicates}. All these results can be applied, in particular,  to multi-relational algebra, i.e. relational algebra with duplicates.

To this end, we show that the DTB semantics of a Datalog program can be represented
by means of its transformation into a Datalog$^\pm$  program
\cite{DBLP:journals/jair/CaliGK13,DBLP:journals/ws/CaliGL12},
in such a way that the intended model of the former, including duplicates, can be characterized as the result of the duplicate-free chase instance for the latter.
The crucial idea of our translation from bag semantics into set semantics (of \dpm) is the
introduction of tuple ids (tids) via existentially quantified variables in the rule heads.
Different tids of the same tuple will allow us to identify usual duplicates when falling back to a bag semantics for the original Datalog program. We establish the correspondence between the DTB semantics and ours. This correspondence is then extended to Datalog with stratified negation.
We thus%
\ignore{************
achieve a fully declarative
way of expressing the bag semantics of an important query language; and we immediately
************}
recover full relational algebra (including set difference) with bag semantics in terms of a well-behaved query language under set semantics.

The programs we use for this task belong to {\em warded} Datalog$^\pm$
\cite{DBLP:conf/ijcai/GottlobP15}. This is a particularly well-behaved class of programs in that it
properly extends Datalog, has a tractable
conjunctive query answering (CQA) problem, and
has recently been implemented in a powerful query engine, namely the VADALOG System
\cite{DBLP:conf/ijcai/BellomariniGPS17,pvldb/BellomariniSG18}.
None of the other well-known
classes of Datalog$^\pm$ share these properties:
for instance, guarded \cite{DBLP:journals/jair/CaliGK13}, sticky  and weakly-sticky \cite{DBLP:journals/ai/CaliGP12} \dpm only allow restricted forms of joins and, hence, do not cover Datalog.
On the other hand, more expressive languages, such as weakly frontier guarded  \dpm \cite{DBLP:journals/ai/BagetLMS11},
lose tractability of CQA.
Warded Datalog$^\pm$
has been successfully applied to represent a core fragment of SPARQL under
certain OWL 2 QL entailment regimes \cite{glimm}, with set semantics though \cite{DBLP:conf/ijcai/GottlobP15}
(see also \cite{DBLP:conf/pods/ArenasGP14,DBLP:conf/pods/ArenasGP18}),
and it looks promising as a general language for specifying different data management tasks \cite{DBLP:conf/ijcai/BellomariniGPS17}.

\ignore{**********
We thus eliminate the need to resort to an operational semantics to be handled outside the query language.
**********}

We then go one step further and also express the bag semantics of \dpm
by means of the set semantics of \dpm. In fact, we show that the bag semantics of
a very general language in the \dpm class
can be expressed via the set semantics of Datalog$^\pm$  and the transformed program is
warded whenever the initial program is.

\smallskip
\noindent
{\bf Structure and main results.}
In Section~\ref{sec:preliminaries}, we recall some basic notions.
In Section~\ref{sec:conclusion}, we conclude and
discuss some future extensions.
The main results of the paper are detailed in Sections \ref{sec:datalog} -- \ref{sec:next}.
\begin{itemize}
\item Our translation of Datalog with bag semantics into
warded Datalog$^{\pm}$ with set semantics,
which will be referred to as program-based bag (PBB ) semantics,
is presented in Section \ref{sec:datalog}. We also show how this translation can be extended
to Datalog with stratified negation.\ignore{ and how these results allow us to capture the
multiset relational algebra (MRA).}
\item  In Section~\ref{sec:warded}, we study the transformation from bag semantics into set
semantics for \dpm itself. We thus carry over both the DTB semantics and the PBB semantics
to \dpm with a form of stratified negation, and establish the equivalence of these two semantics also for this
extended query language. Moreover, we \ignore{will} verify that the \dpm programs resulting from our
transformation are warded whenever the programs to start with belong to this class.
\item In Section~\ref{sec:decidability}, we study crucial decision problems related to multiplicities. Above all, we are interested in the question if a given tuple has finite or infinite  multiplicity. Moreover, in case of finiteness, we want to compute the precise multiplicity. We show that
these tasks can be accomplished in polynomial time (data complexity).
\item In Section \ref{sec:next}, we apply our results on Datalog with bag semantics to
Multiset Relational Algebra (MRA). We also discuss
alternative semantics for multiset-intersection and multiset-difference,
and the difficulties to capture them with our \dpm approach.
\end{itemize}


\section{Preliminaries}
\label{sec:preliminaries}

We assume familiarity with the relational data model,  conjunctive queries (CQs), in particular Boolean conjunctive queries (BCQs);  classical Datalog with minimal-model semantics, and  Datalog with  stratified negation with standard-model semantics, denoted Datalog$^{\neg s}$
(see  \cite{AHV94} for an  introduction).
\ignore{We introduce some useful notation here:
for a relational predicate $P$ and an instance $D$, $P(D)$ denotes the extension of $P$ in $D$. } An $n$-ary relational predicate $P$ has {\em positions}: $P[1], \ldots, P[n]$. With $\nit{Pos}(P)$ we denote the set of positions of predicate $P$; and with $\nit{Pos}(\Pi)$ the set of positions of (predicates in) a program $\Pi$.

\subsection{Derivation-Tree Bag (DTB) Semantics for Datalog and Datalog$^{\neg s}$}
\label{sec:dtbs}

We follow \cite{DBLP:conf/vldb/MumickPR90}, where tuples are {\em colored\/}
to tell apart duplicates of a same element in the extensional database (EDB), via an infinite, ordered list $\mc{C}$ of colors \ $c_1, c_2, \ldots$.
For  a  multiset $M$, \ $e \in M$ if $\mult{e}{M} >0$  holds,
where $\mult{e}{M}$ denotes the multiplicity of $e$ in $M$.
In this case, the $n$ copies of $e$ are denoted by $e\an{1}, \ldots, e\an{n}$, indicating that they are colored with $c_1, \ldots, c_n$,
respectively.  So,
$\nit{col}(e) := \{e\an{1}, \ldots, e\an{n}\}$ becomes a set. A multiset $M_1$ is (multi)contained in multiset $M_2$, when $\mult{e}{M_1} \leq \mult{e}{M_2}$ for every $e \in M_1$. For a multiset $M$, $\nit{col}(M) := \bigcup_{e \in M} \nit{col}(e)$, which is a set. For a ``colored" set $S$, $\nit{col}^{-1}(S)$ produces a multiset by stripping tuples from their colors.

\begin{example} \label{ex:multi}
For $M = \{a,a,a, b,b, c\}$,  $\nit{col}(a) = \{a\an{1},a\an{2},a\an{3}\}$, and
$\nit{col}(M) = S = \{a\an{1},a\an{2},a\an{3},b\an{1}, b\an{2},c\an{1}\}$.
The inverse operation, the decoloration, gives, for instance:
$\nit{col}^{-1}(a\an{2}) := a$; and  $\nit{col}^{-1}(S) := \{a,a,a, b,b, c\}$,  a multiset.
\boxtheorem
\end{example}

We consider  Datalog programs $\Pi$ with multiset predicates and multiset EDBs  $D$. A {\em derivation tree} (DT) for $\Pi$ wrt.  $D$
is a tree with labeled nodes and edges, as follows:
\begin{enumerate}
\item For an EDB predicate $P$ and $h \in \nit{col}(P(D))$,  a DT for $\nit{col}^{-1}(h)$ contains a single node with label  $h$.

\item For each rule of the form \quad
$\rho \!: \ H \leftarrow A_1,A_2,\ldots,A_k$; \ \ with $k>0$, \
and each tuple $\langle \calT_1,\ldots, \calT_k\rangle$ of DTs for the atoms $\langle d_1, \ldots, d_k\rangle$ that unify with  $\langle A_1,A_2,\ldots,A_k\rangle$ with mgu $\theta$, generate a DT for $H\theta$ with $H\theta$ as the root label, $\langle \calT_1,\ldots, \calT_k\rangle$ as the children, and $\rho$ as the label for the edges from the root to the children.
We assume that these children are arranged in the order of the corresponding body
atoms in rule $\rho$.
\end{enumerate}

For a DT $\calT$, we define $\nit{Atoms}(\calT)$ as $\nit{col}^{-1}$ of the root when $\calT$ is a single-node tree, and the root-label of  $\calT$, otherwise. \ignore{ := \mbox{ root-label of } \calT \mbox{ or } $\nit{col}^{-1}$ of the root, with the latter case  when $\calT$ is a single-node tree.}
For a set of DTs $\mf{T}$: \ $\nit{Atoms}(\mf{T}) := \biguplus \{\nit{Atoms}(\calT)~|~\calT\in \mf{T}\}$, which is a multiset that multi-contains $D$. Here, we write $\biguplus$ to denote the multi-union of (multi)sets that keeps all duplicates.
If $\nit{DT}(\Pi,$ $\nit{col}(D))$ is the set of (syntactically different) DTs\ignore{ for $\Pi$ wrt. $D$},  the {\em derivation-tree bag (DTB) semantics} for $\Pi$ is the multiset \ \
$\nit{DTBS}(\Pi,D) := \nit{Atoms}(\nit{DT}(\Pi,\nit{col}(D)))$. (Cf. \cite{DBLP:conf/pods/GreenKT07} for an equivalent, provenance-based bag semantics for Datalog.)


\begin{figure}[t]
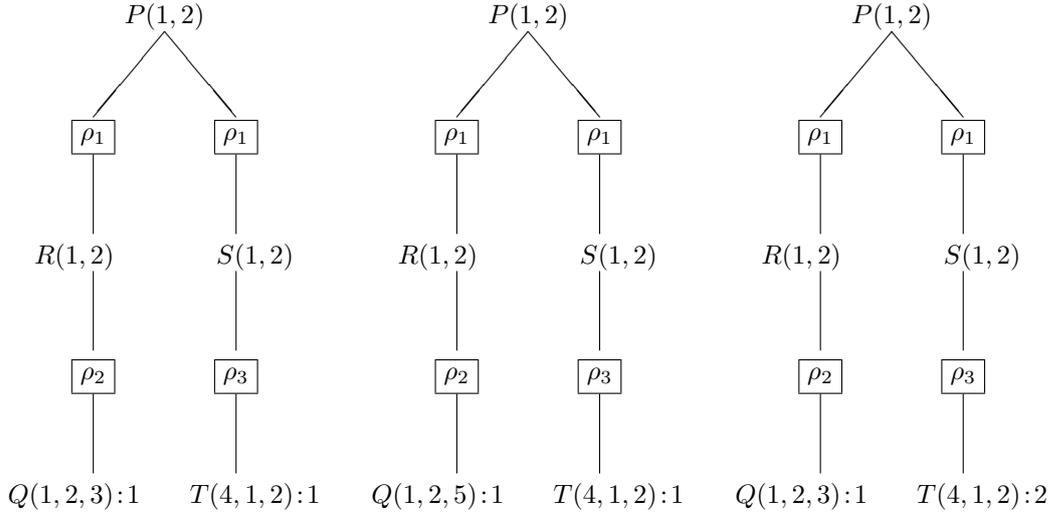

\begin{multicols}{3}
{\tiny
\hspace*{-1.5cm}\ptbegtree
\ptbeg
\ptnode{\centerline{$P(1,2)$}}
\ptbeg
\ptnode{\fbox{$\rho_1$}} \ptbeg \ptnode{$R(1,2)$\hspace{5mm}}\ptbeg \ptnode{\fbox{$\rho_2$}}  \ptleaf{$Q(1,2,3)\!:\!1$\hspace{5mm}} \ptend \ptend \ptend
\ptbeg
\ptnode{\fbox{$\rho_1$}} \ptbeg \ptnode{\hspace{5mm}$S(1,2)$} \ptbeg \ptnode{\fbox{$\rho_3$}}  \ptleaf{\hspace{5mm}$T(4,1,2)\!:\!1$} \ptend \ptend \ptend
\ptend
\ptendtree
}

{\tiny
\hspace*{-1.5cm}\ptbegtree
\ptbeg
\ptnode{\centerline{$P(1,2)$}}
\ptbeg
\ptnode{\fbox{$\rho_1$}} \ptbeg \ptnode{$R(1,2)$\hspace{5mm}}\ptbeg \ptnode{\fbox{$\rho_2$}}  \ptleaf{$Q(1,2,5)\!:\!1$\hspace{5mm}} \ptend \ptend \ptend
\ptbeg
\ptnode{\fbox{$\rho_1$}} \ptbeg \ptnode{\hspace{5mm}$S(1,2)$} \ptbeg \ptnode{\fbox{$\rho_3$}}  \ptleaf{\hspace{5mm}$T(4,1,2)\!:\!1$} \ptend \ptend \ptend
\ptend
\ptendtree
}

{\tiny
\hspace*{-1.5cm}\ptbegtree
\ptbeg
\ptnode{\centerline{$P(1,2)$}}
\ptbeg
\ptnode{\fbox{$\rho_1$}} \ptbeg \ptnode{$R(1,2)$\hspace{5mm}}\ptbeg \ptnode{\fbox{$\rho_2$}}  \ptleaf{$Q(1,2,3)\!:\!1$\hspace{5mm}} \ptend \ptend \ptend
\ptbeg
\ptnode{\fbox{$\rho_1$}} \ptbeg \ptnode{\hspace{5mm}$S(1,2)$} \ptbeg \ptnode{\fbox{$\rho_3$}}  \ptleaf{\hspace{5mm}$T(4,1,2)\!:\!2$} \ptend \ptend \ptend
\ptend
\ptendtree
}
\end{multicols}
\caption{Three (out of four) derivation-trees for $P(1,2)$ in Example~\ref{ex:theex}.}
\label{fig:derivation-trees-example-two}
\end{figure}

\begin{example} \label{ex:theex}
Consider the program $\Pi = \{\rho_1, \rho_2, \rho_3\}$ and multiset EDB $D=\{Q(1,2,3)$,
\linebreak $Q(1,2,5), Q(2,3,4), Q(2,3,4), T(4,1,2), T(4,1,2) \}$,
where $\rho_1, \rho_2, \rho_3$ are defined as follows:
\\[1.1ex]
\hspace*{5mm}$\rho_1\!: \ P(x,y)  \leftarrow R(x,y), S(x,y)$; \quad
$\rho_2\!: \ R(x,y) \leftarrow Q(x,y,z)$; \quad
$\rho_3\!: \ S(x,y) \leftarrow T(z,x,y).$

\smallskip

\noindent
Here, $\nit{col}(D) = \{Q(1,2,3)\an{1}, Q(1,2,5)\an{1}, Q(2,3,4)\an{1}, Q(2,3,4)\an{2},T(4,1,2)\an{1},T(4,1,2)\an{2} \}$.
In total, we have 16 DTs: \ (a)
6 single-node trees with labels in $\nit{col}(D)$  (b)
6 depth-two, linear trees (root to the left, i.e. rotated in $- 90^\circ$):
$R(1,2) - $ $\rho_2 - Q(1,2,3)\an{1}$.
$R(1,2) - \rho_2 - {Q(1,2,5)\an{1}}$.
$R(2,3) - \rho_2 - Q(2,3,4)\an{1}$.
$R(2,3) - \rho_2 - Q(2,3,4)\an{2}$.
$S(1,2) - \rho_3 - T(4,1,2)\an{1}$.
$S(1,2) - \rho_3 - T(4,1,2)\an{2}$.
(c) 4~depth-three trees for $P(1,2)$, three of which are
displayed in Figure \ref{fig:derivation-trees-example-two}.
The 16 different DTs in $\nit{DT}(\Pi,\nit{col}(D))$ give rise to
$\nit{DTBS}(\Pi,D) = D \ \cup \ \{R(1,2),$ $ R(1,2),$ $R(2,3)$, $R(2,3),$ $S(1,2)$,
$S(1,2),$  $ P(1,2),$  $P(1,2),$ $P(1,2),$ $ P(1,2)\}$.
 \boxtheorem
\end{example}

\ignore{*************
Notice that the top rule in Example \ref{ex:theex} contains an implicit intersection. With the DTB semantics it behaves as a join, and we obtain the multiplication of multiplicities for it.
*************}

In \cite{DBLP:conf/adc/MumickS93}, a bag semantics for Datalog$^{\neg s}$ was introduced
via derivation-trees (DTs), extending the DTB semantics in \cite{DBLP:conf/vldb/MumickPR90}
for (positive) Datalog.
This extension applies to Datalog programs with stratified negation that are range-restricted and safe, i.e. a variable in
a rule head  or in a negative literal must appear in a positive literal in the body of the same rule.
(The semantics in \cite{DBLP:conf/adc/MumickS93} does not consider duplicates in the EDB, but
it is easy to extend the DTB semantics  with multiset EDBs using colorings as above.) \ If $\Pi$
is a Datalog$^{\neg s}$ program with multiset predicates and multiset EDB  $D$, a {\em derivation
tree}  for $\Pi$ wrt.  $D$ is as for Datalog programs, but with condition {\bf \sf{2.}} modified as follows:
%
\begin{itemize}

\item[{\bf \sf{2'.}}\hspace{-1mm}] Now let $\rho$ be a rule of the form
$\rho\!: \ H \leftarrow A_1,A_2,\ldots,A_k, \neg B_1, \dots \neg B_\ell$; \ \ with $k>0$ and $\ell \geq 0$.
Let the predicate of $H$ be of some stratum $i$ and let the predicates of $B_1, \dots, B_\ell$ be of some stratum $< i$.
Assume that we have already computed all derivation trees for (atoms with) predicates up to stratum $i-1$.
Then, for each tuple $\langle \calT_1,\ldots, \calT_k\rangle$ of DTs for the atoms $\langle d_1, \ldots, d_k\rangle$ that unify with  $\langle A_1,A_2,\ldots,A_k\rangle$ with mgu $\theta$, such that there is no DT for any of the atoms
$B_1 \theta, \dots, B_\ell \theta$,
generate a DT for $H\theta$ with $H\theta$ as the root label and $\langle \calT_1,\ldots, \calT_k,
\neg B_1 \theta, \dots, \neg B_\ell \theta \rangle$ as the children, in this order.
Furthermore, all edges from the root to its children are labelled with $\rho$.
\end{itemize}

\noindent
Analogously to the positive case, now for a range-restricted and safe program $\Pi$ in Datalog$^{\neg s}$ and multiset EDB $D$, we write $\nit{DTBS}(\Pi,D)$
to denote the derivation-tree based bag semantics.

\begin{example} \ \label{ex:diff} (ex.\ \ref{ex:theex} cont.) Consider now the EDB $D' = \{Q(1,2,3)$,
 $Q(1,2,5), Q(2,3,4),$  $Q(2,3,4), T(4,1,2)\}$ (with one duplicate of $T(4,1,2)$ removed from $D$), and modify $\Pi$ to $\Pi' = \{\rho'_1, \rho_2,\rho_3\}$ with
$\rho'_1\!: \ \ P(x,y) \leftarrow R(x,y), \neg S(x,y)$ \quad (i.e., $\rho'_1$ now encodes multiset difference). \
Then, predicates $Q,R,S,T$ are on stratum 0 and $P$ is on stratum 1. The DTs for atoms with predicates from stratum 0
are as in  Example \ref{ex:theex}
with two exceptions:
there is now only one single-node DT for $T(4,1,2)$ and only one DT for $S(1,2)$.

For ground atoms with predicate $P$, we now only get two DTs producing $P(2,3)$.
One of them is shown in Figure~\ref{fig:derivation-trees-example-three}
on the left-hand side.
The other DT of $P(2,3)$ is obtained by replacing the left-most leave $Q(2,3,4)\!\!:\!\!1$ by $Q(2,3,4)\!\!:\!\!2$.
In particular, the DT of $P(2,3)$ in Figure~\ref{fig:derivation-trees-example-three}
shows that the ``derivation'' of  $\neg S(2,3)$ succeeds, i.e., there is no DT for $S(2,3)$.
The remaining two trees in Figure~\ref{fig:derivation-trees-example-three}
establish that $P(2,3)$ do not have a DT because $S(1,2)$ does have a DT.
In total, we have 12 different DTs in $\nit{DT}(\Pi',\nit{col}(D'))$ with
$\nit{DTBS}(\Pi',D') = D' \ \cup \ \{R(1,2),$ $ R(1,2),$ $R(2,3),$ $R(2,3),$
$S(1,2),$  $ P(2,3),$  $P(2,3)\}$.

\ignore{\hspace*{1cm}$P(2,3) - \rho_1- \frac{\neg S(2,3)~~ \times}
{R(2,3) - \rho_2 - Q(2,3,4) : 1}$
\quad \quad \quad
$P(2,3) - \rho_1- \frac{\neg S(2,3) ~~ \times}
{R(2,3) - \rho_2 - Q(2,3,4) : 2}$,   }

Notice that the DT semantics interprets the difference in rule $\rho_1'$ as ``all or nothing": when computing $P(x,y)$, a single DT for $S(x,y)$ ``kills" all the DTs for $R(x,y)$ \ (cf. Section \ref{sec:next}). For example, $P(1,2)$ is not obtained despite the fact that we have two copies of $R(1,2)$ and only one of $S(1,2)$, as the two trees on the right-hand side in Figure~\ref{fig:derivation-trees-example-three}
show.\boxtheorem
\end{example}

\begin{figure}[t]
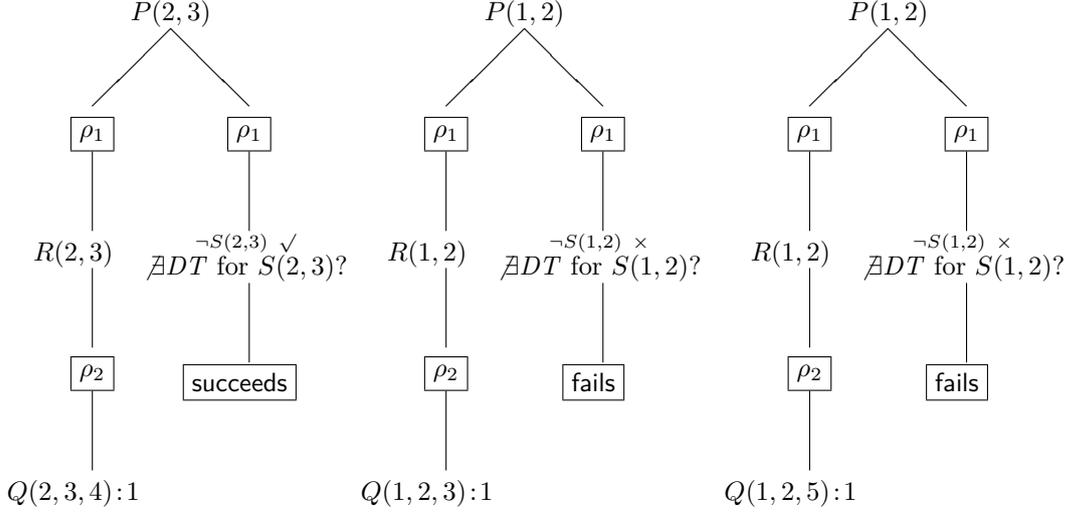

\begin{multicols}{3}
{\tiny
\hspace*{-0.5cm}\ptbegtree
\ptbeg
\ptnode{\centerline{$P(2,3)$}}
\ptbeg
\ptnode{\fbox{$\rho_1$}} \ptbeg \ptnode{$R(2,3)$\hspace{5mm}}\ptbeg \ptnode{\fbox{$\rho_2$}}  \ptleaf{$Q(2,3,4)\!:\!1$\hspace{5mm}} \ptend \ptend \ptend
\ptbeg
\ptnode{\fbox{$\rho_1$}} \ptbeg \ptnode{\hspace{-1mm}$\stackrel{\neg S(2,3) {\Large \ \surd}}
{\not\exists \nit{DT} \mbox{ for } S(2,3)?}$} \ptleaf{\fbox{\sf succeeds} \ignore{\ptleaf{\hspace{5mm}$T(4,1,2)\!:\!1$}} } \ptend \ptend 
\ptend
\ptendtree
}

{\tiny
\hspace*{-0.8cm}\ptbegtree
\ptbeg
\ptnode{\centerline{$P(1,2)$}}
\ptbeg
\ptnode{\fbox{$\rho_1$}} \ptbeg \ptnode{$R(1,2)$\hspace{5mm}}\ptbeg \ptnode{\fbox{$\rho_2$}}  \ptleaf{$Q(1,2,3)\!:\!1$\hspace{5mm}} \ptend \ptend \ptend
\ptbeg
\ptnode{\fbox{$\rho_1$}} \ptbeg \ptnode{\hspace{5mm}$\stackrel{\hspace*{-6mm}\neg S(1,2) {\Large \ \times}}
{\hspace*{-6mm}\not\exists \nit{DT} \mbox{ for } S(1,2)?}$}
   \ptleaf{\fbox{\sf fails} \ignore{\ptleaf{\hspace{5mm}$T(4,1,2)\!:\!1$}} } \ptend \ptend 
\ptend
\ptendtree
}

{\tiny
\hspace*{-0.8cm}\ptbegtree
\ptbeg
\ptnode{\centerline{$P(1,2)$}}
\ptbeg
\ptnode{\fbox{$\rho_1$}} \ptbeg \ptnode{$R(1,2)$\hspace{5mm}}\ptbeg \ptnode{\fbox{$\rho_2$}}  \ptleaf{$Q(1,2,5)\!:\!1$\hspace{5mm}} \ptend \ptend \ptend
\ptbeg
\ptnode{\fbox{$\rho_1$}} \ptbeg \ptnode{\hspace{5mm}$\stackrel{\hspace*{-6mm}\neg S(1,2)  {\Large \ \times}}
{\hspace*{-6mm}\not\exists \nit{DT} \mbox{ for } S(1,2)?}$} \ptleaf{\fbox{\sf fails} \ignore{\ptleaf{\hspace{5mm}$T(4,1,2)\!:\!1$}} } \ptend \ptend 
\ptend
\ptendtree
}
\end{multicols}
\caption{Derivation-trees for $P(2,3)$ and $P(1,2)$ in Example \ref{ex:diff}.}
\label{fig:derivation-trees-example-three}
\end{figure}

\subsection{Warded Datalog$^{\pm}$}\label{sec:wardedD}
\label{sec:basics}

Datalog$^{\pm}$ was introduced in \cite{DBLP:conf/icdt/CaliGL09} as an extension of Datalog, where the ``$+$" stands for the new ingredients, namely:
tuple-generating dependencies ({\em tgds\/}), written as {\em existential rules} of the form
$\exists \bar{z} P(\bar{x}',\bar{z})  \ \leftarrow \ P_1(\bar{x}_1), \ldots, P_n(\bar{x}_n)$,
with $\bar{x}' \subseteq \bigcup_i\bar{x}_i$, and $\bar{z} \cap \bigcup_i\bar{x}_i = \emptyset$;
as well as equality-generating dependencies ({\em egds\/}) and negative constraints.
In this work we ignore egds and constraints.\ignore{ and we restrict tgds to a single atom in the head. We refer to these
tgds as {\em existential rules\/} (or, simply, {\em rules\/}).}
The ``$-$`` in Datalog$^{\pm}$ stands for syntactic restrictions on \ignore{the syntax of the existential} rules
to ensure decidability of CQ answering\ignore{ over a database and a Datalog$^{\pm}$ program}.

We consider three sets of term symbols: $\mathbf{C}$, $\mathbf{B}$, and $\mathbf{V}$ containing constants, labelled nulls
(also known as blank nodes in the semantic web context), and variables, respectively.
Let $T$ denote an atom or a set of atoms. We write
$\var(T)$ and $\nulls(T)$ to denote the set of variables and nulls, respectively, occurring in $T$.
In a DB, typically an EDB $D$, all terms are from $\mathbf{C}$.
In an {\em instance}, we also allow terms to be from $\mathbf{B}$.
For a rule $\rho$, $\nit{body}(\rho)$ denotes the set of atoms in its body, and
$\nit{head}(\rho)$, the atom in its head.
A homomorphism $h$ from a set $X$ of atoms to a set $X'$ of atoms is a partial function
$h \colon \mathbf{C} \cup \mathbf{B} \cup \mathbf{V} \rightarrow \mathbf{C} \cup \mathbf{B} \cup \mathbf{V}$ such that
$h(c) = c$ for all $c \in \mathbf{C}$ and
$P(h(t_1), \dots, h(t_k)) \in X'$
for every atom $P(t_1, \dots, t_k) \in X$.


We say that a rule $\rho \in \Pi$ is {\em applicable\/} to an instance $I$ if there exists a homomorphism $h$ from
$\body(\rho)$ to $I$. In this case, the result of applying $\rho$ to $I$ is an instance $I' = I \cup \{h'(\head(\rho))\}$,
where $h'$ coincides with $h$ on $\var(\body(\rho))$ and $h'$ maps each existential variable in the head of $\rho$
to a fresh labelled null not occurring in $I$. For such an application of $\rho$ to $I$, we write $I \langle \rho,h\rangle I'$.
Such an application of $\rho$ to $I$ is called a {\em chase step\/}. The chase
is an important tool in the presence of existential rules. A {\em chase sequence\/} for a DB $D$ and a
\dpm program $\Pi$ is a sequence of chase steps $I_i \langle \rho_i, h_i\rangle I_{i+1}$ with $i \geq 0$, such that
$I_0 = D$ and  $\rho_i \in \Pi$ for every $i$ (also denoted $I_i \leadsto_{\rho_i,h_i}I_{i+1}$).
For the sake of readability, we sometimes only write the newly generated atoms
of each chase step without repeating the old atoms. Also the subscript $\rho_i,h_i$ is omitted if it is
clear
from the context. A chase sequence then reads $I_0   \leadsto A_1 \leadsto A_2 \leadsto A_3 \leadsto \dots$
with $I_{i+1} \setminus I_i = \{ A_{i+1}\}$.

The final atoms of all possible chase sequences
for DB $D$ and $\Pi$ form an instance referred to as $\nit{Chase}(D,\Pi)$, which
can be infinite.
We denote the result of all chase sequences up to length $d$ for some  $d \geq 0$ as $\nit{Chase}^d(D,\Pi)$.
The chase variant assumed here is the so-called {\em oblivious chase}
\cite{DBLP:journals/jair/CaliGK13,DBLP:journals/jcss/JohnsonK84}, i.e.,
if a rule $\rho \in \Pi$ ever becomes applicable with some homomorphism $h$, then
$\nit{Chase}(D,\Pi)$ contains exactly one atom of the form
$h'(\head(\rho))$ such that $h'$ extends $h$ to
the existential variables in the head of $\rho$. Intuitively,  each rule is applied exactly once
for every applicable homomorphism.


Consider a DB $D$ and a \dpm program $\Pi$ (the former the EDB for the latter). As a logical theory, $\Pi \cup D$ may have multiple models, but
the model $M = \nit{Chase}(D,\Pi)$ turns out to be a correct representative for the class of models: for every
BCQ $\mc{Q}$, $\Pi \cup D \models \mc{Q}$ iff $M \models Q$
\cite{DBLP:journals/tcs/FaginKMP05}.
\ignore{As mentioned before, the chase instance may be infinite. However,} There are classes of Datalog$^\pm$
that, even with an infinite chase, allow for decidable or even tractable CQA in the size of the EDB. Much effort has been made in identifying and characterizing interesting syntactic classes of programs with this property (see \cite{DBLP:journals/jair/CaliGK13} for an overview)\ignore{ of decidability paradigms in the
\dpm family)}. In this direction,
{\em warded} Datalog$^\pm$  was introduced in \cite{DBLP:conf/pods/ArenasGP14,DBLP:conf/pods/ArenasGP18,DBLP:conf/ijcai/GottlobP15},
as a particularly well-behaved fragment of Datalog$^{\pm}$, for which CQA is tractable.
We briefly recall and illustrate it here, for which we need some preliminary notions.


A position $p$ in Datalog$^\pm$  program $\Pi$ is {\em affected} if:
(a)
an existential variable appears in $p$, or
(b) there is $\rho \in \Pi$ such that a variable $x$ appears in $p$ in $\nit{head}(\rho)$ and all occurrences of $x$ in $\nit{body}(\rho)$ are in affected positions.
$\nit{Aff}(\Pi)$ and $\nit{NonAff}(\Pi)$ denote the sets of affected, resp. non-affected, positions of  $\Pi$.
Intuitively, $\nit{Aff}(\Pi)$ contains all positions where the chase
may possibly introduce~a~null.

\begin{example} \label{ex:Aff} Consider the following program:
\begin{align}
\exists z_1 \ R(y_1,z_1) \! ~&\leftarrow~ \! P(x_1,y_1)\label{eq:edg1}\\
\exists z_2 \ P(x_2,z_2) \! ~&\leftarrow~ \! S(u_2,x_2,x_2), R(u_2,y_2)\label{eq:edg2}\\
\exists z_3 \ S(x_3,y_3,z_3) \! ~&\leftarrow~ \! P(x_3,y_3), U(x_3).\label{eq:edg3}
\end{align}
By the first case, $R[2],P[2],S[3]$ are affected.
By the second case, $R[1],S[2]$ are affected. Now that $S[2], S[3]$ are affected, also $P[1]$ is.
We thus have
$\nit{Aff}(\Pi)$ $ =$ $\{P[1],P[2],R[1], R[2]$, $S[2],S[3] \}$ and
$\nit{NonAff}(\Pi)$ $=$ $\{S[1],$ $U[1]\}$. \boxtheorem
\end{example}

For a rule $\rho \in \Pi$, and a variable $x$:  (a)
$x \in \nit{body}(\rho)$ is {\em harmless} if it appears at least once in $\nit{body}(\rho)$ at a position in $\nit{NonAff}(\Pi)$. \
$\nit{Harmless}(\rho)$ denotes the set of harmless variables in $\rho$. Otherwise, a variable is called {\em harmful}.
Intuitively, harmless variables will always be instantiated to constants in any chase step, whereas harmful variables
may be instantiated to nulls.
(b)
$x \in \nit{body}(\rho)$ is {\em dangerous} if $x \notin \nit{Harmless}(\rho)$ and $x \in \nit{head}(\rho)$. \ $\nit{Dang}(\rho)$ denotes the set of dangerous  variables~in~ $\rho$. These are the variables which may propagate nulls into the atom created by a chase step.

 \begin{example} \ (ex.\ \ref{ex:Aff}  cont.) \label{ex:Aff+} \
$x_1$ and $y_1$ are both harmful but only $y_1$ is dangerous for (\ref{eq:edg1}); \
$u_2$ is harmless, $x_2$ is dangerous, $y_2$ is harmful but not dangerous for (\ref{eq:edg2}); \
$x_3$ is harmless and $y_3$ is dangerous for (\ref{eq:edg3}).
 \boxtheorem
 \end{example}
Now,  a rule $\rho \in \Pi$ is {\em warded\/} if either $\nit{Dang}(\rho) = \emptyset$ or there exists
an atom $A \in \body(\rho)$, the ward, such that
(1) $\nit{Dang}(\rho) \subseteq \var(A)$ and (2)
$\var(A) \cap \var(\body(\rho) \setminus \{A\}) \subseteq \nit{Harmless}(\rho)$.
A program  $\Pi$ is {\em warded} if every rule $\rho \in \Pi$ is warded.

 \begin{example} \ (ex.\ \ref{ex:Aff+} cont.) \
Rule  (\ref{eq:edg1}) is trivially warded with the single body atom as the ward.
Rule  (\ref{eq:edg2}) is warded by $S(u_2,x_2,x_2)$:
variable $x_2$ is the only dangerous variable
and $u_2$ (the only variable shared by the ward with the rest of the body) is harmless. Actually,
the other body atom $R(u_2,y_2)$ contains the harmful variable $y_2$;
but it is not dangerous and not shared with
the ward.
Finally, rule (\ref{eq:edg3}) is warded by $P(x_3,y_3)$; the other atom $U(x_3)$ contains no affected variable.
Since all rules are warded, the program $\Pi$ is warded. \boxtheorem
\end{example}

\dpm can be extended with safe, stratified negation in the usual way,
 similarly as stratified Datalog \cite{AHV94}. The resulting Datalog$^{\pm,\neg s}$ can also be given a chase-based semantics \cite{DBLP:journals/ws/CaliGL12}.
The notions of affected/non-affected positions and harmless/harmful/dangerous variables carry over to a Datalog$^{\pm,\neg s}$ program $\Pi$ by considering only the program $\Pi^{\rm pos}$ obtained
from $\Pi$ by deleting all negated body atoms. For warded \dpm, only a restricted form of stratified negation is
allowed -- so-called {\em stratified ground\/} negation. This means that we require for every rule $\rho \in \Pi$:
if $\rho$ contains a negated atom $P(t_1,\dots,t_n)$, then every $t_i$ must be either a constant (i.e, $t_i \in \mathbf{C}$) or
\ignore{$t_i$} a harmless variable. Hence, negated atoms can never contain a null in the chase. We write \dpmsn for programs in this language.

The class of warded \dpmsn  programs extends the class of Datalog$^{\neg s}$ programs.
Warded \dpmsn is expressive enough to capture query answering of a core fragment of SPARQL under certain OWL 2 QL entailment regimes
\cite{glimm},
and this task can actually be accomplished in polynomial time
(data complexity)
\cite{DBLP:conf/pods/ArenasGP14,DBLP:conf/pods/ArenasGP18}.
Hence, \dpmsn constitutes a very good compromise between expressive
power and complexity. Recently, a powerful query engine for
warded \dpmsn has been presented \cite{DBLP:conf/ijcai/BellomariniGPS17,pvldb/BellomariniSG18},
namely the VADALOG system.


\section{Datalog$^\pm$-Based Bag Semantics for Datalog$^{\neg s}$}
\label{sec:datalog}

We now provide a set-semantics that represents the bag semantics for a Datalog program $\Pi$ with a multiset EDB $D$ via the transformation into a Datalog$^\pm$  program $\Pi^+$ over
a set EDB $D^+$ obtained from $D$. For this, we assume w.l.o.g., that the set of {\em nulls}, $\mathbf{B}$, for a Datalog$^{\pm}$ program is partitioned into two infinite ordered sets $\mc{I} = \{\iota_1,\iota_2, \ldots\}$, for unique, global tuple identifiers (tids), and $\mc{N} = \{\eta_1,\eta_2, \ldots\}$, for usual nulls in Datalog$^\pm$  programs.
Given a multiset EDB $D$ and a program $\Pi$, instead of using colors
and syntactically different derivation trees, we will  use elements of $\mc{I}$ to identify both the elements of the EDB and the tuples resulting from  applications of the chase.

For every predicate $P(\ldots)$, we introduce a new version $P(. \ ;\ldots)$ with an extra, first argument (its $0$-th position) to accommodate a tid, which is a null from $\mc{I}$. If an atom $P(\bar{a})$ appears in $D$ as
$n$ duplicates, we create the tuples \ $P(\iota_1';\bar{a}), \ldots, P(\iota_n';\bar{a})$, with the $\iota_i'$ pairwise different nulls from $\mc{I}$ as tids, and not used to identify any other element of $D$.
We obtain a {\em set} EDB $D^+$ from the multiset EDB $D$.
%
Given a  rule in $\Pi$, we introduce tid-variables (i.e. appearing in the $0$-th positions of predicates) and existential quantifiers in the rule head,  to  formally generate fresh tids when the rule applies. More precisely, a rule in $\Pi$ of the form \ $\rho\!: \ H(\bar{x}) \ \leftarrow \ A_1(\bar{x}_1),A_2(\bar{x}_2),\ldots,A_k(\bar{x}_k)$, with $k>0, \ \bar{x} \subseteq \cup_i \bar{x}_i$, becomes the Datalog$^\pm$ rule \ $\rho^+\!: \ \exists z \ H(z;\bar{x}) \leftarrow A_1(z_1;\bar{x}_1),A_2(z_2;\bar{x}_2),\ldots,A_k(z_k;\bar{x}_k)$, with fresh, different variables $z, z_1, \ldots,z_k$.
The resulting Datalog$^\pm$  program $\Pi^+$ can be evaluated according to
the usual {\em set semantics\/} on the set EDB $D^+$ via  the  chase:
when the instantiated body $A_1(\iota_1';\bar{a}_1)$,
$A_2(\iota_2';\bar{a}_2),\ldots,A_k(\iota_k';\bar{a}_k)$ of rule $\rho^+$
becomes true, then the new tuple $H(\iota;\bar{a})$ is created,
with $\iota$ the first (new) null from $\mc{I}$ that has not been used yet, i.e., the tid of the new atom.
\ignore{****************
The chase-sequences with $\Pi^+$ and $D^+$ allow us to define by transitivity
a relationship, $D^+ \leadsto^* A$, between instance $D^+$ and an  atom $A$ in the chase instance.
We will assume below that chase sequences are {\em minimal} in that every intermediate atom derived
is used to enforce some \tgd\ later along the same sequence.
****************}

\begin{example} \! \label{ex:theex+} (ex.\ \ref{ex:theex} cont.) \ The EDB $D$ from
Example \ref{ex:theex} becomes

\smallskip

$D^+ =
\{Q(\iota_1;1,2,3)$,
$Q(\iota_2;1,2,5), Q(\iota_3;2,3,4), Q(\iota_4;2,3,4), T(\iota_5;4,1,2), T(\iota_6;4,1,2) \}$;%
\footnote{Notice that this set version of $D$ can also be created by means of Datalog$^\pm$ rules. For example, with the rule \ $\exists zQ(z;x,y,v) \leftarrow Q(x,y,v)$ for the EDB predicate $Q$.}

\smallskip

\noindent
and program
$\Pi$ becomes
$\Pi^+ = \{\rho_1^+,\rho_2^+,\rho_3^+ \}$ with
$\rho_1^+\!: \ \exists u P(u;x,y)  \leftarrow R(u_1;x,y), S(u_2;x,y)$; \
$\rho_2^+\!: \ \exists u R(u;x,y) \leftarrow Q(u_1;x,y,z)$; \
$\rho_3^+\!: \ \exists u S(u;x,y) \leftarrow T(u_1;z,x,y).$

\smallskip

The following is a 3-step chase sequence of $D^+$ and $\Pi^+$:
$\{Q(\iota_1;1,2,3), T(\iota_4;4,1,2)\}$
$\leadsto_{\rho_2^+}$ $R(\iota_7;1,2)$
$\leadsto_{\rho_3^+}$  $S(\iota_{11};1,2)$
$\leadsto_{\rho_1^+}$
$P(\iota_{13};1,2)$.

\smallskip

Analogously to the depth-two and depth-three trees in Example~\ref{ex:theex},
the chase produces 10 new atoms. In total, we get: \
$\nit{Chase}(\Pi^+,D^+)$
$ = D^+ \cup
\{R(\iota_7;1,2),$ $ R(\iota_8;1,2),$ $R(\iota_9;2,3),$ $R(\iota_{10};2,3),$ $S(\iota_{11};1,2)$,
$S(\iota_{12};1,2),$  $ P(\iota_{13};1,2),$  $P(\iota_{14};1,2),$ $P(\iota_{15};1,2),$ $ P(\iota_{16};1,2)\}$.
\boxtheorem
\end{example}

In order to extend the PBB Semantics to Datalog$^{\neg s}$, we have to extend our
transformation of programs $\Pi$ into $\Pi^+$ to rules with negated atoms.
Consider a rule $\rho$ of the form:
\begin{equation}
\rho\!: \  H(\bar{x}) \ \leftarrow \ A_1(\bar{x}_1), \ldots, A_k(\bar{x}_k),
\neg B_1(\bar{x}_{k+1}), \ldots, \neg B_\ell(\bar{x}_{k+\ell}), \label{eq:forRew}
\end{equation}

\noindent with, $\bar{x}_{k+1}, \ldots, \bar{x}_{k+\ell}  \subseteq \bigcup_{i=1}^k \bar{x}_i$;
we transform it
 into the following two rules:
 \begin{eqnarray}
\rho^+\!: \  \exists z H(z;\bar{x}) &~\leftarrow~& A_1(z_1;\bar{x}_1), \ldots, A_k(z_k;\bar{x}_k),\nonumber 
\neg \nit{Aux}_1^\rho(\bar{x}_{k+1}), \ldots, \neg \nit{Aux}_\ell^\rho(\bar{x}_{k+\ell}),\\
\nit{Aux}_i^\rho(\bar{x}_{k +i}) &~\leftarrow~& B_i(z_i;\bar{x}_{k+i}), \ \ i = 1, \ldots, \ell. \label{eq:aftRew}
\end{eqnarray}
The introduction of auxiliary predicates $\nit{Aux}_i$ is crucial since adding fresh variables
directly to the negated atoms  would yield negated atoms of the form
$\neg B_i(z_{k+i};\bar{x}_{k+i})$ in the rule body, which make the rule unsafe.
The resulting Datalog$^{\pm, \neg s}$ program is from the desired class
\dpmsn:

\begin{theorem}
\label{theorem:piplus-is-warded} \em
Let $\Pi$ be a Datalog$^{\neg s}$ program and let $\Pi^+$ be the transformation of $\Pi$ into
a Datalog$^{\pm, \neg s}$ program. Then, $\Pi^+$ is a warded \dpmsn program.
\end{theorem}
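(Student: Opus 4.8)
The plan is to verify the two conditions that membership in \dpmsn\ requires: (i) every rule of $\Pi^+$ is warded, and (ii) the negation occurring in $\Pi^+$ is stratified ground. The heart of the whole argument is a precise description of the affected positions of $\Pi^+$, after which everything else follows almost mechanically. Recall that affectedness, harmlessness, etc.\ are computed on the positive part of $\Pi^+$ (obtained by deleting the negated atoms), which I will write $\Pi^{+,\mathrm{pos}}$.

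First I would establish the key lemma: in $\Pi^{+,\mathrm{pos}}$ the only affected positions are the $0$-th (tid) positions $P[0]$ of the predicates $P$ that occur in some rule head; no position $P[j]$ with $j \ge 1$ is affected. I prove this by induction over the fixpoint that defines $\nit{Aff}(\Pi^+)$. For the base case (clause (a)), the only existential variable of a rule $\rho^+$ is the tid variable $z$, which occurs solely in the head position $H[0]$, while the auxiliary rules of (\ref{eq:aftRew}) contain no existential variable at all; hence clause (a) affects only $0$-th positions. For the induction step (clause (b)), assume every currently affected position is a $0$-th position and consider a position $P[j]$ that clause (b) would newly affect through a head occurrence of some variable $x$. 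If $j=0$ nothing new happens. If $j\ge 1$, then $x$ is one of the original variables, and by construction the $0$-th body positions are occupied exclusively by the fresh tid variables $z_1,\dots,z_k$; hence $x$ occurs in the body only at non-zero positions, which are non-affected by the induction hypothesis. So not all occurrences of $x$ lie in affected positions and $P[j]$ does not become affected. The identical reasoning covers the auxiliary predicates, whose arguments are original variables only. This closes the induction.

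With the lemma in hand, wardedness is immediate via the first disjunct of the definition, so I would simply show $\nit{Dang}(\rho)=\emptyset$ for every rule $\rho$ of $\Pi^+$. A body variable is dangerous only if it is both harmful and occurs in the head. In $\rho^+$ (and in each auxiliary rule) the tid variables $z_i$ are fresh and never reach the head, so they are not dangerous; every remaining body variable is an original variable, which by construction occurs at a non-zero body position and hence, by the lemma, at a non-affected position, so it is harmless and therefore not dangerous. Thus $\nit{Dang}(\rho)=\emptyset$, the rule is trivially warded, and consequently $\Pi^+$ is warded.

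For the stratified-ground-negation condition, I would observe that the only negated atoms the transformation introduces are the $\neg\nit{Aux}_i^\rho(\bar{x}_{k+i})$ of (\ref{eq:aftRew}); their arguments are exactly the original variables $\bar{x}_{k+i}$, all harmless by the previous paragraph (any constants inherited from $\Pi$ are trivially allowed). Stratifiability is inherited from $\Pi$: placing each $\nit{Aux}_i^\rho$ at the stratum of its defining predicate $B_i$, which is strictly below that of the head $H$ since $\Pi$ is stratified and $\rho^+$ depends negatively only on these auxiliary predicates, yields a valid stratification of $\Pi^+$. Hence $\Pi^+$ is a warded \dpmsn\ program. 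The only genuinely delicate point is the affected-positions lemma; everything else is a direct consequence. The reason the lemma holds is precisely that the transformation keeps the null-carrying tid positions ($0$-th positions, filled by fresh $z_i$ that never propagate to a head) separate from the ordinary positions through which the original variables are passed on.
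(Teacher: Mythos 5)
Your proof is correct and follows essentially the same route as the paper's own proof: show that the only affected positions of $\Pi^+$ are the tid positions $P[0]$, conclude that no rule has a dangerous variable (hence every rule is trivially warded), observe that the auxiliary predicates of the negated atoms occupy only non-affected positions so the negation is ground, and note that the stratification of $\Pi$ carries over with $\nit{Aux}_i^\rho$ placed at the stratum of $B_i$. The only difference is one of detail: you prove by explicit induction on the fixpoint defining $\nit{Aff}(\Pi^+)$ what the paper dismisses as ``easy to verify.''
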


\ignore{\begin{proof}
It is easy to verify that, in every $\rho^+ \in \Pi^+$, the only harmful position is the
position 0 (i.e, the tid) of each predicate from $\Pi$. However, the variables occurring in position 0
in the rule bodies do not occur in the head. Hence, none of the rules in $\Pi^+$ contains a dangerous variable and,
therefore, $\Pi^+$ is trivially warded. By the same consideration, the auxiliary predicates introduced in
(\ref{eq:aftRew}) above contain only harmless positions. Since these are the only negated atoms in rules of $\Pi^+$,
we have only ground negation in $\Pi^+$. Finally, the stratification of $\Pi$ carries over to $\Pi^+$, where
an auxiliary predicate
$\nit{Aux}_i^\rho$ introduced for a negated predicate $B_i$ in the definition of predicate  $H$ by $\rho$ in $\Pi$ ends up in stratum $s$ of $\Pi^+$ iff $B_i$ is
in stratum $s$ in $\Pi$ (c.f. (\ref{eq:forRew}) and (\ref{eq:aftRew})).
\end{proof} }

Operation $\nit{col}^{-1}$ of
Section \ref{sec:preliminaries} inspires de-identification  and multiset merging operations. Sometimes we use double braces, $\{\!\{\ldots\}\!\}$, to emphasize that the operation produces a {\em multiset\/}.

\begin{definition}
\label{def:di-and-sp}
For a set $D$ of tuples with tids,
$\mc{DI}(D)$ and $\mc{SP}(D)$, for de-identification and set-projection, respectively,
are:  (a) $\mc{DI}(D) := \{\!\{P(\bar{c})~|$ $P(t;\bar{c}) \in D \mbox{ for some } t\}\!\}$,
a {\it multiset\/}; \ and \ (b)  $\mc{SP}(D) := \{P(\bar{c})~|$ $ P(t;\bar{c}) \in D$ $
\mbox{ for some } t\}$, a {\it set\/}.
 \boxtheorem
\end{definition}

\begin{definition}\label{def:sem1}
Given a Datalog$^{\neg s}$ program $\Pi$
and a multiset EDB $D$, the {\em program-based bag semantics} (PBB semantics) assigns to $\Pi \cup D$ the multiset:

\smallskip

$\nit{PBBS}(\Pi,D) := \mc{DI}(\nit{Chase}(\Pi^+,D^+))=
\{\!\{P(\bar{a})~|~ P(t;\bar{a}) \in \nit{Chase}(\Pi^+,D^+)\}\!\}.$
\boxtheorem
\end{definition}

The main results in this section are the correspondence of
PBB semantics  and DTB  semantics and the relationship of both with classical set semantics of Datalog:

\begin{theorem} \label{thm:isom}
\em
For a Datalog$^{\neg s}$ program $\Pi$  with a multiset EDB $D$,
\ $\nit{DTBS}(\Pi,D) = \nit{PBBS}(\Pi,D)$ holds.
\end{theorem}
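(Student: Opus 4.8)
The plan is to exhibit a bijection $\Phi$ between the set $\nit{DT}(\Pi,\nit{col}(D))$ of syntactically distinct derivation trees and the chase instance $\nit{Chase}(\Pi^+,D^+)$ that commutes with decoloration and de-identification, in the sense that $\mc{DI}(\{\Phi(\calT)\}) = \{\!\{\nit{Atoms}(\calT)\}\!\}$ for every DT $\calT$ (both sides being the single underlying data atom). This suffices: by definition $\nit{DTBS}(\Pi,D) = \biguplus_{\calT} \{\!\{\nit{Atoms}(\calT)\}\!\}$ and $\nit{PBBS}(\Pi,D) = \mc{DI}(\nit{Chase}(\Pi^+,D^+)) = \biguplus_{A}\mc{DI}(\{A\})$, so a bijection $\Phi$ preserving the underlying atom lets us re-index the second multi-union by the first and match it term by term, giving the claimed multiset identity (this works regardless of whether the chase, and hence the multiplicities, are finite or infinite). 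The whole argument is organized as a double induction: an outer induction on strata and, within a stratum, an induction on the depth of the DT, equivalently on the length of the generating chase subsequence.

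I would define $\Phi$ recursively on the structure of a DT. Base case: a single-node DT labelled by a colored EDB atom $e\an{i}$ is sent to the EDB atom of $D^+$ carrying the $i$-th tid created for $e$; by construction of $D^+$ (one fresh tid per duplicate, mirroring one fresh color per duplicate) this is a bijection between $\nit{col}(D)$ and the EDB part of $D^+$. Inductive case for a positive rule $\rho\!: H \leftarrow A_1,\dots,A_k$: a DT with root $H\theta$ and ordered children $\calT_1,\dots,\calT_k$ for atoms $d_1,\dots,d_k$ is sent to the atom produced by the chase step applying $\rho^+$ to the body atoms $\Phi(\calT_1),\dots,\Phi(\calT_k)$. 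Since each $d_j$ unifies with $A_j$ and the extra $0$-th (tid) positions of $\rho^+$ carry fresh, pairwise distinct variables $z_1,\dots,z_k$, the atoms $\Phi(\calT_j)=A_j(t_j;\bar a_j)$ genuinely form an applicable body for $\rho^+$, and the oblivious chase produces exactly one head atom $H(\iota;\bar a)$ with a fresh tid $\iota$. In both cases $\mc{DI}(\{\Phi(\calT)\}) = \{\!\{\nit{Atoms}(\calT)\}\!\}$ because $\Phi$ never alters the data positions.

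The heart of the argument is that $\Phi$ is a bijection, and here the global uniqueness of tids does the work: since every chase step mints a fresh tid and the EDB tids are pairwise distinct, distinct chase atoms never share a tid, so the tid is a perfect identifier. For injectivity, two DTs differ syntactically iff somewhere they use a different ordered child tuple; by the inductive hypothesis a different child tuple maps to a tuple of body atoms differing in at least one tid, hence to a different homomorphism on the tid positions of $\rho^+$, and the oblivious chase assigns a distinct fresh head tid to each distinct applicable homomorphism. For surjectivity (and to see the chase contains no extra atoms), every non-EDB chase atom is produced by exactly one homomorphism of the oblivious chase, i.e. by exactly one tuple of tid-carrying body atoms, which by the inductive hypothesis is $\langle \Phi(\calT_1),\dots,\Phi(\calT_k)\rangle$ for a unique child tuple; assembling $\rho$ with these children yields the unique preimage DT. Thus ``syntactically distinct DT'' and ``distinct chase atom'' are in exact correspondence.

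Finally, negation. By the stratified induction I would first establish, for every predicate $B$ below the current stratum, that $B(\bar a)$ has a DT if and only if $B(t;\bar a) \in \nit{Chase}(\Pi^+,D^+)$ for some tid $t$, which by the defining rule of the auxiliary predicate in~(\ref{eq:aftRew}) is equivalent to $\nit{Aux}_i^\rho(\bar a)$ being in the chase. Consequently, the negative part of the modified DT condition \textsf{2'} --- that none of $B_1\theta,\dots,B_\ell\theta$ has a DT --- holds precisely when all negated atoms $\neg\nit{Aux}_i^\rho(\bar x_{k+i})$ of $\rho^+$ succeed in the chase, so the positive induction above carries over verbatim at the current stratum. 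I expect this negation step to be the main obstacle: one must verify that collapsing the question ``does $B_i\theta$ have any DT?'' onto the ground, multiplicity-insensitive predicate $\nit{Aux}_i^\rho$ faithfully reproduces the all-or-nothing behaviour of \textsf{2'} --- a single DT for $B_i\theta$ suppresses every DT of the positive body, exactly as a single chase atom $B_i(t;\bar a)$ makes $\nit{Aux}_i^\rho(\bar a)$ true and thereby blocks $\rho^+$ --- and that routing negation through $\nit{Aux}_i^\rho$ introduces no tids and hence no spurious multiplicities.
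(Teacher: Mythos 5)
Your proof is correct and follows essentially the same route as the paper's: a one-to-one correspondence between derivation trees and chase derivations, proved by induction on DT depth (resp.\ chase-sequence length), with stratification reducing negation to the stratified chase through the $\nit{Aux}$ predicates. The only cosmetic difference is that you target chase \emph{atoms} directly (exploiting that the oblivious chase mints one fresh tid per rule application, so tids are perfect identifiers), whereas the paper targets \emph{minimal chase sequences} under a canonical interleaving; since each non-EDB chase atom arises from exactly one rule application, these are the same correspondence.
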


\begin{proof}[Proof Idea]
The theorem is proved by establishing a one-to-one correspondence between DTs in $\nit{DT}(\Pi,\nit{col}(D))$ with a fixed
root atom $P(\bar{c})$ and (minimal) chase-derivations of $P(\bar{c})$ from $D^+$ via $\Pi^+$.
This proof proceeds by induction on the
depth of the DTs and length of the chase sequences.
\end{proof}

\begin{corollary}\label{cor:model}
 \em
Given a Datalog (resp.\  Datalog$^{\neg s}$) program $\Pi$ and a multiset EDB $D$,
the set \linebreak
$\mc{SP}(\nit{Chase}(\Pi^+,D^+))$ is the minimal model (resp.\ the standard model) of the
program $\Pi \cup  \mc{SP}(D)$.
\end{corollary}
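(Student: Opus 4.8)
The plan is to reduce the statement to what Theorem~\ref{thm:isom} already gives and then invoke the classical correspondence between derivability and model membership. First I would observe that, for any set $S$ of tid-tagged atoms, $\mc{SP}(S)$ is exactly the \emph{support} (the underlying set) of the multiset $\mc{DI}(S)$: both strip the tids, the only difference being that $\mc{DI}$ retains multiplicities while $\mc{SP}$ collapses them. Applying this to $S = \nit{Chase}(\Pi^+,D^+)$ and using Definition~\ref{def:sem1} together with Theorem~\ref{thm:isom}, the set in question equals the support of $\mc{DI}(\nit{Chase}(\Pi^+,D^+)) = \nit{PBBS}(\Pi,D) = \nit{DTBS}(\Pi,D)$. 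Hence it suffices to show that the support of $\nit{DTBS}(\Pi,D)$ — i.e.\ the set of atoms $P(\bar c)$ that possess at least one derivation tree wrt.\ $\Pi$ and $D$ — coincides with the minimal model (resp.\ standard model) of $\Pi \cup \mc{SP}(D)$, where $\mc{SP}(D)$ denotes the set of distinct tuples of the multiset $D$.

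The next step is to remove the dependence on multiplicities. Existence of a derivation tree for $P(\bar c)$ is insensitive to how many copies of each EDB tuple are present: by condition~{\bf \sf{1.}} a single-node DT for a leaf only requires the corresponding tuple to occur at least once, so a DT over the colored EDB $\nit{col}(D)$ turns into a DT over the set $\mc{SP}(D)$ by forgetting colors, and conversely any DT over $\mc{SP}(D)$ can be colored since every tuple of $\mc{SP}(D)$ has at least one colored copy. Thus $P(\bar c)$ has a DT wrt.\ $\Pi$ and $D$ iff it has one wrt.\ $\Pi$ and $\mc{SP}(D)$. It then remains to appeal to the standard soundness and completeness of the derivation-tree proof procedure for Datalog: an atom has a derivation tree from a set EDB iff it lies in the least fixpoint of the immediate-consequence operator, i.e.\ the minimal model. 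This settles the positive case.

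The main obstacle is the stratified case, which I would settle by induction on the strata. Assuming the equivalence ``a DT exists iff the atom is in the standard model'' has been established for all strata below $i$, I consider an atom $H(\bar c)$ whose predicate is on stratum $i$. By the negation clause~{\bf \sf{2'.}}, a DT for $H(\bar c)$ exists iff some rule instance $\rho\theta$ has positive body atoms that each possess a DT and negated body atoms $B_j\theta$ (on strata $<i$) that each possess \emph{no} DT. By the induction hypothesis the positive condition matches membership of the body atoms in the standard model computed so far, and the negative condition ``no DT for $B_j\theta$'' matches $B_j\theta \notin$ the standard model up to stratum $i-1$; this is precisely the firing condition of $\rho$ in the standard stratified evaluation of stratum $i$. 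The delicate point to verify carefully is exactly this alignment of the DT-level ``no derivation tree for a lower-stratum negated atom'' with the model-theoretic ``not in the standard model so far'' — which is what the induction hypothesis supplies and why clause~{\bf \sf{2'.}} insists the negated predicates live in strictly lower strata. Combining both directions gives equality of the support of $\nit{DTBS}(\Pi,D)$ with the standard model of $\Pi\cup\mc{SP}(D)$, completing the proof. Alternatively, one can argue directly on the chase: the map deleting the $0$-th position sends $D^+$ to $\mc{SP}(D)$ and each $\Pi^+$-chase step to a set-semantics derivation step of $\Pi$, yielding soundness by induction on chase length and completeness by induction on the fixpoint iteration of $\Pi$.
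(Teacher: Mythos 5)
Your proof is correct and follows essentially the same route as the paper's: both reduce the claim via Theorem~\ref{thm:isom} (chase-derivations correspond to derivation trees), observe that a DT over the colored multiset EDB is the same thing as a DT over $\mc{SP}(D)$, and then invoke the classical correspondence between top-down (derivation-tree) and bottom-up (fixpoint/standard-model) evaluation of Datalog, resp.\ stratified Datalog. Your stratum-by-stratum induction for the Datalog$^{\neg s}$ case just spells out the correspondence that the paper's proof appeals to in one sentence, so it is the same argument at a finer level of detail.
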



\section{Bag Semantics for Datalog$^{\pm,\neg sg}$}
\label{sec:warded}

In the previous section, we have seen that warded \dpm (possibly extended with stratified ground negation)
is well suited to capture the bag semantics of Datalog in terms of classical set semantics.
We now want to go one step further and study the {\em bag semantics\/} for \dpmsn itself. Note that this question makes a lot of sense given the results from \cite{DBLP:conf/pods/ArenasGP18}, where it is shown that warded
\dpmsn captures a core fragment of SPARQL under certain
OWL2 QL entailment regimes
and  the official W3C semantics of SPARQL is a bag semantics.

\subsection{Extension of the DTB Semantics to Datalog$^{\pm, \neg sg}$}
\label{sec:dpm-dtb}

\ignore{
\comlb{Apart from having to say at least once that by ``ground" we mean on $\mathbf{C}$ (done below!), this has to be done with care, just in case: We added nulls to the EDB atoms, and we may want to talk about duplicates for those atoms, but what you say here prohibits that. It could be fixed adding something like what I have in (horrible) green in the same line. Please, check.}\\
\comrp{As far as I understand, at this point, we have no tids in position 0. We use an ``ordinary'' \dpmsn program and define its bag semantics in terms of proof trees (i.e., a generalization of DTs from Mumick et al.). Only in the next step, we do our transformation of $\Pi$ into $\Pi^+$, where the tids in position 0 come into play. But then, the definition of bag semantics via PTs is not touched anymore. I have tried to simplify the argumentation in favor of
studying multiplicities for ground atoms only.}
}

\ignore{*******************
We first have to define a bag semantics for \dpm or, more generally, for \dpmsn.
Here we encounter a complication since atoms in a model of ${\cal D} \cup \Pi$ for a (multiset) EDB ${\cal D}$ and
\dpmsn program $\Pi$ may have labelled nulls as arguments.
However, for several reasons, we restrict ourselves to considering duplicates of ground atoms containing constants from $\mathbf{C}$ only \green{in their non-zero positions}.
First, nulls correspond to existentially quantified variables which can be arbitrarily chosen.
Moreover, as we have recalled in Section \ref{sec:basics}, query answering over an EDB and a \dpm program comes down to logical implication, i.e., for a BCQ $Q$, we ask if ${\cal D} \cup \Pi \models Q$ holds, which is equivalent to asking if
$\chase(D,\Pi) \models Q$ holds. More generally, for single-atom queries $Q$ of the form
$Q = P(x_1, \dots, x_k)$
we ask for all instantiations $(t_1, \dots, t_k)$ of $(x_1, \dots, x_k)$ such that
$P(t_1, \dots, t_k)$ is true in every model of ${\cal D} \cup \Pi$. It is well known
that only ground instantiations (on $\mathbf{C}$) can have this property (see e.g.\ \cite{DBLP:journals/tcs/FaginKMP05}). In the following, unless otherwise stated, ``ground atom" means instantiated on $\mathbf{C}$.
In order to define the  multiplicity of a ground atom $P(t_1, \dots, t_k)$
wrt.\ a (multiset) EDB $D$ and a warded \dpmsn program $\Pi$, we adopt the notion of proof trees used in
\cite{DBLP:conf/pods/ArenasGP18,DBLP:journals/ai/CaliGP12}, which generalize the notion of derivation trees
to \dpmsn.
*******************}

The definition of a DT-based bag semantics for \dpmsn is not as straightforward as for Datalog$^{\neg s}$,
since atoms in a model of $D \cup \Pi$ for a (multiset) EDB $D$ and
\dpmsn program $\Pi$ may have labelled nulls as arguments, which
 correspond to existentially quantified variables and may be arbitrarily chosen.
Hence, when counting duplicates, it is not clear whether two atoms differing only in their choice of nulls
should be treated as copies of each other or not.
We therefore treat multiplicities of atoms analogously to
multiple answers to single-atom queries, i.e., the multiplicity of an atom $P(t_1,  \dots, t_k)$ wrt.\ EDB $D$ and
program $\Pi$ corresponds to the multiplicity of answer $(t_1,  \dots, t_k)$ to the query
$Q = P(x_1, \dots, x_k)$ over the database $D$ and program $\Pi$. In other words,
we ask for all instantiations $(t_1, \dots, t_k)$ of $(x_1, \dots, x_k)$ such that
$P(t_1, \dots, t_k)$ is true in {\em every} model of $D \cup \Pi$.
It is well known that only ground instantiations (on $\mathbf{C}$)
can have this property (see e.g.\ \cite{DBLP:journals/tcs/FaginKMP05}).
Hence, below, we restrict ourselves to considering duplicates of ground atoms containing constants from $\mathbf{C}$ only (in this section we are not using tid-positions $0$). In the rest of this section, unless otherwise stated, ``ground atom" means instantiated on $\mathbf{C}$; and programs belong to \dpmsn.

In order to define the  multiplicity of a ground atom $P(t_1, \dots, t_k)$
wrt.\ a (multiset) EDB $D$ and a warded \dpmsn program $\Pi$, we adopt the notion of {\em proof tree} used in
\cite{DBLP:conf/pods/ArenasGP18,DBLP:journals/ai/CaliGP12}, which generalizes the notion of derivation tree
to \dpmsn.
We consider first positive \dpm programs. A proof tree (PT) for an atom $A$ (possibly with nulls) wrt.\ (a set) EDB $D$ and \dpm program $\Pi$
is a  node- and edge-labelled tree with labelling function $\lambda$,
such that: (1) The nodes are labelled by atoms over $\mathbf{C} \cup \mathbf{B}$. \
(2) The edges are labelled by rules from $\Pi$. \
(3) The root is labelled by $A$. \ (4) The leaf nodes are labelled by atoms from $D$. \
(5) The edges from a node $N$ to its child nodes $N_1, \dots, N_k$ are
all labelled with the same rule $\rho$. \ (6) The atom labelling $N$ corresponds to the result of
a chase step where $\body(\rho)$ is instantiated to $\{\lambda(N_1), \dots, \lambda(N_k)\}$
and $\head(\rho)$ becomes $\lambda(N)$ when instantiating the existential variables of $\head(\rho)$
with fresh nulls. \ (7) If $M'$ (resp.\ $N'$) is the parent node of $M$ (resp.\ $N$) such that
$\nulls(\lambda(M')) \setminus \nulls(\lambda(M))$  and $\nulls(\lambda(N')) \setminus \nulls(\lambda(N))$
share at least one null, then
the entire subtrees rooted at $M'$ and at $N'$ must be isomorphic (i.e., the same
tree structure and the same labels). \
(8) If, for two nodes $M$ and $N$, $\lambda(M)$ and $\lambda(N)$ share a null $v$,
then there exist ancestors $M'$ of $M$ and $N'$ of $N$ such that
$M'$ and $N'$ are siblings corresponding to two body atoms $A$ and $B$ of rule $\rho$ with
$x \in \var(A)$ and $x\in \var(B)$ for some variable $x$ and $\rho$ is applied with some substitution $\gamma$ which sets
$\gamma(x)=v$; moreover, $v$ occurs in the labels of the entire paths from $M'$ to $M$ and from $N'$ to $N$.
A proof tree for Example \ref{ex:proof-tree} below is shown in Figure \ref{fig:proof-tree}, left.
As with derivation trees, we assume that siblings  in the proof tree are arranged in the order of the corresponding body atoms
in the rule  $\rho$ labelling the edge to the parent node
(cf.\  Section \ref{sec:dtbs}).

Intuitively, a PT is a tree-representation of the derivation of an atom by the chase. The parent/child
relationship in a PT corresponds to the head/body of a rule application in the chase. Condition (7) above
refers to the case that a non-ground atom is used in two different rule applications within the chase sequence.
In this case, the two occurrences of this atom  must have identical proof sub-trees. A PT can be obtained from a chase-derivation by
{\em reversing the edges and unfolding the chase graph into a tree by copying some of the nodes\/} \cite{DBLP:conf/pods/ArenasGP18}.
By definition of the chase in Section \ref{sec:basics},
it can never happen that the same null is created
by two different chase steps. Note that the nulls in $\nulls(\lambda(M')) \setminus \nulls(\lambda(M))$ (and, likewise in
$\nulls(\lambda(N')) \setminus \nulls((\lambda(N))$) are precisely the newly created ones. Hence, if
$\lambda(M')$ and $\lambda(N')$ share such a null, then $\lambda(M')$ and $\lambda(N')$ are the same atom and
the subtrees rooted at these nodes are obtained by unfolding the same subgraph of the chase graph.
Condition (8) makes sure that we use the same null $v$ in a PT only if this is required by a join condition of some
rule $\rho$; otherwise nulls are renamed apart.

\begin{figure}
\begin{center}
\includegraphics[height=55.0mm]{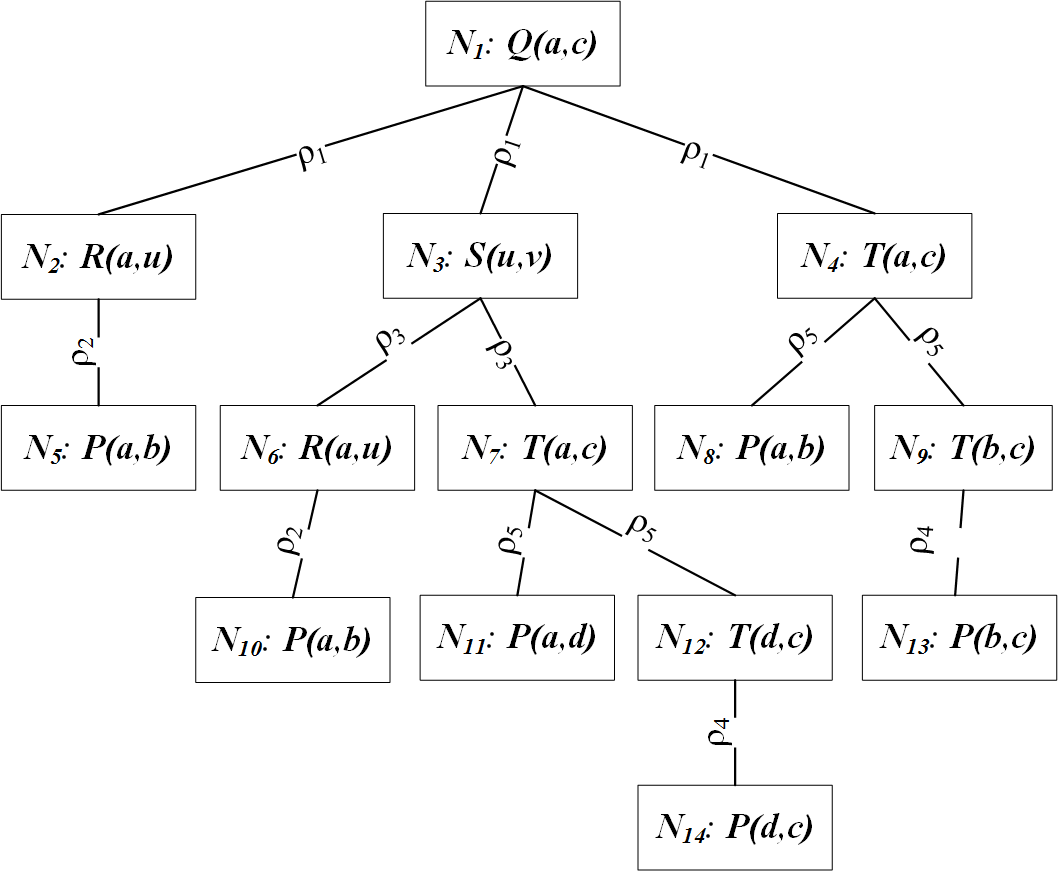}
\quad \quad
\includegraphics[height=55.0mm]{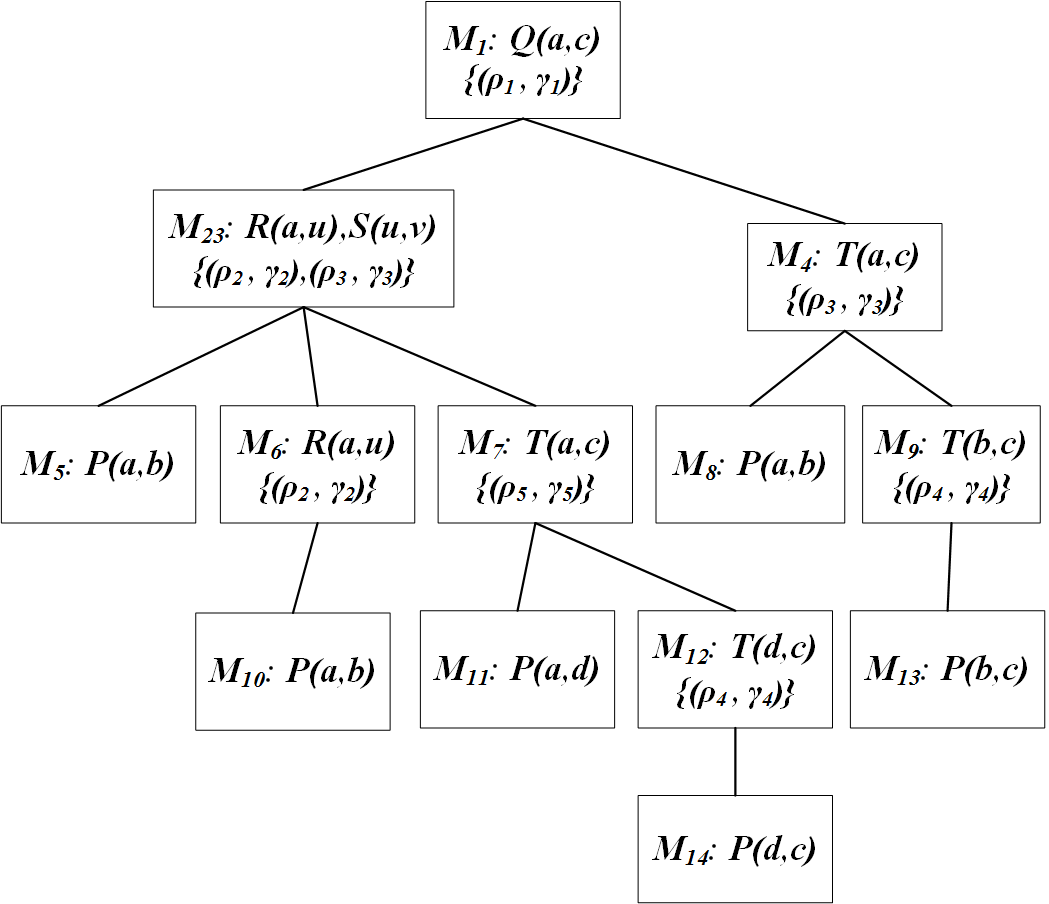}\vspace{-4mm}
\end{center}
\caption{Proof tree (left) and witness (right) for atom $Q(a,c)$ in Examples \ref{ex:proof-tree} and \ref{ex:witness}.}
\label{fig:proof-tree}
\end{figure}

\begin{example}
\label{ex:proof-tree}
Let $\Pi = \{\rho_1, \dots, \rho_5\}$ be the  \dpm program with
$\rho_1 \!: \ Q(x,w) \leftarrow R(x,y),$ $ S(y,z), T(x,w)$; \
$\rho_2 \!: \ \exists z~R(x,z) \leftarrow P(x,y)$; \
$\rho_3 \!: \ \exists z~S(x,z) \leftarrow R(w,x), T(w,y)$; \
$\rho_4 \!: \ T(x,y) \leftarrow P(x,y)$; \
$\rho_5 \!: \ T(x,y) \leftarrow P(x,z), T(z,y)$.
This program belongs to \dpmsn,
\linebreak
and -- although not necessary to build a proof tree for it -- we notice that it is also warded: $\nit{Aff}(\Pi) = \{R[2], S[2], S[1]\}$; and all other positions are  not affected. Rule $\rho_3$ is warded with ward $R(w,x)$ (where
$x$ is the only dangerous variable in this rule). All other rules are trivially warded because they have no dangerous variables.

Now let $D = \{ P(a,b), P(b,c), P(a,d), P(d,c)\}$. A possible proof tree for $Q(a,c)$ is shown in Figure \ref{fig:proof-tree} on the left.
It is important to note that nodes $N_2$ and $N_6$ introducing labelled null  $u$ are labelled with the same atom and give rise to identical subtrees. Of course, $R(a,u)$ could also result from a chase step applying $\rho_2$ to $P(a,d)$. However, this would generate
a null different from $u$ and, subsequently, the nulls in $R(a,\_)$ and $S(\_,v)$ (in $N_2$ and $N_3$)
would be different, and rule $\rho_1$ could not be applied
anymore.
In contrast, the nodes $N_4$ and $N_7$ with label $T(a,c)$ span different subtrees.
This is desired: there are two possible derivations for each occurrence of atom $T(a,c)$ in the PT.
\boxtheorem
\end{example}

Here we deviate slightly from the definition of PTs in \cite{DBLP:conf/pods/ArenasGP18}, in that we allow the same {\em ground\/} atom to have different derivations. This is needed to detect duplicates and to
make sure that PTs in fact constitute a generalization of the derivation trees in
Section \ref{sec:dtbs}. Moreover, condition (8) is needed to avoid non-isomorphic PTs by
``unforced''
repetition of nulls (i.e., identical nulls that are not required by a join condition
further up in the tree).
Analogous to the generalization  in Section \ref{sec:dtbs} of DTs for Datalog to DTs for Datalog$^{\neg s}$, it is easy to generalize proof trees to \dpmsn.
Here it is important that we only allow
stratified {\em ground\/} negation. Hence, analogously to DTs for Datalog$^{\neg s}$, we
allow negated ground atoms $\neg A$ to occur as node labels of leaf nodes in a PT, provided
that the positive atom $A$  has no PT. Moreover, it is no problem to allow also
multiset EDBs since, as in Section~\ref{sec:dtbs}, we can keep duplicates apart by
means of a coloring function
\nit{col}.

Finally, we can define proof trees $\calT$ and
$\calT'$ as equivalent, denoted  $\calT \equiv \calT'$, if one is obtained from the other by renaming of nulls.
We can thus normalize PTs by assuming that nulls in node labels are from some fixed set $\{u_1, u_2\, \dots\}$ and
that these nulls are introduced in the labels of the PT by some fixed-order traversal
(e.g., top-down, left-to-right).

For a PT $\calT$, we define $\nit{Atoms}(\calT)$
as $\nit{col}^{-1}$ of the root when $\calT$ is a single-node tree,
and the root-label of  $\calT$, otherwise.
For a set of PTs $\mf{T}$: \ $\nit{Atoms}(\mf{T}) := \biguplus \{\nit{Atoms}(\calT)~|~\calT\in \mf{T}\}$, which is a multiset that multi-contains $D$. \
If $\nit{PT}(\Pi,$ $\nit{col}(D))$ is the set of normalized, {\em non-equivalent\/} PTs,  the
{\em proof-tree bag (PTB) semantics} for $\Pi$ is the multiset
$\nit{PTBS}(\Pi,D) := \nit{Atoms}(\nit{PT}(\Pi,\nit{col}(D)))$.
For a ground atom $A$,  $\nit{mult}(A,\nit{PTBS}(\Pi,D))$ denotes the multiplicity of $A$
 in the multiset $\nit{PTBS}(\Pi,D)$.

\ignore{
\begin{equation}
\nit{PTBS}(\Pi,D) := \nit{Atoms}(\nit{PT}(\Pi,\nit{col}(P))). \label{eq:dtbs-stratified}
\end{equation}
}

\begin{example}
\label{ex:ptb-semantics}
(ex.\ \ref{ex:proof-tree} cont.)
To compute $\nit{PTBS}(\Pi,D)$  for $\Pi$ and $D$ from Example \ref{ex:proof-tree}, we have to determine all proof trees of all
ground atoms derivable from $\Pi$ and $D$. In Figure \ref{fig:proof-tree}, we have already seen one proof tree for $Q(a,c)$;
and we have observed that the sub-proof tree of $R(a,u)$ rooted at nodes $N_2$ and $N_6$ could be replaced by a child node with label $P(a,d)$ (either both subtrees or none has to be changed).
In total, the ground atom $Q(a,c)$ has 8 different proof trees wrt.\ $\Pi$ and $D$ (multiplying the 2 possible derivations of atom $R(a,u)$ with 4 derivations for the two occurrences of the atom $T(a,c)$ in nodes $N_4$ and $N_7$).
The other ground atoms derivable from $\Pi$ and $D$ are $T(a,c)$ (with 2 possible PTs as discussed in
Example \ref{ex:proof-tree})
and the atoms $T(i,j)$ for each
$P(i,j)$ in $D$. Hence, we have $\nit{PTBS}(\Pi,D) = D \cup \{T(a,b),T(b,c),T(a,d),T(d,a),T(a,c),T(a,c), Q(a,c), \dots, Q(a,c)\}$.
\boxtheorem
\end{example}

Clearly, every Datalog$^{\neg s}$ program is a special case of a warded \dpmsn program.
It is easy to verify that the PTB semantics indeed generalizes the DTB semantics:

\begin{proposition}
\label{prop:DTBS-vs-PTBS} \em
Let $\Pi$ be a Datalog$^{\neg s}$ program and $D$ an EDB (possibly with duplicates). Then
$\nit{DTBS}(\Pi,D) = \nit{PTBS}(\Pi,D)$ holds.
\end{proposition}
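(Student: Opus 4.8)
The plan is to exploit the fact that a Datalog$^{\neg s}$ program contains no existentially quantified variables in rule heads, so that---since the EDB $D$ consists of constants from $\mathbf{C}$ only---no null can ever be created by a rule application. Consequently, for every proof tree $\calT$ of $\Pi$ wrt.\ $D$, every node label $\lambda(N)$ is a ground atom over $\mathbf{C}$, i.e.\ $\nulls(\lambda(N)) = \emptyset$. First I would make this precise by a trivial induction on the depth of a PT (equivalently, on the length of the underlying chase sequence): the leaves are EDB atoms, hence null-free, and condition (6) instantiates the---empty---set of existential head variables, so the root of each subtree is null-free whenever its children are.

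With this at hand, the second step is to check that the PT definition collapses onto the DT definition of Section~\ref{sec:dtbs}. Because every node label is null-free, the premises of conditions (7) and (8) (that certain labels \emph{share a null}) can never be met, so both conditions hold vacuously and impose no restriction. What remains of the PT definition---conditions (1)--(6) together with the stratified-ground-negation extension allowing a leaf $\neg A$ exactly when $A$ has no PT---is then literally the DT definition: condition (4) matches DT condition~1 (EDB leaves), and conditions (5)--(6), once the vacuous ``fresh nulls'' clause is dropped, say precisely that the parent is $\head(\rho)\theta$ for the substitution $\theta$ unifying the children with $\body(\rho)$, which is DT condition~2 (resp.\ 2'). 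I would therefore establish, by induction on the stratum $i$ and within each stratum on tree depth, that for every ground atom $A$ whose predicate lies in stratum $\le i$ the set of DTs for $A$ coincides with the set of PTs for $A$; the induction hypothesis guarantees that the side condition ``$A$ has no DT'' used for negated leaves agrees with ``$A$ has no PT'', since the negated atoms refer only to strictly lower strata.

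The third step handles the bookkeeping in the definitions of the two semantics. Since no null occurs anywhere, the equivalence relation on PTs (renaming of nulls) is just equality and the normalization step is the identity, so the set $\nit{PT}(\Pi,\nit{col}(D))$ of normalized, non-equivalent PTs is exactly the set $\nit{DT}(\Pi,\nit{col}(D))$ of DTs. The function $\nit{Atoms}(\cdot)$ and the use of the coloring $\nit{col}$ to keep EDB duplicates apart are defined identically in both settings, whence $\nit{PTBS}(\Pi,D) = \nit{Atoms}(\nit{PT}(\Pi,\nit{col}(D))) = \nit{Atoms}(\nit{DT}(\Pi,\nit{col}(D))) = \nit{DTBS}(\Pi,D)$, as required.

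I expect no single hard obstacle here, as the result is essentially a verification that the null-handling machinery of proof trees degenerates in the absence of existentials; the only point requiring genuine care is the stratified induction for the negation case, where one must ensure that the ``no tree exists'' side conditions for negated leaves are matched across the two definitions stratum by stratum, rather than conflated with the positive derivations of the current stratum.
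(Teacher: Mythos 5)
Your proposal is correct and takes essentially the same route as the paper's own (much terser) proof, which likewise observes that for a Datalog$^{\neg s}$ program the null-related machinery of proof trees becomes vacuous, so that PTs and DTs coincide and the two bag semantics agree. Your explicit induction on strata for the negated leaves and the check that normalization/null-renaming degenerate to the identity merely spell out what the paper leaves implicit.
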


\subsection{Extension of the PBB Semantics to Datalog$^{\pm,\neg sg}$}
\label{sec:dpm-pbb}

We now extend also the PBB semantics to \dpmsn programs $\Pi$.
First, in all predicates of $\Pi$, we add position 0 to carry tid-variables.
Then every rule $\rho \in \Pi$ of the form
$\rho\!: \ \exists \bar{y} H(\bar{y}, \bar{x}) \ \leftarrow \ A_1(\bar{x}_1),A_2(\bar{x}_2),\ldots,A_k(\bar{x}_k)$, with $k>0, \ \bar{x} \subseteq \cup_i \bar{x}_i$, becomes the rule \
$\rho^+\!: \ \exists z \exists \bar{y} \ H(z; \bar{y},\bar{x}) \leftarrow A_1(z_1;\bar{x}_1),A_2(z_2;\bar{x}_2),\ldots,A_k(z_k;\bar{x}_k)$, with fresh, different variables $z, z_1, \ldots,z_k$.
Now consider a rule $\rho \in \Pi$ of the form
$\rho\!: \  \exists \bar{y} H(\bar{y} \bar{x}) \ \leftarrow \ A_1(\bar{x}_1), \ldots, A_k(\bar{x}_k),
\neg B_1(\bar{x}_{k+1}), \ldots, \neg B_\ell(\bar{x}_{k+\ell}),
$
with $\bar{x}_{k+1}, \ldots, \bar{x}_{k+\ell}  \subseteq \bigcup_{i=1}^k \bar{x}_i$;
we transform it
 into the following two rules:
 \begin{eqnarray}
\rho'\!: \  \exists z \exists \bar{y} H(z;\bar{y}, \bar{x}) &~\leftarrow~& A_1(z_1;\bar{x}_1), \ldots, A_k(z_k;\bar{x}_k),
\neg \nit{Aux}_1^\rho(\bar{x}_{k+1}), \ldots, \neg \nit{Aux}_\ell^\rho(\bar{x}_{k+\ell}),~~~~~\label{eq:d++}\\
\nit{Aux}_i^\rho(\bar{x}_{k +i}) &~\leftarrow~& B_i(z_i;\bar{x}_{k+i}), \ \ i = 1, \ldots, \ell.\nonumber
\end{eqnarray}
Analogously to
Theorem \ref{theorem:piplus-is-warded}, the resulting program also belongs to \dpmsn.
Finally, as in Section~\ref{sec:datalog},
the ground atoms in the multiset EDB $D$ are extended in $D^+$ by nulls from $\calI = \{\iota_1, \iota_2, \dots\}$
as tids in position 0
to keep apart duplicates.
For an instance $I$, we write $I_\downarrow$ to denote the set of atoms in $I$ such that
all positions except for position 0 (the tid) carry a ground term (from $\mathbf{C}$).
Analogously to Definition~\ref{def:sem1},
we then define the {\em program-based bag semantics} (PBB semantics) of
a \dpmsn program $\Pi$ and multiset EDB $D$ as
%
$\nit{PBBS}(\Pi,D) := \mc{DI}(\nit{Chase}(\Pi^+,D^+)_\downarrow) =
\{\!\{P(\bar{a})~|~ P(\bar{a})$ is ground and
$P(t;\bar{a}) \in \nit{Chase}(\Pi^+,D^+)\}\!\}$. For a ground atom $A$ (here, without nulls or tids), $\nit{mult}(A,\nit{PBBS}(\Pi,D))$ denotes the multiplicity of $A$ in the multiset $\nit{PBBS}(\Pi,D)$.

\begin{example} \ \label{ex:d+}
(ex.\ \ref{ex:proof-tree} and \ref{ex:ptb-semantics}
cont.)
The transformations of \dpmsn program $\Pi$ and
multiset
EDB $D$ from
Example \ref{ex:proof-tree} are
$D^+ = \{ P(\iota_1, a,b), P(\iota_2, b,c), P(\iota_3, a,d), P(\iota_4, d,c)\}$
and
$\Pi^+ = \{\rho^+_1, \rho^+_2, \rho^+_3, \rho^+_4, \rho^+_5\}$
with:
$\rho_1^+ \!: \ (\exists u) Q(u,x,w) \leftarrow R(v_1,r,y), S(v_2,y,z), T(v_3,x,w)$; \
$\rho_2^+ \!: \ (\exists u,z) R(u,x,z) \leftarrow P(v,x,y)$; \
$\rho_3^+ \!: \ (\exists u,z) S(u,x,z)$ $\leftarrow R(v_1,w,x), T(v_2,w,y)$; \
$\rho_4^+ \!: \ (\exists u) T(u,x,y) \leftarrow P(v,x,y)$; \
$\rho_5^+ \!: \ (\exists u) T(u,x,y) \leftarrow P(v_1,x,z),$ $ T(v_2, z,y)$.

Applying program $\Pi^+$ to instance $D^+$ gives, for example, the following chase sequence for deriving the atoms
$T(\iota_5, d,c), T(\iota_7, a,c),T(\iota_8, b,c),Q(\iota_{11}, a,c)$,
which correspond to the ground atoms in the proof tree in
Figure \ref{fig:proof-tree}, left:
$D^+ \leadsto_{\rho_4^+} T(\iota_5, d,c)
\leadsto_{\rho_2^+} R(\iota_6, a,u)
\leadsto_{\rho_5^+} T(\iota_7, a,c)
\leadsto_{\rho_4^+} T(\iota_8, b,c)
\leadsto_{\rho_3^+} S(\iota_9, u,v)
\leadsto_{\rho_5^+} T(\iota_{10}, a,c)
\leadsto_{\rho_1^+} Q(\iota_{11}, a,c).
$
Implicitly, we thus also turn the PTs rooted at $N_4$, $N_7$, and  $N_9$ into chase sequences.
In total, we get in ${PBBS}(\Pi,D)$ precisely the atoms and multiplicities as in $PTBS(\Pi,D)$
for Example~\ref{ex:ptb-semantics}.
\boxtheorem
\end{example}

In principle, the above transformation $\Pi^+$ and the PBB semantics are applicable to any program $\Pi$ in \dpmsn.
 However, in the first place, we are interested in {\em warded} programs $\Pi$. It is easy to verify that
wardedness of $\Pi$ carries over to $\Pi^+$:

\begin{proposition}
\label{prop:preserve-wardedness}
\em If $\Pi$ is a warded \dpmsn  program, then so is $\Pi^+$.
\end{proposition}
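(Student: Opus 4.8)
The plan is to show that the transformation $\Pi \mapsto \Pi^+$ leaves the wardedness-relevant data—affected positions and the harmless/harmful/dangerous status of variables—essentially untouched on the original positions, so that a ward of each rule $\rho \in \Pi$ lifts to a ward of its image in $\Pi^+$. Since affected positions and the variable classification are computed on the positive part, I would first treat the purely positive transformation $\rho \mapsto \rho^+$ and handle the negated atoms and the auxiliary rules of (\ref{eq:d++}) afterwards. The technical core is the following claim: for every predicate $P$ of $\Pi$ and every position $P[i]$ with $i \geq 1$, we have $P[i] \in \nit{Aff}(\Pi^+)$ iff $P[i] \in \nit{Aff}(\Pi)$; moreover position $0$ of each predicate occurring in a head is affected (it carries an existential tid), and every body tid-variable $z_j$ occurs only in position $0$ and in no head. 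This is proved by induction along the fixpoint definition of affectedness: the only genuinely new affected positions are the tid positions $0$, which are never occupied by an original variable, so the case-(b) propagation among positions $i \geq 1$ runs exactly as in $\Pi$; and no affectedness can leak through the new $\nit{Aux}$ predicates, because these occur only inside negated atoms and are therefore deleted in $\Pi^{+,\rm pos}$.

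Granting the claim, the variable classification transfers cleanly. The head-only variables $z, \bar y$ never occur in a body and each body tid $z_j$ occurs in no head, so none of the new variables can be dangerous. For an original frontier variable $x$, all its occurrences sit at positions $\geq 1$, whose affectedness is preserved, so $x$ is harmless in $\rho^+$ whenever it is harmless in $\rho$; together with the previous remark this yields $\nit{Dang}(\rho^+) \subseteq \nit{Dang}(\rho)$. If $\rho$ is warded by an atom $A$, I would take its image $A^+ = A(z_A; \dots)$ as the ward of $\rho^+$. Condition (1) follows from $\nit{Dang}(\rho^+) \subseteq \nit{Dang}(\rho) \subseteq \var(A) \subseteq \var(A^+)$. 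For condition (2), the extra variable $z_A$ is a fresh tid unique to $A^+$, hence shared with no other body atom, so $\var(A^+) \cap \var(\body(\rho^+) \setminus \{A^+\})$ coincides with $\var(A) \cap \var(\body(\rho) \setminus \{A\})$, which is contained in $\nit{Harmless}(\rho)$ and thus, by the claim, in $\nit{Harmless}(\rho^+)$. The case $\nit{Dang}(\rho) = \emptyset$ makes $\rho^+$ trivially warded.

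For rules arising from negation in (\ref{eq:d++}), the main rule $\rho'$ has the same positive body as a positive $\rho^+$, so the ward argument applies verbatim—the negated atoms $\neg \nit{Aux}_i^\rho$ are invisible to the analysis. Each auxiliary rule $\nit{Aux}_i^\rho(\bar x_{k+i}) \leftarrow B_i(z_i; \bar x_{k+i})$ has a single body atom and is therefore trivially warded, taking that atom as the ward: condition (1) is immediate and condition (2) is vacuous. Membership of $\Pi^+$ in \dpmsn then follows as in Theorem \ref{theorem:piplus-is-warded}: the variables $\bar x_{k+i}$ in the negated atoms are harmless (this is precisely the ground-negation hypothesis on $\Pi$, preserved by the claim), so only ground negation occurs, and the stratification of $\Pi$ carries over by placing each $\nit{Aux}_i^\rho$ in the stratum of $B_i$.

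I expect the main obstacle to be the affectedness claim: one must argue carefully that inserting the tid position $0$ and the auxiliary predicates introduces no new affected positions among the original positions $i \geq 1$, and that the fresh tid-variables—although they may well be harmful—can neither become dangerous nor spoil the ward's condition (2), precisely because each is unique to its own body atom. Once this bookkeeping is settled, everything else is routine.
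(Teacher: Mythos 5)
Your proof is correct and follows essentially the same route as the paper's: the only new affected positions are the tid positions $0$ of head predicates, the tid-variables occur once each in bodies and never in heads, so dangerous variables (and hence wards) are unchanged, and the $\nit{Aux}$ rules plus preserved harmlessness give ground negation and stratification. The paper's argument is just a terser version of your bookkeeping (stating the equalities for $\nit{Aff}$, $\nit{Harmless}$, and $\nit{Dang}$ directly rather than via your fixpoint induction and explicit verification of ward conditions (1) and (2)).
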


\begin{proof}[Proof Idea] The key observation is that the only additional affected positions in
$\Pi^+$ are the tids at position 0. However, the variables at these positions occur only once in each rule and
are never propagated from the body to the head. Hence, they do not destroy wardedness.
\end{proof}

We conclude this section with the analogous result of Theorem
\ref{thm:isom}:

\begin{theorem}
\label{thm:isom-warded}
\em
For ground atoms $A$, $\nit{mult}(A,\nit{PTBS}(\Pi,D)) = \nit{mult}(A,\nit{PBBS}(\Pi,D))$.
Hence, for a warded \dpmsn program $\Pi$ and multiset EDB $D$, \ $\nit{PTBS}(\Pi,D) = \nit{PBBS}(\Pi,D)$.
\end{theorem}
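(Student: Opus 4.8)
The plan is to prove the multiplicity equality for ground atoms by exhibiting, for each fixed ground atom $A = P(\bar a)$, a bijection between the normalized, non-equivalent proof trees in $\nit{PT}(\Pi,\nit{col}(D))$ whose root is labelled $A$ and the tids $t$ for which $P(t;\bar a) \in \nit{Chase}(\Pi^+,D^+)$. Since $\nit{mult}(A,\nit{PTBS}(\Pi,D))$ counts the former while $\nit{mult}(A,\nit{PBBS}(\Pi,D))$ counts the latter (one distinct tid from $\mc I$ per ground copy of $A$ surviving the $_\downarrow$-restriction), such a bijection yields the first statement. The ``hence'' then follows at once: by the $_\downarrow$-restriction $\nit{PBBS}(\Pi,D)$ is a multiset of ground atoms, and under the convention of this section we track duplicates of ground atoms only, so agreement of all ground multiplicities gives equality of the two multisets.

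The bijection realizes the informal recipe recalled in the text, namely that a proof tree is obtained from a chase derivation by reversing edges and unfolding the chase graph into a tree, copying nodes as needed. Concretely I would define a map $\Phi$ from the chase to PTs as follows. Given $P(t;\bar a) \in \nit{Chase}(\Pi^+,D^+)$, trace back its minimal derivation: the step $I_i \langle \rho^+, h\rangle I_{i+1}$ that created tid $t$ fixes a rule $\rho \in \Pi$ and the tids of the body atoms matched by $h$; recurse on those tids down to atoms of $D^+$. Unfolding the resulting finite DAG into a tree, deleting the position-$0$ tid components, and normalizing the remaining $\mc N$-nulls by the fixed traversal order produces a candidate PT $\Phi(t)$ with root $A$. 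The core verification is that $\Phi(t)$ meets PT-conditions (1)--(8): conditions (1)--(6) are immediate from the oblivious chase and the shape of $\rho^+$; condition (7) holds because each $\mc N$-null is created by exactly one chase step (hence one tid), so any two atoms sharing such a null trace back to the same sub-derivation and unfold to isomorphic subtrees; condition (8) holds because an $\mc N$-null occurs in two matched body atoms only when forced by a join variable of the applied rule.

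Conversely I would define $\Psi$ from PTs to the chase by reading a PT bottom-up as a minimal chase sequence: leaves are EDB atoms carrying their $D^+$-tids, and each internal node, read as an application of the corresponding $\rho^+$ to the tids of its children, yields a fresh tid. Conditions (7) and (8) guarantee that this reading is consistent with the oblivious chase --- shared $\mc N$-nulls are precisely those produced by a shared chase step, so the homomorphism matching each rule body is well defined and equivalent (null-renamed) PTs are read off as the same chase derivation, hence the same tid. That $\Phi$ and $\Psi$ are mutually inverse is then shown, as in Theorem \ref{thm:isom}, by simultaneous induction on PT depth and chase length; the negated literals of \dpmsn are handled exactly as in the positive case (cf.\ Proposition \ref{prop:DTBS-vs-PTBS}): a ground $\neg B_i$ succeeds in a PT iff the auxiliary predicate $\nit{Aux}_i^\rho$ is absent from $\nit{Chase}(\Pi^+,D^+)$, which is sound because negation is restricted to stratified ground negation and $B_i$ carries no null.

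The step I expect to be the main obstacle is matching the sharing discipline of nulls on the nose: establishing that PT-conditions (7) and (8) capture exactly how the oblivious chase of $\Pi^+$ creates each $\mc N$-null once and propagates it only through join variables. This is what makes $\Phi$ and $\Psi$ well defined up to $\equiv$ and mutually inverse, and in particular what secures injectivity (two non-equivalent PTs give two distinct tids) and surjectivity (every tid of a ground atom arises from some PT). The $\mc I$-tids play no role in the isomorphism type of the unfolded tree --- they only keep the ground copies distinguishable --- so the delicate work is entirely about the $\mc N$-nulls and the forced/unforced repetition governed by condition (8).
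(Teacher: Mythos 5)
Your proposal is correct and takes essentially the same route as the paper: both establish a one-to-one correspondence between normalized, non-equivalent proof trees rooted at the ground atom and minimal chase derivations of that atom with $\Pi^+$ (which, under the oblivious chase, are in bijection with the tids you count), proved by induction on PT depth in one direction and on chase length in the other, with negation discharged via the ground $\nit{Aux}$ predicates. If anything, your write-up is more explicit than the paper's own sketch about verifying conditions (7) and (8) on null sharing, which the paper compresses into the remark that the canonical-order assumptions are crucial.
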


\section{Decidability and Complexity of Multiplicity}
\label{sec:decidability}

For Datalog$^{\neg s}$ programs, the following problems related to duplicates
have been investigated:
\begin{itemize}
\item FFE: \ Given a program $\Pi$, a database $D$, and a predicate $P$, decide if  every derivable ground  $P$-atom has
 a finite number of DTs.
\item FEE: \ Given a program $\Pi$ and a predicate $P$, decide if every derivable ground $P$-atom has  a finite number of DTs {\em for every} database $D$.
\end{itemize}
\ignore{\begin{eqnarray*}
\nit{FFE} &:=& \{\langle \Pi, P, D\rangle \ | \ \mbox{ every derivable ground  $P$-atom has}
\mbox{ a finite number of DTs}\}. \\
\nit{FEE} &:=& \{\langle \Pi, P\rangle \ | \ \mbox{  every derivable ground $P$-atom has  a finite number} \mbox{ of DTs for every} \ D\}.
\end{eqnarray*} }

\noindent
It has been shown in \cite{DBLP:conf/adc/MumickS93} that FFE is decidable (even
in PTIME data complexity), whereas  FEE is undecidable.
We  extend this study by considering
warded \dpmsn instead of Datalog$^{\neg s}$, and by computing the concrete multiplicities in case of
finiteness. We thus study the following problems:
\begin{itemize}
\item FINITENESS: For fixed warded \dpmsn program $\Pi$: \ Given a multiset database $D$ and a ground atom $A$, does
$A$ have finite multiplicity, i.e., is $\nit{mult}(A,\nit{PBBS}(\Pi,D))$ finite?
\item MULTIPLICITY: For fixed  warded \dpmsn program $\Pi$: \ Given a multiset database $D$ and
a ground atom $A$,
compute the multiplicity of $A$, i.e. $\nit{mult}(A,\nit{PBBS}(\Pi,D))$.
\end{itemize}

\noindent
We will show that both problems defined above can be solved in polynomial time (data complexity).
In case of the
FINITENESS problem, we thus generalize the result of \cite{DBLP:conf/adc/MumickS93}
for the FFE problem from Datalog$^{\neg s}$ to \dpmsn.
In case of the MULTIPLICITY problem, no analogous result for Datalog or Datalog$^{\neg s}$ has existed before.
The following example illustrates that, even if $\Pi$ is a Datalog program with a single rule,
we may have exponential multiplicities. Hence, simply computing all DTs or (in case of \dpm) all PTs is not a viable option if we aim
at polynomial time complexity.

\begin{example}
\label{ex:exponential-multiplicity}
Let $D = \{E(a_0,a_1), E_(a_1,a_2),  \dots, E(a_{n-1},a_n)\} \cup \{ P(a_0,a_1), C(b_0), C(b_1)\}$ for $n \geq 1$
and let $\Pi = \{ \rho\}$ with
$\rho \!:  P(x,y)  \leftarrow P(x,z), E(z,y), C(w)$. Intuitively, $E$ can be considered as an
edge relation and $P$ is the corresponding path relation for paths starting at $a_0$. The $C$-atom in the rule
body (together with the two $C$-atoms in $D$) has the effect that there are always 2 possible derivations to
extend a path. It can be verified by induction over $i$, that atom $P(a_0,a_i)$ has $2^{i-1}$ possible
derivations from $D$ via $\Pi$. In particular,  $P(a_0,a_n)$ has multiplicity $2^{n-1}$.

Note that, if we add atom $E(a_1,a_1)$ to $D$  (i.e., a self-loop, so to speak), then every atom $P(a_0,a_i)$
with $i \geq 1$ has infinite multiplicity. Intuitively, the infinitely many different derivation trees correspond to the arbitrary number of
cycles through the self-loop $E(a_1,a_1)$ for a path from $a_0$ to $a_i$.
\boxtheorem
\end{example}

Our PTIME-membership results will be obtained by appropriately adapting the tractability proof of CQA
for warded \dpm in \cite{DBLP:conf/pods/ArenasGP18}, which is based on the
algorithm \pt  for deciding
if $D \cup \Pi \models P(\bar{c})$ holds for database $D$, warded \dpm program $\Pi$, and ground atom $P(\bar{c})$.
That algorithm works in ALOGSPACE (data complexity), i.e., alternating logspace,
which coincides with PTIME
\cite{DBLP:journals/jacm/ChandraKS81}. It
assumes $\Pi$ to be normalized in such a way that each rule in $\Pi$ is either {\em head-grounded\/} (i.e., each term in the head is a constant
or a harmless variable) or {\em semi-body-grounded\/} (i.e., there exists at most one body atom
with harmful variables).
Algorithm $\pt$ starts with ground atom $P(\bar{c})$ and applies resolution steps until the database $D$ is reached. It thus proceeds as follows:
\begin{itemize}
\item If $P(\bar{c}) \in D$, then accept. Otherwise, guess a head-grounded rule $\rho \in \Pi$
whose head can be matched to $P(\bar{c})$ (denoted as $\rho \rhd P(\bar{c})$). Guess an instantiation $\gamma$
on the variables in the body of $\rho$ so that $\gamma(\head(\rho)) = P(\bar{c})$.
Let $\calS = \gamma(\body(\rho))$.
\item Partition $\calS$ into $\{\calS_1, \dots, \calS_n\}$, such that each null occurring in $\calS$
occurs in exactly one $\calS_i$, and each set $\calS_i$ is chosen subset-minimal with this property.
The purpose of these
sets $\calS_i$ of atoms is to keep together, in the parallel universal computations of
\pt, the nulls in $\calS$ until the atom in which they are created is known.
\item Universally select each set $\calS' \in \{\calS_1, \dots, \calS_n\}$ and ``prove'' it:
If $\calS'$ consists of a single ground atom $P'(\bar{c}')$, then call \pt recursively for $D$, $\Pi$, $P'(\bar{c}')$.
Otherwise, do the following:

(1) For each atom $A \in \calS'$, guess rule $\rho_A$ with $\rho_{A} \rhd A$ and guess
variable instantiation $\gamma_{A}$ on the variables in the body of
$\rho_{A}$ such that $A = \gamma_{A}(\head(\rho_A))$.

(2) The set $\bigcup_{A \in \calS'} \gamma_{A}(\body(\rho_A))$ is
partitioned as above and each component of this partition is proved in a parallel universal
computation.
\end{itemize}

The key to the ALOGSPACE complexity of algorithm $\pt$ is that the data structure
propagated by this algorithm fits into logarithmic space. This data structure is given by
a pair $(\calS,\calRS)$, where
$\calS$ is a set of atoms (such that $|\calS|$ is bounded by the maximum number of body atoms of
the rules in $\Pi$) and  $\calRS$ is a set of pairs $(z,x)$, where $z$ is a null occurring in $\calS$ and
$x$ is either an atom $A$ (meaning that null $z$ was created when the application of some rule $\rho$
generated the atom $A$ containing this null $z$) or
the symbol $\varepsilon$ (meaning that we have not yet found such an atom $A$).
A {\em witness\/}
for a successful computation of \pt is then given by a tree with existential and universal nodes, with an
existential node $N$ consisting of a pair $(\calS,\calRS)$;
a universal node $N$ indicates the guessed rule $\rho_{A}$ together with the
instantiation $\gamma_{A}$ for each atom $A \in \calS$ at the (existential) parent node
of $N$. The child nodes of each universal node $N$ are obtained by
partitioning $\bigcup_{A \in \calS} \gamma_{A}(\body(A))$ as
described above and computing the corresponding set $\calRS$.
At the root, we thus have an existential node labelled $(\calS,\calRS) = (\{P(\bar{c})\}, \emptyset)$.
Each leaf node is a universal node labelled with $(B, \emptyset)$ for some
ground atom $B \in D$. Hence, such a node corresponds to an accept-state.

\begin{example}
\label{ex:witness}
Recall $\Pi$ and $D$ from Example~\ref{ex:proof-tree}. A proof tree of
ground atom $Q(a,c)$ is shown in Figure \ref{fig:proof-tree} on the left.
On the right, we display the witness of the corresponding successful computation of \pt.
Note that witnesses have a strict alternation of existential and universal nodes.
In the witness in Figure \ref{fig:proof-tree}, we have merged each existential node and its
unique universal child node to make the correspondence between a PT and a witness yet more visible.
In particular,
on each depth level of the trees, we have exactly the same set of atoms (with the only difference, that in the witness
these atoms are grouped together to sets $\calS$ by null-occurrences). In this simple example, only node $M_{23}$
contains such a group of atoms, namely the atoms from the nodes $N_2$ and $N_3$ in the PT.

In order not to overburden the figure, we have left out the sets $\calRS$ carrying the information on the
introduction of nulls. In this simple example, the only node with a non-empty set $\calRS$ is $M_6$, with
$\calRS = \{(u,R(a,u)\}$, i.e. we have to pass on the information that the null $u$ in node $M_{23}$ was
introduced by applying some rule (namely $\rho_2$) which generated the atom $R(a,u)$. We thus ensure
(when proceeding from node $M_6$ to node $M_{10}$) that null $u$ is introduced in the same way as before.

The information from the universal node below each existential node is displayed in the second line of each node: here we have pairs consisting of the guessed rule $\rho$ plus the guessed instantiation $\gamma$ for each atom in the corresponding set $\calS$ (as mentioned above, $\calS$  is a singleton in all nodes except for $M_{23}$).
For instance, $\gamma_1$ in node $M_{1}$ denotes the substitution
$\gamma_1 = \{x \leftarrow a, y \leftarrow u, z \leftarrow v, w \leftarrow c\}$.
Only in node $M_{23}$, we have 2 pairs $(\rho_2,\gamma_2), (\rho_3,\gamma_3)$ with
$\gamma_2 = \{x \leftarrow a, y \leftarrow b\}$ and
$\gamma_3 = \{w \leftarrow a, x \leftarrow u, y \leftarrow c\}$. Note
that the subscripts $i$ of the $\gamma_i$'s in Figure \ref{fig:proof-tree}
are to be understood local to the node. For instance,
the two different applications of rule $\rho_4$ in nodes $M_9$ and $M_{12}$ are with
the instantiations $\gamma_4 = \{x \leftarrow b, y \leftarrow c\}$ (in $M_9$) and
$\gamma_4 = \{x \leftarrow d, y \leftarrow c\}$ (in $M_{12}$), respectively.
\boxtheorem
\end{example}

\ignore{*********************************
The above example illustrates the close relationship between PTs and witnesses of the \pt algorithm.
In particular, it is straightforward to construct a witness $\cal W$ of a successful computation of \pt from
a given PT $\calT$. We refer to this construction as $\TtoW$: as in Example \ref{ex:witness}, from now on, we
assume that in a witness tree, each existential node is merged with its unique universal child node.
Given a PT $\calT$, we obtain $\calW$ by a top-down traversal of $\calT$ and merging siblings if they share a null.
Moreover, if the label at a child node in $\calT$ was created by applying rule $\rho$ with substitution $\gamma$, then
we add $(\rho,\gamma)$ to the node label in $\calW$. Finally, if such a rule application generates a new null $z$, then we add $(z,\gamma(\head(\rho))$ to $\calRS$ of every child node which still contains null $z$.

For the opposite direction of constructing a PT from the witness of a successful computation,
some care is required. Recall from the discussion in Example~\ref{ex:proof-tree} that if two nodes $M,N$ in
a PT are labelled with the same non-ground atom, then the subtrees rooted at $M$ and $N$ must be isomorphic.
Now revisit the $\calT$ and $\calW$ in Figure~\ref{fig:proof-tree}. The nodes $N_2$ and $N_6$ in $\calT$ are labelled with
the same non-ground atom, namely $R(a,u)$. Consequently, the subtrees rooted at these two nodes are isomorphic.
The same applies to the subtrees consisting of node $M_{23}$ plus child node $M_5$ and
node $M_6$ plus child node $M_{10}$. However, replacing $(\rho_2,\gamma_2)$ in $M'_6$ by
$(\rho_2,\gamma'_2)$ with $\gamma'_2 = \{x \leftarrow a, y \leftarrow d\}$ and
replacing label $P(a,b)$ in $M_{10}$ by $P(a,d)$ would also give a witness of a successful computation of \pt:
in this case, we would generate atom $R(a,u)$ in node $M_6$ with rule $\rho_2$ applying the variable
instantiation
$\gamma'_2$, which is not allowed in a PT. This leads us to the following {\em normal form\/}
of witness trees:
\begin{itemize}
\item more informative $\calRS$: as we have seen, it is not enough to record for every null $z$ the atom
$\gamma(\head(\rho))$ through which is was generated. Instead, we store in $\calRS$ the information $(z,(\rho,\gamma))$;
\item renaming nulls apart: whenever nulls are introduced by a resolution step of \pt,
then, unless $(\rho,\gamma)$ is enforced by an element $(z,(\rho,\gamma))$ in $\calRS$, all nulls in
$\nulls(\gamma(\body(\rho))) \setminus \nulls(\gamma(\head(\rho)))$ must be fresh (that is, they must not occur elsewhere
in $\calW$);
\item uniform creation of nulls: recall that a null $z$ can only be created by a {\em semi-body-grounded\/} rule $\rho$ (i.e., there exists at most one body atom with harmful variables). Hence, the other body atom must be instantiated to a ground
atom. We thus take the subtree rooted at the node labelled with this ground atom when $z$ is first created (i.e., $\calRS$ contains $(z,\varepsilon)$) and replace with it the subtree rooted at the same ground atom when $z$ is later created
(i.e., $\calRS$ contains $(z,x)$ with $x \neq \varepsilon$).
\end{itemize}
If a witness $\calW$ is in normal form, then we can transform it into
a PT by the inverse operation of $\TtoW$, i.e.: by a bottom-up traversal of $\calW$ we turn every node labelled with $k$ atoms into $k$ siblings each labelled with one of these atoms. The edge labels $\rho$ in $\calT$ are obtained from
the corresponding $(\rho, \gamma)$ labels in the node in $\calW$. We refer to this transformation as $\WtoT$. In
summary, we have:
*********************************}

The above example illustrates the close relationship between PTs $\calT$ and witnesses $\calW$
of the \pt algorithm.
However, our goal is a one-to-one correspondence, which requires further measures:
for witness trees, we thus assume from now on (as in Example \ref{ex:witness}) that
each existential node is merged with its unique universal child node and that nulls are renamed apart:
that is, whenever nulls are introduced by a resolution step of \pt,
then all nulls in
$\nulls(\gamma(\body(\rho))) \setminus \nulls(\gamma(\head(\rho)))$ must be fresh (that is, they must not occur elsewhere
in $\calW$). Moreover, we eliminate redundant information from PTs and witnesses by pruning repeated subtrees:
let $N$ be a node in a PT $\calT$ such that some null is introduced via atom $A$ in $\lambda(N)$, and let $N$ be closest to the root with this property, then prune all other subtrees below all nodes $M \neq N$ with $\lambda(M) = A$.
Likewise, if a node in a witness $\calW$
contains an atom $A$ and the pair $(z,A)$ in $\calRS$, then we omit the resolution step for $A$.
We refer to the reduced PT of $\calT$ as $\calT^*$ and to the reduced witness of $\calW$ as $\calW^*$.
For instance, in the PT in Figure~\ref{fig:proof-tree}, we delete the node $N_{10}$ because the information on how to derive
atom $R(a,u)$ is already contained in the subtree rooted at node $N_2$.
Likewise, in the witness in Figure~\ref{fig:proof-tree},
we omit the resolution step for atom $R(a,u)$ in node $M_6$, because, in this node, we have $\calRS = \{(u,R(a,u)\}$. Hence,
it is known from some resolution step ``above'' that $u$ must be introduced via this atom and its derivation is
checked elsewhere. We thus delete $M_{10}$.

It is straightforward to construct a reduced witness $\calW^*$ from
a given reduced PT $\calT^*$, and vice versa. We refer to these constructions as $\TtoW$ and $\WtoT$, respectively:
Given $\calT^*$, we obtain $\calW^*$ by a top-down traversal of $\calT^*$ and merging siblings if they share a null.
Moreover, if the label at a child node in $\calT^*$ was created by applying rule $\rho$ with substitution $\gamma$, then
we add $(\rho,\gamma)$ to the node label in $\calW^*$. Finally, if such a rule application generates a new null $z$, then we add $(z,\gamma(\head(\rho))$ to $\calRS$ of every child node which still contains null $z$.
Conversely, we obtain $\calT^*$ from a reduced witness $\calW^*$
by a bottom-up traversal of $\calW^*$, where we turn every node labelled with $k$ atoms into $k$ siblings, each labelled with one of these atoms. The edge labels $\rho$ in $\calT^*$ are obtained from
the corresponding $(\rho, \gamma)$ labels in the node in $\calW^*$.  In
summary, we have:

\begin{lemma} \label{lemma:corresp} \em
There is a one-to-one correspondence \ignore{relationship} between reduced proof trees $\calT^*$ for a ground atom $P(\bar{c})$
and reduced witnesses $\calW^*$ for its successful \pt computations. More precisely, given $\calT^*$,
we get a reduced witness as $\TtoW(\calT^*)$;
given $\calW^*$, we get a reduced PT as $\WtoT(\calW^*)$
with  $\WtoT(\TtoW(\calT^*)) = \calT^*$ and
$\TtoW(\WtoT(\calW^*)) = \calW^*$. \boxtheorem
\end{lemma}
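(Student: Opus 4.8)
The plan is to treat $\TtoW$ and $\WtoT$ as the two halves of a bijection and to verify, by structural induction on tree depth, three things: that $\TtoW$ sends every reduced PT to a legal reduced witness, that $\WtoT$ sends every reduced witness to a legal reduced PT, and that the two compositions are the identity.

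First I would check that $\TtoW(\calT^*)$ is a genuine reduced witness. Traversing $\calT^*$ top-down, at each node I form the set $\calS$ from its children's labels and group into one block exactly those sibling labels that share a null. The decisive point is that condition~(8) of the PT definition guarantees that two siblings share a null only when that null is forced by a join in the rule $\rho$ labelling the edges to them; hence grouping by shared nulls reproduces precisely the subset-minimal partition $\{\calS_1,\dots,\calS_n\}$ that \pt computes from $\gamma(\body(\rho))$. I then attach the pair $(\rho,\gamma)$ recording the rule and substitution that produced each child, and insert $(z,\gamma(\head(\rho)))$ into $\calRS$ for each newly created null $z$, exactly as \pt records null introductions. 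Because the leaves of $\calT^*$ are database atoms, the leaves of the image are accept-states $(B,\emptyset)$, and the bound on $|\calS|$ by the maximum number of body atoms of a rule in $\Pi$ is inherited from $\Pi$. Finally, the reduction of $\calT^*$ --- pruning the repeated derivation of a non-ground atom $A$ below every node other than the one closest to the root that introduces the null of $A$ --- corresponds to omitting the resolution step for $A$ in the witness whenever $(z,A) \in \calRS$, so $\TtoW(\calT^*)$ is indeed reduced.

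Second I would check the converse for $\WtoT$. Splitting each witness node labelled with $k$ atoms into $k$ siblings and reading the edge label $\rho$ off the stored pair $(\rho,\gamma)$ immediately yields conditions~(1)--(6) of the PT definition, since \pt guesses matching head-grounded (or semi-body-grounded) rules together with consistent instantiations. Conditions~(7) and~(8) --- isomorphic subtrees when parents share a freshly created null, and shared nulls only across join-ancestors --- follow from the witness conventions: nulls introduced by a resolution step are fresh unless forced by an entry in $\calRS$, and $\calRS$ records exactly the atom through which a reused null was created. Thus a null recurs across siblings only when the same rule application created it, which is precisely what conditions~(7)--(8) permit.

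Finally, the round-trip identities $\WtoT(\TtoW(\calT^*)) = \calT^*$ and $\TtoW(\WtoT(\calW^*)) = \calW^*$ follow by induction on depth: the merging of null-sharing siblings done by $\TtoW$ and the splitting of multi-atom nodes done by $\WtoT$ are mutually inverse, and the $\rho$ and $\gamma$ labels are preserved in both directions, so the reconstructed tree agrees with the original node for node. The main obstacle is the null-bookkeeping underlying the two well-definedness arguments: one must show that PT conditions~(7)--(8), the renaming-apart convention on witnesses, and the $\calRS$ mechanism all encode the same facts about when a null is shared and where it is created, and in particular that the PT pruning rule and the omission of $\calRS$-recorded resolution steps describe the same reduction. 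Establishing this alignment is what makes $\TtoW$ and $\WtoT$ true inverses, and it is the step requiring the most careful verification.
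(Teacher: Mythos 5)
Your proposal is correct and follows essentially the same route as the paper: the paper defines $\TtoW$ (top-down traversal merging null-sharing siblings, attaching $(\rho,\gamma)$ labels and $\calRS$ entries) and $\WtoT$ (bottom-up splitting of multi-atom nodes) and then asserts the correspondence as straightforward, which is exactly the pair of constructions you verify. Your added detail --- that PT condition~(8) makes grouping-by-shared-nulls coincide with the subset-minimal partition of \pt, that the renaming-apart convention and $\calRS$ mirror conditions~(7)--(8), and that PT pruning matches omission of $\calRS$-recorded resolution steps --- is precisely the alignment the paper relies on implicitly, so your write-up simply makes the paper's ``straightforward'' claim explicit.
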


So far, we have only considered warded \dpm, without negation.
However, this restriction is inessential. Indeed,
by the definition of warded \dpmsn,
negated atoms can never contain a null in the chase. Hence, one can easily get
rid of negation (in polynomial time data complexity) by computing for  one stratum after the other the answer $Q(\Pi,D)$
to query $Q$ with $Q \equiv P(x_1, \dots, x_n)$, i.e., all ground atoms $P(a_1, \dots, a_n)$ with
$\Pi \cup D \models P(a_1, \dots, a_n)$.
We can then replace all occurrences of $\neg P(t_1, \dots, t_n)$ in any rule of
$\Pi$ by the positive atom $P'(t_1, \dots, t_n)$ (for a new predicate symbol $P'$)
and add to the instance all ground atoms $P'(a_1, \dots, a_n)$
with $(a_1, \dots, a_n) \not\in Q(D, \Pi)$.
Hence, all our results proved in this section for warded \dpm also hold for warded \dpmsn.

Clearly, there is a one-to-one correspondence between proof trees PT $\calT$ and their reduced forms $\calT^*$. Hence,
together with Lemma \ref{lemma:corresp}, we can compute the multiplicities
by computing (reduced) witness trees.
This allows us to obtain the results below:

\begin{theorem}
\label{theo:infinite-multiplicity} \em
Let $\Pi$ be a warded \dpmsn program and
$D$ a database (possibly with duplicates).
Then there exists a bound $K$ which is polynomial in $D$, s.t.\ for every
ground atom~$A$:
\begin{enumerate}
\item $A$ has finite multiplicity if and only if all reduced witness trees of $A$
have depth $\leq K$.
\item If $A$ has infinite multiplicity, then there exists at least one reduced witness tree of $A$ whose depth is in $[K+1, 2K]$.
\end{enumerate}
\end{theorem}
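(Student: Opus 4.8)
The plan is to exploit the one-to-one correspondence between reduced witness trees and reduced proof trees (Lemma~\ref{lemma:corresp}) together with the structural fact underlying the ALOGSPACE analysis of \pt: the node labels $(\calS,\calRS)$ propagated by the algorithm fit into logarithmic space. First I would fix $K$ to be a polynomial (in $|D|$) upper bound on the number of \emph{distinct} node labels that can occur in reduced witness trees for atoms derivable from $D$ and $\Pi$. Indeed, $\calS$ is a set of at most $b$ atoms (with $b$ the maximal number of body atoms in $\Pi$); every argument is either a constant from the active domain of $D$ (of which there are $O(|D|)$ many) or one of a bounded pool of canonically renamed nulls; and $\calRS$ records one atom (or $\varepsilon$) per such null. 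Hence the number of labels is polynomial in $|D|$. By Lemma~\ref{lemma:corresp} the multiplicity of $A$ equals the number of non-equivalent reduced witness trees of $A$, so it suffices to reason about these trees.

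For the direction ``bounded depth $\Rightarrow$ finite multiplicity'' of statement~1, I would note that reduced witness trees have branching bounded independently of $D$ (each universal node expands into the partition of $\bigcup_{A}\gamma_A(\body(\rho_A))$, whose size is bounded by the program), while node labels come from a set of size $\le K$. Thus, if every reduced witness tree of $A$ has depth $\le K$, there are only finitely many of them and $\nit{mult}(A,\nit{PBBS}(\Pi,D))$ is finite. For the converse I would use a pumping argument: given a reduced witness tree of depth $>K$, a longest root-to-leaf path has more than $K$ nodes, so by pigeonhole two nodes $v_i,v_j$ on it, with $v_i$ an ancestor of $v_j$, carry the same label $(\calS,\calRS)$. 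Since equal labels represent identical \pt-configurations, the computation below $v_j$ is a legal continuation from $v_i$; replacing the subtree at $v_j$ by a freshly-null-renamed copy of the subtree at $v_i$ yields another valid reduced witness, strictly larger in size. Iterating produces infinitely many pairwise non-equivalent reduced witnesses of strictly increasing size, hence infinite multiplicity. This settles statement~1.

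For statement~2 I would combine statement~1 with a \emph{minimal-size} contraction argument. If $A$ has infinite multiplicity then, by statement~1, some reduced witness tree of $A$ has depth $>K$; choose one, $\calW^*$, with the fewest nodes among all reduced witness trees of $A$ of depth $>K$, and let $L=\depth(\calW^*)$. Suppose for contradiction $L>2K$. On a longest path $v_0,\dots,v_L$ consider the bottom window $v_{L-K},\dots,v_L$ of $K+1$ nodes; by pigeonhole two of them, $v_i$ and $v_j$ with $L-K\le i<j\le L$, share a label. Contracting this loop, i.e.\ replacing the subtree rooted at $v_i$ by the smaller subtree rooted at $v_j$, yields a valid reduced witness $\calW'$ of strictly fewer nodes. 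Moreover the prefix $v_0,\dots,v_i$ survives untouched, so $\depth(\calW')\ge i\ge L-K>K$ (using $L>2K$), whence $\calW'$ still has depth $>K$ but is smaller than $\calW^*$, contradicting minimality. Therefore $L\le 2K$, i.e.\ $\calW^*$ is a reduced witness of depth in $[K+1,2K]$, as required.

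The main obstacle is the careful bookkeeping of nulls. To justify that the label count is genuinely polynomial I must argue that nulls can be canonically renamed into a bounded pool, which is precisely what makes the \pt data structure fit in logarithmic space; and to justify that splicing subtrees at two nodes with equal labels again gives a \emph{valid reduced} witness I must re-normalize (rename apart the inserted copy) and re-apply the pruning of repeated subtrees. The equality of the full configuration $(\calS,\calRS)$, rather than merely of the atom at the node, is exactly what guarantees that these operations preserve validity and the correspondence of Lemma~\ref{lemma:corresp}; once this is in place, the depth arithmetic in statement~2 is routine, the choice of the window at the \emph{bottom} of a longest path being what keeps the contracted tree of depth $>K$.
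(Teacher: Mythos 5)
Your proposal is correct and follows essentially the same route as the paper's proof: you take $K$ to be the polynomial bound on the number of non-equivalent pairs $(\calS,\calRS)$ arising from the logspace data structure of \pt, prove statement~1 by bounded branching plus a pigeonhole/pumping argument (replacing the descendant's subtree by a renamed copy of the ancestor's), and prove statement~2 by contracting a loop of length at most $K$. The only cosmetic difference is that you organize the contraction around a minimal-size counterexample with a bottom-window pigeonhole, whereas the paper repeatedly contracts loops on all branches of length $>2K$ until the depth lands in $[K+1,2K]$; these are the same idea.
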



\begin{proof}[Proof Idea] Recall that the data structure propagated by the \pt algorithm consists of pairs $(\calS, \calRS)$.
We call pairs $(\calS,\calRS)$ and $(\calS',\calR_{\calS'})$
{\em equivalent\/} if one can be obtained from the other by renaming of nulls.
The bound $K$ corresponds to the maximum number of non-equivalent pairs $(\calS,\calRS)$ over the given signature and domain of $D$.
By the logspace bound on this data structure, there can only be polynomially many (w.r.t.\ $D$) such pairs.
For the first claim of the theorem, suppose that a (reduced) witness tree $\calW^*$ has depth greater than $K$;
then there must be a branch with two nodes
$N$ and $N'$ with equivalent pairs $(\calS,\calRS)$ and $(\calS',\calR_{\calS'})$. We get infinitely many witness trees by
arbitrarily often
iterating the path
between $N$ and $N'$.
\end{proof}

In principle, Theorem \ref{theo:infinite-multiplicity} suffices to prove decidability of
FINITENESS and design an algorithm for the
MULTIPLICITY: just chase database $D$ with the transformed warded Datalog program $\Pi^+$
up to
depth $2K$. If the desired ground atom $A$ extended by some tid is generated at
a depth greater than $K$, then conclude that $A$ has infinite multiplicity. Otherwise,
the multiplicity of $A$ is equal to the number of atoms of the form $(\iota;A)$
in the chase result. However, this chase of depth $2K$ may produce an exponential number of atoms and hence take exponential time. Below we show that we can in fact do significantly better:

\begin{theorem}
\label{theo:finite-multiplicity} \em
For warded \dpmsn programs, both the FINITENESS problem and the MULTIPLICITY problem
can be solved in polynomial time.
\end{theorem}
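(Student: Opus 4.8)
The plan is to reduce both problems to \emph{counting} reduced witness trees and then to evaluate this count by dynamic programming over the polynomially many states of the \pt algorithm. First I would dispose of negation exactly as described above the theorem: processing strata bottom-up, each negated atom $\neg P$ is replaced by a positive predicate whose extension has been precomputed in polynomial time, so it suffices to treat warded \dpm. Next, observe that by Lemma~\ref{lemma:corresp}, together with the one-to-one correspondence between proof trees and their reduced forms, the multiplicity $\nit{mult}(A,\nit{PBBS}(\Pi,D))$ equals the number of reduced witness trees $\calW^*$ for $A$. Both FINITENESS and MULTIPLICITY thus become questions about the number of such trees, and the point is to answer them \emph{without} materializing the exponentially many trees that the naive depth-$2K$ chase would produce.

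Second, I would build a finite directed graph $G$ whose vertices are the equivalence classes (under renaming of nulls) of the pairs $(\calS,\calRS)$ propagated by \pt. As used in the proof of Theorem~\ref{theo:infinite-multiplicity}, the logspace bound on this data structure guarantees that $G$ has only polynomially many vertices in $|D|$. An existential vertex $(\calS,\calRS)$ has, for each admissible tuple of guessed rules and instantiations $(\rho_A,\gamma_A)_{A\in\calS}$, one outgoing hyperedge to the multiset of child vertices obtained by the partitioning step of \pt; the number of such guesses is polynomial because rule bodies have bounded size and substitutions range over the active domain. Counting reduced witness trees then satisfies the recursion
\begin{equation*}
N(\calS,\calRS) \;=\; \sum_{(\rho_A,\gamma_A)_{A\in\calS}} \ \prod_{\text{components } (\calS',\calR_{\calS'})} N(\calS',\calR_{\calS'}),
\end{equation*}
with base case $N(\{B\},\emptyset) = \mult{B}{D}$ for a ground EDB atom $B$ (a singleton ground atom may both lie in $D$ and be further resolved, in which case the two contributions are summed). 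The multiplicity of $A$ is $N(\{A\},\emptyset)$.

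Third, for FINITENESS I would first mark the \emph{productive} vertices, i.e.\ those admitting at least one reduced witness tree, by a least-fixpoint computation analogous to grammar non-emptiness; this is polynomial. By Theorem~\ref{theo:infinite-multiplicity}, $A$ has infinite multiplicity precisely when some reduced witness tree of $A$ can be pumped, i.e.\ when a productive vertex repeats along a branch; equivalently, when $G$ restricted to the productive vertices reachable from $(\{A\},\emptyset)$ contains a cycle. Cycle detection in the polynomial-size $G$ is clearly polynomial, which settles FINITENESS. For MULTIPLICITY in the finite case, this reachable productive subgraph is acyclic, so I would topologically sort it and evaluate $N$ bottom-up. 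Although $N(\{A\},\emptyset)$ can be exponential in $|D|$ (cf.\ Example~\ref{ex:exponential-multiplicity}), it has only polynomially many bits, and the DP performs polynomially many additions and multiplications on such numbers, each feasible in polynomial time; hence the overall running time is polynomial.

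The main obstacle I anticipate is the soundness of this reduction to tree counting. One has to verify that $N(\calS,\calRS)$ depends only on the equivalence class of $(\calS,\calRS)$ and that distinct choices in the recursion yield genuinely distinct, non-equivalent reduced witness trees, so that the DP neither double-counts nor conflates trees; in particular the partition into components $(\calS',\calR_{\calS'})$ must constitute an \emph{independent} product of subtree counts. This rests on the null-renaming and subtree-pruning conventions fixed before Lemma~\ref{lemma:corresp}, which must be shown to make the component counts independent and to make acyclicity of the productive reachable subgraph equivalent to finiteness via Theorem~\ref{theo:infinite-multiplicity}.
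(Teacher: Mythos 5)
Your proposal is correct, and its core idea --- reducing both problems to counting reduced witness trees and exploiting the polynomial bound on the number of non-equivalent pairs $(\calS,\calRS)$ --- is exactly the paper's strategy. For MULTIPLICITY the two solutions essentially coincide: the paper's procedure \ptt is a memoized recursion computing precisely your sum-of-products recurrence (a sum over admissible tuples of resolution steps $(\rho_i,\gamma_i)$, a product over the null-disjoint components of the partition), with a table over equivalence classes of $(\calS,\calRS)$ playing the role of your DP table; your topological-sort evaluation and the paper's recursion-with-stack are interchangeable, since both are only run after finiteness has been established. Where you genuinely deviate is FINITENESS: the paper never materializes the graph $G$, but instead modifies the alternating-logspace algorithm \pt by threading an extra flag $(b,y)$ through the computation, nondeterministically marking a node $N_1$ and accepting only if a pair equivalent to its $(\calS,\calRS)$ recurs below it on the same branch; this decides infiniteness in ALOGSPACE and hence in PTIME. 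Your route --- build the polynomially many equivalence classes explicitly, mark productive vertices by a least fixpoint, and test for a cycle in the reachable productive part, as in context-free grammar finiteness --- is equally valid via Theorem~\ref{theo:infinite-multiplicity}, and it buys a single deterministic data structure serving both problems, plus an explicit account of the bit-length of the (possibly exponential) multiplicities, a cost the paper leaves implicit. What the paper's version buys is that it stays within the logspace-bounded data structure of \pt, so no polynomial-size graph ever has to be constructed.

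Two details in your sketch deserve the care you already flag. First, reachability and cycle detection in $G$ must be done in the AND/OR (hypergraph) sense: a cycle witnesses pumping only if every hyperedge along it has all of its remaining components productive, since otherwise the pumped branch cannot be completed to an actual reduced witness tree; with that convention, acyclicity is indeed equivalent to finiteness by Theorem~\ref{theo:infinite-multiplicity}. Second, your base-case remark --- summing the EDB multiplicity of a ground atom with its rule-derived count when its predicate also occurs in a rule head --- is in fact more careful than the paper's pseudocode, which returns the EDB multiplicity immediately; this matters only under the (nonstandard) convention that EDB predicates may appear in rule heads.
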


\begin{proof}[Proof Idea]
A decision procedure for the FINITENESS problem can be obtained by modifying the ALOGSPACE algorithm  \pt
from \cite{DBLP:conf/pods/ArenasGP18} in such a way that we additionally ``guess'' a branch in the witness tree
with equivalent labels. The additional information thus needed also fits into logspace.

The MULTIPLICITY problem can be solved in polynomial time by a tabling approach to the \pt algorithm. We thus
store for each (non-equivalent) value of $(S,R_S)$ how many (reduced) witness trees it has and propagate this information
upwards for each resolution step encoded in the (reduced) witness tree.
\end{proof}

\section{Multiset Relational Algebra (MRA)}
\label{sec:next}

Following \cite{DBLP:conf/slp/MaherR89,DBLP:conf/vldb/MumickPR90}, we consider multisets (or bags) $M$ and  elements $e$ (from some domain) with non-negative integer multiplicities,
$\mult{e}{M}$ (recall from Section \ref{sec:dtbs} that, by definition,
$e \in M$ iff $\mult{e}{M} \geq 1$).
Now consider multiset relations $R, S$. Unless stated otherwise, we
assume that $R,S$ contain tuples of the same arity, say $n$.
%
%
We define the following multiset operations of MRA: the  {\em multiset union}, $\uplus$, is defined
by $R \uplus S := T$, with $\mult{e}{T} := \mult{e}{R} + \mult{e}{S}$.
{\em Multiset selection}, $\sigma^\nit{m}_C(R)$, with a condition $C$, is defined as the multiset $T$
containing all tuples in $R$ that satisfy $C$ with  the same multiplicities as  in $R$.
For {\em multiset projection} $\pi^\nit{m}_{\bar{k}}(R)$, we get the multiplicities
$\mult{e}{\pi^\nit{m}_{\bar{k}}(R)}$ by
summing up the multiplicities of all tuples in $R$
that, when projected to the positions $\bar{k} = \langle i_1, \ldots, i_k\rangle$, produce $e$.
For the {\em multiset (natural) join} ${R \Join^{\!\nit{m}}\! S}$, the multiplicity of each tuple $t$ is
obtained as the product of multiplicities of tuples from $R$ and of tuples from $S$ that join to $t$.

\ignore{***********************************
The {\em multiset projection} $\pi^\nit{m}_{\bar{k}}(R)$ is defined as follows:
let $\bar{k} = \langle i_1, \ldots, i_k\rangle$ be a $k$-tuple
of elements from $\{1, \ldots, n\}$;
accordingly, for an $n$-tuple $\bar{t} =\langle t_1, \ldots, t_n\rangle  \in R$, we consider the $k$-tuple $\langle t_{i_1}, \ldots,  t_{i_k}\rangle$. Now,
$\pi^\nit{m}_{\bar{k}}(R)$ is the multiset $T$ containing $k$-tuples $e=\langle t_{i_1}, \ldots,  t_{i_k}\rangle$, with $\mult{e}{\pi^\nit{m}_{\bar{k}}(R)}$
defined as the sum of the multiplicities in $R$ of tuples
$\bar{t}$ producing $e$.

For the {\em multiset (natural) join} assume that tuples have arity $n$ in $R$ and arity $n'$ in $S$.
To simplify the presentation, assume that the natural join is
via the last attribute of $R$ and first of $S$. Then we define
the following multiset of
$(n+n'-1)$-tuples:
$\bar{t} = \langle t_1, \ldots, t_{n + n' -1}\rangle \in \ R \Join^\nit{m}\! S$
iff there are $\bar{r} \in R$  and  $\bar{s} \in S$, such that
$r_n = s_1, \ \bar{t}|_{[1,\ldots,n]} = \bar{r}$
and  $\bar{t}|_{[n,\ldots,n+n'-1]} = \bar{s}$,
with
$\mult{\bar{t}}{R \Join^\nit{m}\! S} =
\sum_{\bar{r},\bar{s} \mbox{\footnotesize
\ s.t.\ $r_n = s_1$}} \mult{\bar{r}}{R} \times \mult{\bar{s}}{S}.$
For example, with $R = \{\langle a,b,c\rangle, \langle a,b,c\rangle,\langle a,b,d\rangle\}$ and $S = \{\langle c,e \rangle, \langle c,e \rangle, \langle d,f \rangle\}$, with a join via $R[3], S[1]$,
we obtain \ $J = \{ \langle a,b,c,e\rangle, \langle a,b,c,e\rangle,  \langle a,b,c,e\rangle,  \langle a,b,c,e\rangle,$ $ \langle a,b,d,f\rangle\}$.
***********************************}

For the {\em multiset difference},
two definitions are conceivable: \ Majority-based, or ``monus" difference
(see e.g.\ \cite{DBLP:journals/japll/GeertsP10}),
given by \ $R \ominus S := T$,
\ignore{$R \smallsetminus_\nit{mb} S := T$,}  with  $\mult{e}{T} := \nit{max}\{\mult{e}{R}-\mult{e}{S},0\}$.
There is also the ``all-or-nothing" difference:  \ $R \otimes S := T$,
\ignore{$R \smallsetminus_{\it an} S := T$,}  with  $\mult{e}{T} := \mult{e}{R} \mbox{ if }$ $ e \notin S$ and
$\mult{e}{T} := 0$, otherwise\ignore{ \
(``an" stands for  {\em all-or-nothing\/})}. Following \cite{DBLP:conf/adc/MumickS93},
we have only considered $\otimes$  so far (implicitly, starting with Datalog$^{\neg s}$, in Section \ref{sec:dtbs}).
The {\em multiset intersection} $\cap_m$  is not treated or used in
any of \cite{DBLP:conf/semweb/AnglesG16,DBLP:conf/slp/MaherR89,DBLP:conf/vldb/MumickPR90,DBLP:conf/adc/MumickS93}.  Extending the DTB semantics from \cite{DBLP:conf/vldb/MumickPR90}
to
 $\cap_m$ would treat it as a special case of
the join, which may be counter-intuitive, e.g.,
$\{a,a,b\} \cap_\nit{m} \{a,a,a,c\} := \{a,a,a,a,a,a\}$.
Alternatively, we could define ``minority-based" intersection $R \cap_\nit{mb} S$,
which returns each element with its minimum  multiplicity, e.g.,
$\{a,a,b\} \cap_\nit{mb} \{a,a,a,c\} := \{a,a\}$.

We consider MRA with  the following basic multiset
operations: multiset union $\uplus$\ignore{$\cup_m$}, multiset projection $\pi^\nit{m}_{\bar{k}}$,
multiset selection $\sigma^\nit{m}_C$, with $C$ a condition,
multiset (natural) join $\Join^{\!\nit{m}}$, and
(all-or-nothing) multiset difference
\ignore{$\smallsetminus_{\it an}$}$\otimes$. We can also include {\em duplicate elimination} in MRA, which becomes operation $\mc{DI}$ of Definition \ref{def:di-and-sp} when using tids. Being just a projection in the latter case, it can be represented in Datalog. For the moment, we consider multiset-intersection as a special case of
multiset (natural) join $\Join^{\!\nit{m}}$.
It is well known that the basic, set-oriented relational algebra operations can all be captured
by means of (non-recursive) Datalog$^{\neg s}$ programs (cf.\ \cite{AHV94}).
Likewise, one can capture MRA by means of (non-recursive) Datalog$^{\neg s}$ programs with multiset semantics
(see e.g.\ \cite{DBLP:conf/semweb/AnglesG16}).
Together with our transformation into set semantics of \dpmsn, we thus obtain:

\begin{theorem}
\label{theo:eMRA}
 \em
The Multiset Relational Algebra (MRA) can be represented by warded \dpmsn with
set semantics.
\boxtheorem
\end{theorem}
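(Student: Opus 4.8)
The plan is to prove Theorem \ref{theo:eMRA} by composing two steps that are already available: first, encoding every MRA expression as a (non-recursive) Datalog$^{\neg s}$ program evaluated under the DTB (bag) semantics of Section \ref{sec:dtbs}; and second, applying the transformation $\Pi \mapsto \Pi^+$ of Section \ref{sec:datalog} to obtain a warded \dpmsn program whose \emph{set} semantics reproduces exactly the intended bag result.

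For the first step, I would treat each basic MRA operation by a small schema of rules, using a fresh output predicate $T$ for the result and fresh intermediate predicates for nested subexpressions, so that the overall program is layered and non-recursive. Concretely, the multiset union $R \uplus S$ is realised by the two rules $T(\bar{x}) \leftarrow R(\bar{x})$ and $T(\bar{x}) \leftarrow S(\bar{x})$, since the number of derivation trees for $T(e)$ is the sum of those for $R(e)$ and $S(e)$, matching $\mult{e}{R} + \mult{e}{S}$. Multiset selection $\sigma^\nit{m}_C(R)$ is the rule $T(\bar{x}) \leftarrow R(\bar{x}), C$, which leaves each tuple's number of derivation trees unchanged. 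Multiset projection $\pi^\nit{m}_{\bar{k}}(R)$ is the single rule $T(\bar{y}) \leftarrow R(\bar{x})$ with $\bar{y}$ the projected positions, so that each derivation tree of a tuple projecting onto $e$ gives a distinct derivation tree of $T(e)$ and the multiplicities sum as required. The multiset natural join $R \Join^{\!\nit{m}} S$ is the rule $T(\bar{x},\bar{y}) \leftarrow R(\bar{x},\bar{z}), S(\bar{z},\bar{y})$, where independent choices of a subtree for the $R$-atom and for the $S$-atom yield the product of multiplicities. The all-or-nothing difference $R \otimes S$ is the rule $T(\bar{x}) \leftarrow R(\bar{x}), \neg S(\bar{x})$ -- exactly the pattern analysed in Example \ref{ex:diff}, whose DTB semantics is the all-or-nothing one (if $e \in S$, a single derivation kills every derivation of $R(\bar{x})$; otherwise the multiplicity of $R$ is inherited). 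Duplicate elimination is the operation $\mc{DI}$ of Definition \ref{def:di-and-sp}, i.e.\ a projection onto the non-tid positions. A routine structural induction over the MRA expression then shows that the DTB semantics of the composed program equals the multiset computed by the expression.

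For the second step, I would apply the translation $\Pi \mapsto \Pi^+$ verbatim. By Theorem \ref{theorem:piplus-is-warded}, $\Pi^+$ is a warded \dpmsn program (wardedness is immediate, since the only affected positions are the tid positions at position $0$, and the variables there never propagate to a rule head). By Theorem \ref{thm:isom} we have $\nit{DTBS}(\Pi,D) = \nit{PBBS}(\Pi,D)$, so the set semantics of $\Pi^+$ over the tid-augmented EDB $D^+$ reproduces, after de-identification, exactly the bag result of the MRA expression. Composing the two steps yields a warded \dpmsn program under set semantics that represents the given MRA expression.

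The main obstacle is the multiplicity bookkeeping in the first step: verifying that the \emph{number} of (syntactically distinct) derivation trees produced by each rule schema agrees with the defining multiplicity formula of the corresponding MRA operation. Union and join are the delicate cases -- the former must add and the latter must multiply multiplicities, which relies precisely on the DTB semantics counting syntactically distinct derivation trees -- whereas the all-or-nothing difference is the operation for which the negation-based encoding of Example \ref{ex:diff} is tailored. This is also why the theorem deliberately treats multiset-intersection as a special case of $\Join^{\!\nit{m}}$ rather than as the minority-based $\cap_\nit{mb}$, which cannot be captured in this framework. Once this correspondence is established, wardedness and the set-semantics equivalence follow directly from Theorems \ref{theorem:piplus-is-warded} and \ref{thm:isom}.
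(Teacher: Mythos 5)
Your proposal is correct and takes essentially the same route as the paper: the paper also obtains Theorem \ref{theo:eMRA} by first capturing MRA with (non-recursive) Datalog$^{\neg s}$ under the DTB bag semantics (delegating the per-operation encodings, which you spell out explicitly, to cited prior work) and then composing with the transformation $\Pi \mapsto \Pi^+$, invoking Theorem \ref{theorem:piplus-is-warded} for wardedness and Theorem \ref{thm:isom} for $\nit{DTBS}(\Pi,D) = \nit{PBBS}(\Pi,D)$. One cosmetic slip: your natural-join rule should retain the shared variables in the head, i.e.\ $T(\bar{x},\bar{z},\bar{y}) \leftarrow R(\bar{x},\bar{z}), S(\bar{z},\bar{y})$, since dropping $\bar{z}$ implicitly composes the join with a projection and thereby sums multiplicities over the join values instead of yielding the join itself.
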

As a consequence, the MRA operations can still be performed in polynomial-time (data complexity) via \dpmsn.
This result tells us that MRA -- applied at the level of an EDB with duplicates --  can be
represented in warded Datalog, and uniformly integrated under the same logical semantics with an ontology represented in warded Datalog.

We now retake multiset-intersection (and later also multiset-difference),
which appears as \ $\cap_m$ and $\cap_\nit{mb}$.
The former does not offer any problem for our representation in \dpm \ as above, because it is a special case of multiset join.
\ignore{***************************
However, still in this case, a self-intersection, $R \cap_m R$, treated as a join,  would return the multiplication of multiplicities. For example,
for $R = \{(a,b), (a,b), (c,d)\}$, $R \cap_m R = \{(a,b),(a,b), (a,b), (a,b), (c,d)\}$. \ The same result would be obtained via the intended  \dpm \ rule, namely $\exists z J(z;x,y) \leftarrow R(z_1;x,y),$ $ R(z_2;x,y)$, which
could be undesirable in comparison with the possibly more natural $\{(a,b),(a,b), $ $ (c,d)\}$. Then,
we could use the following rule instead: \ $\exists z J(z;x,y) \leftarrow R(z;x,y), R(z;x,y)$, i.e. with the same tid variable in the rule body, which makes sense  since the same tids appear in the two versions of relation $R$. Since  tids are global, a rule like this works as intended only for the self-intersection (or full self-join), i.e. with the same predicate, but not for arbitrary joins or intersections.
***************************}
In contrast, the  {\em minority-based} intersection, $\cap_\nit{mb}$, is more problematic.\footnote{The {\em majority-based union} operation on bags, that returns, e.g. $\{\{a, a,b,c\}\} \cup_\nit{mab} \{\{a, b,b\}\}:= \{\{a, a,b,b,c\}\}$ should be equally problematic.}\ First, the DTB semantics does not give an account of it in terms of Datalog that we can use to build upon. Secondly, our \dpm-based formulation of duplicate management with MRA operations is set-theoretic. Accordingly, to investigate the representation of the bag-based operation $\cap_\nit{mb}$ by means of the latter, we have to agree on a set-based reformulation $\cap_\nit{mb}$. We propose for it a tid-based (set) representation, because  tid creation becomes crucial to make it a deterministic operation.  Accordingly, for  multi-relations $P$ and $R$ with the same arity $n$ (plus $1$ for tids), we define:
\begin{equation}P \cap_\nit{mb} Q := \{(i;\bar{a}) \ | \ (i;\bar{a}) \in \left\{\begin{array}{ll}
                                                                            P & \mbox{ if } |\pi_0(\sigma_{1..n =\bar{a}}(P))| \leq |\pi_0(\sigma_{1..n =\bar{a}}(Q))|\\
                                                                            Q & \mbox{ otherwise}\end{array} \right. \}. \label{eq:sqlinter}\end{equation}
Here, we only assume that tids are {\em local} to a predicate, i.e. they act as values for a surrogate key. Intuitively, we keep for each tuple in the result the duplicates that appear in the relation that contains the minimum number of them. Here, $\pi_0$ denotes the projection on the $0$-th attribute (for tids), and $\sigma_{1..n =\bar{a}}$ is the selection of those tuples which
coincide with $\bar{a}$  on the next $n$ attributes.
This operation may be non-commutative when equality holds in the first case of (\ref{eq:sqlinter}) (e.g. $\{\langle 1; a\rangle\} \cap_\nit{mb} \{\langle 2; a\rangle\} = \{\langle 1; a\rangle\} \neq \{\langle 2; a\rangle\} = \{\langle 2; a\rangle\} \cap_\nit{mb} \{\langle 1; a\rangle\}$). Most importantly,
it is non-monotonic: if any of the extensions of $P$ or $Q$ grows, the result may not contain the previous result,\footnote{We could redefine (\ref{eq:sqlinter}) by introducing new tids, i.e. tuples $(f(i),\bar{a})$, for each tuple ($i,\bar{a})$ in the condition in (\ref{eq:sqlinter}), with  some function $f$ of tids. The $f(i)$ could be the next tid values after the last one used so far in a list of them. The  operation defined in this would still be non-monotonic.} e.g. $\{\langle 1; a\rangle\} \cap_\nit{mb} \{\langle 2; a\rangle\} = \{\langle 1; a\rangle\} \not \supseteq \{\langle 2; a\rangle\} = (\{\langle 1; a\rangle\} \cup \{\langle 3; a\rangle\}) \cap_\nit{mb} \{\langle 2; a\rangle\}$. \
(It is still non-monotonic under the DTB semantics.)
We get the following inexpressibility results:

\begin{proposition}
\label{prop:FOinter} \em
The minority-based intersection with duplicates $\cap_\nit{mb}$ as in (\ref{eq:sqlinter})
cannot be represented in Datalog,
(positive) \dpm, or FO predicate logic (FOL). The same applies to the majority-based (monus) difference, $\ominus$.
\end{proposition}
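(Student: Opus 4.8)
The plan is to split the claim into two kinds of arguments: a \emph{monotonicity} argument disposes of Datalog and positive \dpm, while a genuine \emph{counting} (Ehrenfeucht--Fra\"{\i}ss\'e) argument is required for FOL, where monotonicity is unavailable. Both operations, $\cap_\nit{mb}$ and $\ominus$, are handled by the same scheme. Throughout, we read the arguments $P,Q$ as the EDB and the result of the operation as the relation computed by a program/query; recall from the tid-based reformulation that here tids act as values of a surrogate key, so the relevant atoms are ground over constants and ordinary monotonicity of ground entailment applies.

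First I would recall that Datalog and positive \dpm are monotone in the EDB: if $D \subseteq D'$, then every ground atom derivable from $D$ (resp.\ entailed via $\nit{Chase}(D,\Pi)$) is also obtained from $D'$. For positive \dpm this holds because enlarging the EDB only adds constraints, so the models of $\Pi \cup D'$ form a subset of those of $\Pi \cup D$, whence the certain ground answers can only grow. Consequently, any operation representable in these formalisms must be monotone in $P$ and $Q$. But the example preceding the proposition shows $\cap_\nit{mb}$ is non-monotone: enlarging $P$ from $\{\langle 1;a\rangle\}$ to $\{\langle 1;a\rangle,\langle 3;a\rangle\}$ drops $\langle 1;a\rangle$ from the output. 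An analogous witness works for $\ominus$: with $P=\{\langle 1;a\rangle,\langle 2;a\rangle\}$, enlarging $Q$ from $\{\langle 3;a\rangle\}$ to $\{\langle 3;a\rangle,\langle 4;a\rangle\}$ empties the output although the input grew. Hence neither operation is representable in Datalog or positive \dpm.

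For FOL the monotonicity argument fails, since negation is available; instead I would exploit that first-order logic cannot compare cardinalities. Suppose for contradiction that $\cap_\nit{mb}$ were defined by an FO formula $\varphi(x_0,x_1)$ of quantifier rank $r$ over the signature $\{P,Q\}$. Restrict to ``single-value'' instances in which the only non-tid value is a fixed constant $a$, so that $P$ carries $m$ distinct tids for $a$ and $Q$ carries $k$ distinct tids for $a$, all tids pairwise distinct. By (\ref{eq:sqlinter}) the output equals $P$ if $m\le k$ and equals $Q$ otherwise; since $P$- and $Q$-tids are disjoint, the sentence $\chi := \forall x_0 \forall x_1\,(P(x_0,x_1)\rightarrow \varphi(x_0,x_1))$ then holds exactly when $m\le k$ (for $m\geq 1$). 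Thus $\chi$, of bounded quantifier rank, would decide the comparison $m\le k$. This is impossible: for $m,k$ large enough, the duplicator wins the $r$-round Ehrenfeucht--Fra\"{\i}ss\'e game between an instance with $m=k$ and one with $m=k+1$ (the structures differ only by one extra $P$-tid), so no such $\chi$ can separate $m\le k$ from $m>k$. The same scheme refutes FO-definability of $\ominus$: on single-value instances the output has $\nit{max}\{m-k,0\}$ tids, so the emptiness sentence $\forall x_0\forall x_1\,\neg\varphi(x_0,x_1)$ holds iff $m\le k$, and the identical game yields a contradiction.

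The main obstacle is the FOL case. Non-monotonicity handles the positive, recursive formalisms almost for free, whereas for FOL one genuinely has to show that the counting hidden in the comparison $|\pi_0(\sigma_{1..n=\bar{a}}(P))| \le |\pi_0(\sigma_{1..n=\bar{a}}(Q))|$ escapes first-order logic. Reducing the operation to a single cardinality-comparison sentence of bounded quantifier rank and invoking the standard Ehrenfeucht--Fra\"{\i}ss\'e threshold is the crux of the argument.
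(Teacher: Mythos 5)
Your proposal is correct, and its first half coincides with the paper's: the Datalog and (positive) \dpm cases are settled by monotonicity of ground entailment together with the non-monotonicity witnesses around (\ref{eq:sqlinter}), which is precisely what the paper's proof idea does. The genuine difference is in the FOL case. Both arguments begin the same way in substance -- reduce the operation, on instances whose tuples all carry a single fixed value, to a cardinality comparison $m \le k$ between two disjoint families of tids -- but they discharge that comparison differently. The paper tid-ifies unary set predicates via $P' := \{\langle a; c\rangle \mid a \in P\}$, shows that an FOL definition of $\cap_\nit{mb}$ (resp.\ $\ominus$) would define the \emph{majority quantifier}, and then cites the known undefinability of that quantifier in FOL; this keeps the proof short and mirrors the technique of Proposition~\ref{prop:stdatinter}, which likewise reduces to a known inexpressibility result (there, in $L^\omega_{\infty\omega}$). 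You instead inline the finite-model-theoretic core: the hypothetical defining formula $\varphi$ yields a sentence $\chi$ of bounded quantifier rank deciding $m \le k$, refuted by the standard Ehrenfeucht--Fra\"{\i}ss\'e indistinguishability of large colored sets of sizes $N$ and $N{+}1$. Your route is self-contained and needs no external citation, at the cost of reproving (a special case of) what the paper quotes. Two minor points to tighten: the game should be played for the quantifier rank of $\chi$, i.e.\ $\mathrm{qr}(\varphi)+2$ rounds, not $\mathrm{qr}(\varphi)$ rounds (harmless, since duplicator wins any fixed number of rounds for $N$ large enough); and for $\ominus$ it is worth remarking that your emptiness test is insensitive to the choice function $f$ in the paper's set-based reformulation of $\ominus$, so the argument indeed applies to every admissible output, which is what makes it robust.
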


 \begin{proof}[Proof Idea]
By the non-monotonicity of $\cap_\nit{mb}$, it
is clear for Datalog and (positive) \dpm. For the inexpressibility in FOL, the key idea is that with the help
of $\cap_\nit{mb}$ or $\ominus$ we could express the  {\em majority quantifier},
which is known to be undefinable in FOL \cite{vb,wester}.
 \end{proof}

\begin{proposition} \em
\label{prop:stdatinter}
The minority-based intersection with duplicates $\cap_\nit{mb}$  as in (\ref{eq:sqlinter})
cannot be represented in Datalog$^{\neg s}$.
The same applies to the majority-based difference, $\ominus$.
\end{proposition}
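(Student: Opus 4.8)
The plan is to follow the strategy of the proof of Proposition~\ref{prop:FOinter}: I would reduce the \emph{majority quantifier} to the operation in question and then invoke an inexpressibility result strong enough to cover Datalog$^{\neg s}$. The difficulty is that \emph{both} arguments used for Proposition~\ref{prop:FOinter} break down here. The non-monotonicity argument fails because Datalog$^{\neg s}$ is itself non-monotone (stratified negation), and the bare FOL-inexpressibility of the majority quantifier does not apply because Datalog$^{\neg s}$ strictly extends FOL through recursion. Hence a genuinely different inexpressibility tool is needed.

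First I would recall the reduction underlying Proposition~\ref{prop:FOinter}: given a domain $U$ and a subset $A \subseteq U$, one projects $A$ and $U \setminus A$ onto a single fixed tuple $e$, obtaining two bags in which $e$ has multiplicities $|A|$ and $|U \setminus A|$, respectively; applying $\ominus$ (resp.\ reading off which input contributes to $\cap_\nit{mb}$, possibly combined with $\uplus$) then decides whether $|A| > |U \setminus A|$, i.e.\ the majority quantifier. Since $R_A \ominus R_B$ is nonempty at $e$ exactly when $|A| > |B|$, the Boolean majority answer is just nonemptiness of the output. The projection, union, and difference steps needed to set up and read off this encoding are all expressible in Datalog$^{\neg s}$, so if $\cap_\nit{mb}$ (resp.\ $\ominus$) as in~(\ref{eq:sqlinter}) were representable in Datalog$^{\neg s}$, the majority query over unordered structures would also be a Datalog$^{\neg s}$ query. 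It is essential here that the domain carries \emph{no built-in order}: on ordered structures stratified Datalog captures all of PTIME, and the majority query would then be expressible.

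The key step is then to show that the majority query is not a Datalog$^{\neg s}$ query on unordered structures. For this I would use the classical containment of stratified Datalog in the finite-variable infinitary logic $L^\omega_{\infty\omega}$ (equivalently, in least-fixpoint logic FO(LFP)): every Datalog$^{\neg s}$ program unfolds into a formula of $L^k_{\infty\omega}$ for some fixed $k$ depending only on the program. It therefore suffices to prove that the majority query lies outside $L^\omega_{\infty\omega}$, which follows from a standard $k$-pebble-game argument. Over the vocabulary with a single unary predicate $P$, the Duplicator wins the $k$-pebble game on two structures as soon as $|P|$ and $|\lnot P|$ exceed $k$ in both (she can always answer with a fresh element of the correct color). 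So for large $m$ one may take one structure of domain size $m$ with $|P| = \lfloor m/2 \rfloor + 1$ (majority holds) and one with $|P| = \lfloor m/2 \rfloor$ (majority fails): these are $L^k_{\infty\omega}$-equivalent yet disagree on majority. As this works for every $k$, majority is undefinable in $L^\omega_{\infty\omega}$, hence in Datalog$^{\neg s}$.

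Combining these pieces yields the contradiction: representability of $\cap_\nit{mb}$ (or $\ominus$) in Datalog$^{\neg s}$ would make the majority query Datalog$^{\neg s}$-definable, which it is not. The main obstacle I anticipate is the careful bookkeeping of the reduction inside the tid-based set representation of~(\ref{eq:sqlinter}) — in particular, verifying that all auxiliary manipulations (building the single-element duplicate counts and extracting the Boolean majority answer) stay within Datalog$^{\neg s}$, so that the only additional expressive power appealed to is that of the problematic operation itself.
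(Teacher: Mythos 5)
Your proof is correct and follows essentially the same route as the paper's: reduce a cardinality-comparison query over unordered structures to the bag operation (using Datalog$^{\neg s}$ glue for the tid-based encoding $\{\langle a;c\rangle \mid a \in A\}$ and for reading off the answer), then combine the containment of Datalog$^{\neg s}$ in $L^\omega_{\infty\omega}$ with the inexpressibility of that query in $L^\omega_{\infty\omega}$. The only differences are cosmetic: the paper detects equicardinality $|S| = |B \smallsetminus S|$ (by running both $P \cap_\nit{mb} Q$ and $Q \cap_\nit{mb} P$ and intersecting the two ``yes'' conditions) and cites Ebbinghaus--Flum for the $L^\omega_{\infty\omega}$ lower bound, whereas you detect strict majority and prove the lower bound directly with a $k$-pebble game over monadic structures, which makes the argument self-contained but rests on the same key lemma.
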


\begin{proof}[Proof Idea]
It can be shown that if a logic is powerful enough to express
any of $\cap_\nit{mb}$ or $\ominus$, then we could express in this logic -- for sets $A$ and $B$ --
that $|A| = |B\smallsetminus A|$ holds. However, the latter property cannot even be expressed
in the logic $L^\omega_{\infty\omega}$ under finite structures \cite[sec. 8.4.2]{ef} and this logic extends
Datalog$^{\neg s}$.
\end{proof}

Among future work, we plan to investigate further inexpressibility issues such as, for instance, whether \dpmsn is expressive enough to capture
$\cap_\nit{mb}$  and $\ominus$.
More generally, the development of tools to address (in)expressibility results in \dpm,  with or without negation, is a matter of future research.

\section{Conclusions and Future Work}
\label{sec:conclusion}

We have proposed the specification of the bag semantics of Datalog in terms of
warded Datalog$^{\pm}$ with set semantics and
we have extended this specification to Datalog$^{\neg s}$ as well as warded \dpm and \dpmsn.
That is, the bag semantics of
all these languages can be captured by warded \dpmsn with set semantics.
We have also discussed Multiset Relational Algebra (MRA)
as an immediate application of our~results.

Our work underlines that warded Datalog$^{\pm}$ is indeed a well-chosen fragment of Datalog$^{\pm}$:
it provides a mild extension of Datalog by the restricted use of existentially quantified variables in the rule heads,
which suffices to capture certain forms of ontological reasoning \cite{DBLP:conf/pods/ArenasGP14,DBLP:conf/ijcai/GottlobP15}
and, as we have seen here, the bag semantics of Datalog. At the same time, it maintains favorable properties of Datalog, such as polynomial-time query answering. Actually, the techniques developed for establishing this polynomial-time complexity result
for warded Datalog$^{\pm}$
in \cite{DBLP:conf/pods/ArenasGP14,DBLP:conf/pods/ArenasGP18} have also greatly helped us to prove our
polynomial-time results for the FINITENESS and MULTIPLICTY problems.

Another advantage of warded Datalog$^{\pm}$
is the existence of an efficient implementation in the VADALOG system \cite{DBLP:conf/ijcai/BellomariniGPS17}.
Further extensions of this system -- above all the support of SPARQL with bag semantics based on
the Datalog re\-writing proposed in \cite{DBLP:conf/semweb/AnglesG16,DBLP:journals/jancl/PolleresW13}
-- are currently under way.
%
Recall that warded \dpmsn was shown to capture a core fragment of SPARQL
under  OWL 2 QL entailment regimes \cite{glimm},
with set semantics though \cite{DBLP:conf/pods/ArenasGP14,DBLP:conf/ijcai/GottlobP15}.
Our transformation via warded \dpmsn will allow us to capture also the bag semantics.

We are currently also working on further inexpressibility results.
Recall that our translation into \dpmsn does not cover multiset intersection;
moreover,
multiset difference is only handled in the all-or-nothing form
$\otimes$,
while the
sometimes more natural form \blue{$\ominus$} has been left out (but the former is good enough for  application to SPARQL).
We conjecture that the two operations $\cap_\nit{mb}$ and $\ominus$
are not expressible in \dpmsn with set semantics.
The verification of this conjecture is a matter of ongoing work.

Finally, we plan to extend our treatment \dpmsn with bag semantics. Recall that  in Section~\ref{sec:warded},
we have only defined multiplicities for ground atoms (without nulls). With our methods developed here,
we can also compute for a non-ground atom $p(u,a)$ (more precisely, with $u$ a null in $\mathbf{B}$) how often it is derived from $D$ via $\Pi$.
\ignore{just add the  rule $q(y) \leftarrow p(x,y)$ for fresh predicate $q$ and find the multiplicity of $q(a)$.}%
However, answering questions like ``for how many different null values $u$ can $p(u,a)$ be derived?'' requires new methods.


%


\bibliography{bagsemantics}

\clearpage
\appendix

\section{Proofs for Section 3}
\label{app:Section3}

\begin{proof}[Proof of Theorem \ref{theorem:piplus-is-warded}]
It is easy to verify that, in every $\rho^+ \in \Pi^+$, the only harmful position is the
position 0 (i.e, the tid) of each predicate from $\Pi$. However, the variables occurring in position 0
in the rule bodies do not occur in the head. Hence, none of the rules in $\Pi^+$ contains a dangerous variable and,
therefore, $\Pi^+$ is trivially warded. By the same consideration, the auxiliary predicates introduced in
(\ref{eq:aftRew}) above contain only non-affected positions. Since these are the only negated atoms in rules of $\Pi^+$,
we have only ground negation in $\Pi^+$. Finally, the stratification of $\Pi$ carries over to $\Pi^+$, where
an auxiliary predicate
$\nit{Aux}_i^\rho$ introduced for a negated predicate $B_i$ in the definition of predicate  $H$ by $\rho$ in $\Pi$ ends up in stratum $s$ of $\Pi^+$ iff $B_i$ is
in stratum $s$ in $\Pi$ (cf.\ (\ref{eq:forRew}) and (\ref{eq:aftRew})).
\end{proof}

\begin{proof}[Proof of Theorem \ref{thm:isom}]
We first consider the case of a Datalog program $\Pi$ (i.e., without negation).
It suffices to prove the following claim: for a Datalog program $\Pi$ with  multiset EDB $D$,  there is a one-to-one correspondence between DTs in $\nit{DT}(\Pi,\nit{col}(D))$ with root atoms \ignore{of the form} $P(\bar{c})$  and minimal chase-sequences with $\Pi^+$ that start with $D^+$ and end in atoms  $P(\iota;\bar{c})$, with $\iota \in \mc{I}$, i.e. that establish $D^+ \leadsto^* P(\iota;\bar{c})$. By ``minimality'' of a chase-sequence we mean that every intermediate atom derived
is used to enforce some rule later along the same sequence. \
More precisely:
\noindent (a) \  $\bl{\delta\!: \ \nit{col}^{-1}(\nit{DTBS}(\Pi,D)) \cong \nit{col}^{-1}(\nit{Chase}(\Pi^+,D^+))}$, i.e. there are isomorphic as sets.
\noindent (b) \ For every element $e$ (in the data domain): \
$\nit{mult}(e,\nit{col}^{-1}(\nit{DTBS}(\Pi,D))) = \nit{mult}(e,$ $\nit{Chase}(\Pi^+,D^+))$.

One direction is by induction on the depth of the trees (the other direction is similar, and by induction of the  length of the chase sequences). The correspondence is clear for depth 1. Now  assume that, for every DT of depth at most $n$, with some root $Q(\bar{c})$, there is exactly one chase sequence  ending in  $Q(\iota;\bar{c})$.

We now consider trees  of depth $n+1$. Let $P(\bar{c})$ be an atom at the root of a DT $\calT$ obtained from the ground instantiation $\rho_g\!: \ P(\bar{c}) \leftarrow  Q_1(\bar{c}_1), \ldots,$ $ Q_k(\bar{c}_k)$ of a rule $\rho\!: \ P(\bar{x}) \leftarrow Q_1(\bar{x}_1), \ldots, Q_k(\bar{x}_k)$. Then it has distinct children $Q_1(\bar{c}_1), \ldots, Q_k(\bar{c}_k)$, in this order from left ro right, each of them with a  DT $\calT_1, \ldots, \calT_k$ of depth at most $n$. These DTs in their turn correspond to distinct chase sequences, $s_1, \ldots, s_k$,  starting in $D^+$ and with end atom of the form $Q_1(\iota_1';\bar{c}_1), \ldots, Q_1(\iota_k';\bar{c}_k)$, with different $\iota_j'$s. Then, the interleaved combination of these chase-sequences -- to respect the canonical order of rule applications and body atoms -- and concatenated suffix ``$ \leadsto_{\rho_g^+} \! P(\iota;\bar{c})$" \ is  exactly the one deriving $P(\iota;\bar{c})$.

It remains to consider the case of a Datalog$^{\neg s}$ program $\Pi$.
In this case, we have to consider a chase procedure for Datalog$^{\pm, \neg s}$ applied to $\Pi^+$, and its correspondence to DTs for
$\Pi$.
The chase for Datalog$^{\pm, \neg s}$ \cite{DBLP:journals/ws/CaliGL12} is similar to that for \dpm, except that now it is defined in a  stratified manner: If, in the course of a chase sequence the positive part $\nit{body}^+$ of
a Datalog$^{\pm, \neg s}$ rule with body $\nit{body}^+, \nit{body}^-$  becomes applicable, the rule is applied as long as the instantiated atoms in $\nit{body}^-$ have not been generated already, at a previous stratum of the chase (see \cite[sec. 10]{DBLP:journals/ws/CaliGL12} for details). This has exactly the same effect as the use of {\em negation-as-failure} with stratified Datalog$^{\neg s}$,
as can be easily proved by induction of the strata.
\end{proof}

\begin{proof}[Proof of Corollary \ref{cor:model}]
We first consider the case of a Datalog program $\Pi$ (i.e., without negation).
Let $M$ be the minimal model of $\Pi \cup  \mc{SP}(D)$. Then, $\mc{SP}(\nit{Chase}(\Pi^+,D^+))$ $ \subseteq M$. This can be seen as follows:  every atom $A$ in the former has at least one chase-derivation with $\Pi^+ \cup D^+$, and then,
as in the proof of Theorem \ref{thm:isom},
a DT with $\Pi \cup \nit{col}(D)$, which is also a derivation tree from $\Pi \cup \cal{SP}(D)$. By the correspondence between top-down and bottom-up evaluations for Datalog,  $A \in M$. The other direction is similar.

For a Datalog$^{\neg s}$ program $\Pi$, we use the correspondence between the stratified chase-sequences with $\Pi^+ \cup D^+$ and the stratified, bottom-up computation of the standard model for $\Pi \cup \mc{SP}(D)$, as shown in the proof of
Theorem \ref{thm:isom}.
 \end{proof}

\section{Proofs for Section 4}
\label{app:Section4}

\begin{proof}[Proof of Proposition \ref{prop:DTBS-vs-PTBS}]
Due to the satisfaction by a Datalog$^{\neg s}$ program of conditions (1)-(4) on proof trees at the beginning of this section (conditions (5)-(8) do not apply to such a program), every PT for an atom $A$ from multiset $D$ and  a Datalog$^{\neg s}$ program $\Pi$ $D$ is also a DT for the atom, and the other way around.
\end{proof}

\begin{proof}[Proof of Proposition \ref{prop:preserve-wardedness}]
Through the transformation of rules  of $\Pi$ into rules of $\Pi^+$,
the non-zero positions in $\Pi^+$ coincide with those in $\Pi$. Hence,
$\nit{Aff}(\Pi) = \nit{Aff}(\Pi^+) \smallsetminus \{P[0]~|~P$ appears in a rule head$\}$
holds.
Then, $\nit{NonAff}(\Pi^+) = \nit{NonAff}(\Pi) \cup \{P[0]~|~P$
does not appear in a rule head$\}$; and,
furthermore,
$\nit{Harmless}(\rho^+) = \nit{Harmless}(\rho) \cup \{z_i~|~ A_i$
does not appear in a rule head$\}$.
Since tid-variables in bodies do not appear in heads, $\nit{Dang}(\rho^+) =\nit{Dang}(\rho)$ holds.
Thus, for each rule $\rho^+$, the variables in $\nit{Dang}(\rho^+)$ already have a  ward in $\rho$.
\end{proof}

\begin{proof}[Proof of Theorem \ref{thm:isom-warded}]
It suffices to prove that there is a one-to-one correspondence between normalized, non-equivalent PTs in $\nit{PT}(\Pi,\nit{col}(D))$ with root atoms  $P(\bar{c})$  and (minimal) chase-sequences with $\Pi^+$ that start with $D^+$ and end in atoms  $P(\iota;\bar{c})$, with $\iota \in \mc{I}$. One direction of the correspondence is is obtained by induction on the depth of PTs (the other one is similar, by induction on the length of chase sequences). The assumptions on canonical orders of application of rules and generation of PTs are crucial.

\ignore{More precisely:
\noindent (a) \  $\bl{\delta\!: \ \nit{col}^{-1}(\nit{PTBS}(\Pi,D)) \cong \nit{col}^{-1}(\nit{Chase}(\Pi^+,D^+))}$, i.e. there are isomorphic as sets.
\noindent (b) \ For every element $e$ (in the data domain): \
$\nit{mult}(e,\nit{col}^{-1}(\nit{PTBS}(\Pi,D))) = \nit{mult}(e,$ $\nit{Chase}(\Pi^+,D^+))$. }

 The correspondence is clear for PT depth 1. Now  assume that, for every PT of depth at most $n$, with some root $Q(\bar{c})$, there is exactly one chase sequence  ending in $Q(\iota;\bar{c})$.
\ We now consider a PT $\calT$ of depth $n+1$ with an atom $P(\bar{c})$ at the root, and obtained from the ground instantiation $\rho_g\!: \ P(\bar{c}) \leftarrow  Q_1(\bar{c}_1), \ldots,$ $ Q_k(\bar{c}_k)$ of a rule $\rho\!: \ \exists z P(\bar{x},z) \leftarrow Q(\bar{x}_1), \ldots, Q(\bar{x}_k)$.\footnote{Variables  do not have to appear  in this order in the head, but we may assume that then original programs have at most a single existential in the head \cite{DBLP:conf/pods/ArenasGP18}.} Then it has distinct children $Q_1(\bar{c}_1), \ldots, Q_k(\bar{c}_k)$, in this order from left to right, each of them with a PT $\calT_1, \ldots, \calT_k$ of depth at most $n$ when the atom is positive, and just a (negative) leaf when the atom is negative. The former in their turn correspond to distinct chase sequences $s_i$ starting in $D^+$ and  ending in  atom $Q_i(\iota_i';\bar{c}_i)$, with different $\iota_j'$s. These sequences can be interleaved in order to respect the canonical order of rule application, and the suffix ``$ \leadsto_{\rho_g^+} \! P(\iota;\bar{c})$" can be added;  resulting in a sequence $s$ for the Datalog$^{\neg s,g}$ program $\Pi^+$  that derives $P(\iota;\bar{c})$ \cite{DBLP:journals/ws/CaliGL12}.
\end{proof}

\section{Proofs for Section 5}
\label{app:Section5}

\begin{proof}[Proof of Theorem~\ref{theo:infinite-multiplicity}]
Recall that the \pt algorithm propagates data structures $(\calS,\calRS)$ consisting of a set $\calS$ of atoms and a set $\calRS$ of pairs $(z,x)$, where $z$ is a null in $\calS$  and $x$ is an atom (i.e., the atom by which null $z$ was introduced by a resolution step further up in the computation).
As mentioned in the proof sketch in Section 5,
we call two pairs $(\calS,\calRS)$ and $(\calS',\calR_{\calS'})$ {\em equivalent\/}
if one can be obtained from the other by renaming of nulls.
Moreover, let $K$ denote the maximum number of possible values of
non-equivalent pairs $(S,\calRS)$ over the given signature and domain of $D$. By the logspace bound on this data structure, there can only be polynomially many (w.r.t.\ $D$) such values.

Clearly, if all reduced witness trees of $A$
have depth at most $K$, then there can be only finitely many reduced witness trees of $A$ and,
by Lemma \ref{lemma:corresp}, only finitely many (reduced) proof trees. Hence,
$A$ indeed has finite multiplicity. Conversely, suppose that
a ground atom $A$ has a reduced witness tree $\calW$ of
depth greater than $K$. Choose a branch $\pi$ from the root to some leaf node in $\calW$, such that
$\pi$ has maximum length (i.e., its length corresponds to the depth of $\calW$).
By assumption, the depth of $\calW$ (and, therefore, the length of $\pi$) is greater than $K$. Hence, there exist two nodes $N$ and $N'$ on $\pi$ which are labelled with equivalent pairs $(\calS,\calRS)$ and $(\calS',\calR_{\calS'})$. Let
$N$ be an ancestor of $N'$ and let $d$ denote the distance between the two nodes.
Then (after appropriate renaming of nulls) we can
replace the subtree rooted at $N'$ by the subtree rooted at $N$ thus producing a reduced
witness tree whose depth is $\depth(\calW) + d$. By iterating this transformation, we can produce an infinite number of
witness trees of $A$ of depth $\depth(\calW) + d$, $\depth(\calT) +2d$,  $\depth(\calT) +3d$, etc. Hence, $A$ indeed has infinite multiplicity.
This proves the first claim of the theorem.

For the second claim, we proceed in the opposite direction: If
$A$ has infinite multiplicity then, by the first claim,
$A$ has a reduced witness tree of depth greater than $K$. Suppose that
all these reduced witness trees of depth greater than $K$
actually have depth greater than $2K$. Then, we can inspect all branches $\pi$ in the
proof tree and identify nodes $N$ and $N'$ with equivalent labels.
Again, let $N$ be an ancestor of $N'$ and suppose that the distance between the two nodes
is $d$ with $1 \leq d \leq K$. Then (after appropriate renaming of nulls), we can
replace the subtree rooted at $N$ by the subtree rooted at $N'$. In this way,
 at least one branch $\pi$ of
of length $> 2K$ has been replaced by a branch of length $\length(\pi) - d$.
By applying this transformation to
all branches of length $> 2K$, we eventually produce a reduced witness tree
whose depth is in the interval  $[K+1,2K]$.
\end{proof}

\begin{proof}[Proof of Theorem~\ref{theo:finite-multiplicity}, FINITENESS problem]
A decision procedure for the FINITENESS problem can be obtained by modifying the \pt algorithm from
\cite{DBLP:conf/pods/ArenasGP18} in such a way that we ``guess' a branch $\pi$ in the witness tree
with two nodes $N_1$ and $N_2$ on $\pi$ that carry equivalent values
$(\calS_1,\calR_{\calS_1})$ and $(\calS_2,\calR_{\calS_2})$.
To this end, we extend the data structure $(\calS,\calRS)$
propagated in the  \pt algorithm by a pair $(b,y)$, where $b$ is a Boolean flag and $y$ is either
$\varepsilon$ or another copy of the data structure $(\calS,\calR_{\calS})$.
$b = $ true means that we yet have to find
two nodes $N_1$ and $N_2$ (where $N_1$ is the ancestor of $N_2$)
with equivalent labels on a branch in the witness tree.
In this case, either $y = \varepsilon$ (which means that we have not chosen $N_1$ yet) or
$y$ carries the value $(\calS_1,\calR_{\calS_1})$ of the data structure at node $N_1$.
$b = $ false  means that two such nodes $N_1, N_2$ have already been found further up on this branch or are searched for on a different branch.
By appropriately propagating the information in $(b,y)$, we can decide FINITENESS in ALOGSPACE and, hence, in PTIME. Indeed, $(b,y)$ clearly fits into logspace. This additional information
$(b,y)$ is maintained as follows: On the initial call of \pt, we pass $b = $ true (meaning that we
yet have to find the nodes $N_1$ and $N2$). When processing data structure
$(\calS, \calRS)$, we distinguish the following cases:

\begin{itemize}
\item Case 1: $b = $ true and $y = \varepsilon$ (i.e., we have not chosen $N_1$ yet):
we have to (non-deterministically) choose one of the universal branches of \pt with data structure
$(\calS_i,\calR_{\calS_i})$ for some $i$ to which we pass on $b = $ true; to all other universal branches, we pass on $b = $ false. This means that we search for the two nodes $N_1$ and $N_2$ along the
branch the continues with $(\calS_i,\calR_{\calS_i})$. Moreover, for the branch corresponding to
$(\calS_i,\calR_{\calS_i})$, we non-deterministically choose between letting $y = \varepsilon$ (meaning that we will select $N_1$ further below on this branch) and setting
$y = (\calS, \calRS)$ (meaning that we choose $N_1$ with data structure $(\calS, \calRS)$ and we will search for $N_2$ with equivalent value below on this branch).
\item Case 2: $b = $ true and $y =(\calS_1,\calR_{\calS_1})$ (i.e., we have chosen $N_1$ above and we
search for $N_2$ with equivalent value): if $(\calS_1,\calR_{\calS_1})$ and
$(\calS, \calRS)$ are equivalent, then we pass on $b = $ false to all universal branches. This means, that we have indeed found two nodes $N_1,N_2$ in the reduced witness tree with equivalent values.
Otherwise, we have to (non-deterministically) choose one of the universal branches of \pt with data structure $(\calS_i,\calR_{\calS_i})$ for some $i$ to which we pass on $(b,y)$ unchanged.
This means that we have to continue our search for $N_2$ along this branch.
\item Case 3: $b = $ false  (i.e., we have already found $N_1, N_2$ further up on this branch or
we search for $N_1,N_2$ on a different branch in the reduced witness tree). Then we simply pass on
$(b,y)$ unchanged to all universal branches.
\end{itemize}

\noindent
If we eventually reach the base case (i.e., $\calS$ is a singleton consisting of a ground atom
from EDB $D$), then we have to check if $b = $ false. Only in this case, we have an accept node
in the witness tree. Otherwise (i.e., $b = $ true), we reject.
\end{proof}

\begin{proof}[Proof of Theorem~\ref{theo:finite-multiplicity}, MULTIPLICITY problem]
We construct a polynomial-time algo\-rithm for the MULTIPLICITY problem by converting the alternating \pt algorithm from \cite{DBLP:conf/pods/ArenasGP18}
into a deterministic algorithm, were the existential guesses and universal branches are realised by loops over all possible values. At the heart of our algorithm is a procedure $\ptt$, which takes as input a pair $(\calS,\calRS)$ and returns the number $m$ of non-isomorphic, reduced witness trees for this parameter value (referred to as the
``multiplicity'' of $(\calS,\calRS)$ in the sequel).
Moreover, at the end of executing  procedure \ptt, we store the combination of $(\calS,\calRS)$ and  $m$ in a table. Whenever procedure \ptt
is called, we first check if an equivalent value of the input parameter
$(\calS,\calRS)$ (i.e., obtainable by renaming of nulls) already exists in the table. If so, we simply read out the corresponding multiplicity $m$ from the table and return this value. Otherwise, the resolution steps as in the original \pt algorithm are carried out (for all possible combinations of
$\rho_i$ and $\gamma_i$) and we determine the multiplicity $m$
by recursive calls of \ptt.

A high-level description of  PROGRAM {\sf ComputeMultiplicity} with procedure \ptt is given in Figure \ref{fig:ptt}. We assume that this program is only executed if we have checked before that
$P(\bar{c})$ has finite multiplicity.
The program uses two global variables: the table $\Tab$ and a stack $\Stack$. The meaning of the table has already been explained above. The stack $\Stack$ is used to detect if a pair equivalent to the
current input $(\calS,\calRS)$ has already occurred further up in the call hierarchy. If so, we immediately return $m = 0$, because a valid reduced witness tree cannot have such a loop, if we have already verified before that $P(\bar{c})$ has finite multiplicity.
We next check if the multiplicity of $(\calS,\calRS)$ has already been computed
(and stored in $\Tab$) before. If so, we just need to read out the result and return it.
If the base case has been reached (i.e., $\calS$ just contains a ground atom from EDB
$D$), then the return value is simply the multiplicity of this ground atom in $D$.

In all other cases, the multiplicity of $(\calS,\calRS)$
has to be determined via recursive calls of procedure \ptt. To this end, we
first store a copy of $(\calS,\calRS)$ so that we have it available at the end of the procedure, when we want to store $(\calS,\calRS)$ and its multiplicity in $\Tab$.
Moreover, we eliminate from $\calS$ all atoms $A$ which introduce some null $z$,
such that
$\calRS$ contains a pair $(z,A)$. This is required to compute only {\em reduced\/} witness trees. The concrete derivation of atom $A$ is taken care of in another computation path.
We then resolve the remaining atoms $A_1, \dots, A_k$ in all possible ways with rules from $\Pi$. Of course, these resolution steps must be consistent with
$\calRS$. This means that if rule $\rho_i$ introduces a null $z$ and
$\calRS$ contains a pair $(z,x)$ with $x \neq \varepsilon$,
then $x = \gamma_i(\head(\rho_i))$ must hold.
The set of all body atoms resulting from these resolution steps is partitioned
into sets $\{\calS_1, \dots, \calS_\ell\}$ so that all atoms sharing a null
are in the same set $\calS_i$ and the sets $\calS_i$  are minimal with this
property. The set $\calRS$ is updated in the sense that we replace pairs
$(z,\varepsilon)$ by $(z,A)$ if a null $z$ was introduced via atom $A$ by one of the resolution steps.
We recursively call \ptt for all pairs $(\calS_i,\calR_{\calS_i})$, where
$\calR_{\calS_i}$ is the restriction of $\calRS$ to those pairs $(z,x)$, such that
$z$ occurs in $\calS_i$. By construction, the $\calS_i$'s share no nulls. Hence,
the derivations of the $\calS_i$'s can be arbitrarily combined to form a derivation
of $\calS$. We may therefore multiply the number of possible derivations of
the pairs $(\calS_i,\calR_{\calS_i})$ to get the number of derivations of
$(\calS,\calRS)$ for one particular combination
$[(\rho_1,\gamma_1), \dots, (\rho_k,\gamma_k)]$ of resolution steps for the atoms in
$\calS$. The final result of procedure \ptt is the sum of these multiplicities over
all possible combinations
$[(\rho_1,\gamma_1), \dots, (\rho_k,\gamma_k)]$ of resolution steps.
Before returning this final result $m$, we first store
the original input parameter $(\calS',\calR')$ together wtih multiplicity $m$ in
$\Tab$ and remove $(\calS',\calR')$
from the stack.
\end{proof}

\clearpage

\begin{figure}[ht!]
\begin{center}
\fbox{
\begin{minipage}{\textwidth}
\begin{small}

\begin{tabbing}
xx \= xxx \= xxx \= xxx \= xxx \= xxx \= xxx \= xxx \= \kill
{\bf PROGRAM} {\sf ComputeMultiplicity}\\
{\bf Input:} \hskip 7pt  warded \dpm program $\Pi$, EDB $D$ ground atom $P(\bar{c})$. \\
{\bf Output:}  multiplicity of number of $P(\bar{c})$.  \\[1.2ex]
{\bf Global Variables:} Table $\Tab$, Stack $\Stack$. \\[1.5ex]
{\bf Procedure } \ptt \\
// high-level sketch  \\
{\bf Input:} \hskip 7pt  set $\calS$ of atoms, set $\calRS$ of pairs $(z,x)$. \\
{\bf Output:}  number of non-equivalent reduced witness trees with root $(\calS, \calRS)$.  \\[1.2ex]

{\bf begin}  \+ \\
// 1.\ Check for forbidden loop of $(\calS,\calRS)$: \\
{\bf if} $(\calS,\calRS)$ is contained in $\Stack$ {\bf then} {\bf return} 0; \\[1.1ex]
// 2.\ Check if multiplicity of $(\calS,\calRS)$ is already known: \\
{\bf if} equivalent entry of $(\calS,\calRS)$ exists in $\Tab$ with multiplicity $m$
{\bf then} {\bf return} m; \\[1.1ex]
// 3.\ Base case: \\
{\bf if} $\calS =\{A\}$ for a ground atom $A$ from the EDB $D$
{\bf then} {\bf return} multiplicity of $A$ in $D$; \\[1.1ex]
// 4.\ Recursion: \\
push $(\calS,\calRS)$ onto $\Stack$; \\
$(\calS',\calR')$ := $(\calS,\calRS)$; \\
{\bf for each} $A$ such that there exists $(z,A)$ in $\calRS$
 {\bf do} delete $A$ from $\calS$; \\
let $\calS = \{A_1, \dots, A_k\}$;\\
$m$ := 0; \\
{\bf for each} $[(\rho_1,\gamma_1), \dots, (\rho_k,\gamma_k)]$ such that for every $i$,
   $A_i = \gamma_i(\head(\rho_i))$ and \\
   \hskip 33pt $(\rho_i,\gamma_i)$ is consistent with
 $\calRS$  {\bf do} \+  \\
 $\calS^+$ := $\bigcup_{i=1}^k \gamma_i(\body(\rho_i))$; \\
 update $\calRS$ for all nulls that are introduced by one of the rule applications $\rho_i$; \\
 partition $\calS^+$ into sets $\{\calS_1, \dots, \calS_\ell\}$; \\
 compute the corresponding sets of pairs
 $\{\calR_{\calS_1}, \dots, \calR_{\calS_\ell}\}$; \\
{\bf for each} $i$ {\bf do} $m_i$ := \ptt($\calS_i,\calR_{\calS_i}$); \\
$m$ := $m$ + $m_1 \cdot \dots \cdot m_\ell$; \- \\
store $((\calS',\calR'),m)$ in $\Tab$; \\
pop $\Stack$; \\
return $m$; \- \\
{\bf end};  \\[1.5ex]

{\bf begin} (* Main *)  \\
  \> initialize $\Tab$ to empty table; \\
  \> initialize $\Stack$ to empty stack; \\
  \> {\bf output} \ptt ($\{P(\bar{c})\}, \emptyset$); \\
{\bf end}.

\end{tabbing}

\end{small}
\end{minipage}
}
\end{center}

\vspace{-1.0ex}

\caption{Computation of Multiplicity}
\label{fig:ptt}
\end{figure}

\section{Proofs for Section 6}
\label{app:Section6}

\begin{proof}[Proof of Proposition \ref{prop:FOinter}] It remains to prove only the inexpressibility for FOL. First of all, the operation that takes a unary predicate $P$ (with set extension) and creates $P' :=\{\langle a; c\rangle \ |  \ a \in P\}$, where $c$ is a fixed constant, is definable in FOL. So, the elements of $P$ act as local tids for tuples in $P'$, with the latter representing all duplicates for value $c$.

Now, let $P_1, P_2$ be two unary predicates. The majority of elements of $P_1$ belong to $P_2$, by definition, when $|P_1 \cap P_2| \geq |P_1 \smallsetminus P_2|$, which corresponds to the semantics of the {\em majority quantifier} \cite{vb}.  This is equivalent to \ $ (P_1\smallsetminus P_2)' \cap_\nit{mb} (P_1 \cap P_2)' = (P_1 \smallsetminus P_2)'$, where $\cap, \smallsetminus$ are the usual set-theoretic operations.
\ If $\cap_\nit{mb}$ were FOL-definable, so would be the {\em majority quantifier}, which is known to be undefinable in FOL \cite{vb,wester}.\footnote{It is possible to define, the other way around, the minority-based intersection in terms of the majority quantifier.}

For the difference, as we did for $\cap_\nit{mb}$ in (\ref{eq:sqlinter}), we first  define a natural set-operation associated to  $\ominus$, also denoted by $P \ominus Q$, that returns a set containing as many tuples of the form $\langle t; \bar{a}\rangle$ as $|\{\langle t; \bar{a}\rangle~:~ \langle t;\bar{a}\rangle \in P\}| - |\{\langle t; \bar{a}\rangle~:~ \langle t;\bar{a}\rangle \in Q\}|$, and keeps tids  as in $P$.
\ More precisely, it takes as arguments two predicates of the same arity, say $m+1$, whose elements are of the form $\langle t;\bar{a}\rangle$, with $t$ an identifier appearing in position $0$ and only once in the set. In this setting, we assume that there exists a function $f$ that takes a set $\mc{D}(\bar{a}) = \{\langle t_1,\bar{a}\rangle, \ldots,\langle t_n,\bar{a}\rangle\}$  of ``duplicates" of $\bar{a}$, and  a natural number $k$ and returns a subset $\mc{S}(\bar{a})$ of $\mc{D}(\bar{a})$ of size $n-k$ ($0$ if $k \geq n$, i.e. the empty set). Then, $\ominus$ can be defined by:
\begin{equation}
P \ominus Q := \bigcup_{\bar{a}} f(\mc{D}^P(\bar{a}),|\mc{D}^Q(\bar{a})|),
\end{equation}
where $\mc{D}^P(\bar{a}) = \sigma_{1..m = \bar{a}}(P)$, a selection based on the last $m$ attributes, and similarly for  $\mc{D}^Q(\bar{a})$. \ So, $P \ominus Q$ becomes a subset of $P$.

Now, again as for $\cap_{mb}$, for a unary (set) predicate $P$, we define $P' := \{\langle a; c\rangle~:~ a \in P\}$. Now, consider unary predicates $P_1, P_2$ represented as  $P_1', P_2'$, resp. For $P_1, P_2$
we can express the majority quantifier: \ The majority of the elements of $P_1$ are also in $P_2$ \ iff \ $P_1' \smallsetminus (P_1' \ominus P_2')  \subseteq_m  (P_1' \ominus P_2')$. Here, $\smallsetminus$ is the usual set-difference, and $\subseteq_m$ can be defined (on sets) in FOL as:  $A \subseteq_m B :\Leftrightarrow \forall \langle i;e\rangle: \ \langle i;e\rangle \in A \Rightarrow \exists i' \mbox{ with } \langle i';e\rangle \in B$, and for $\langle i_1;e\rangle,\langle i_2;e\rangle \in A$ with $i_1 \neq i_2$, there are $i_1' \neq i_2'$ with  $\langle i_1';e\rangle, \langle i_2';e\rangle  \in B$.
\end{proof}

\ignore{ This result shows, in particular, that a \dpm-based semantics for minority-based multiset intersection cannot be obtained via the uniform  transformation introduced so far of a Datalog-based representation capturing duplicates appearing as the result of this operation.
  \ignore{ Actually, it is also possible to prove, via the non-definability of the ``majority quantifier" in FO predicate logic (FOL) ,  that this intersection is not definable in FO predicate logic (see \cite[sec. 11]{extV} for the details).}
\ignore{The representation in (\ref{eq:sqlinter}) is motivated by the use of arbitrary tids in the general \dpm \ setting, where we are only exploiting the fact that tids are global (guaranteed by the use of fresh and syntactically different nulls whenever
needed). In particular, we do not want to appeal to any ordering or sequential use of nulls. Otherwise,}
 We could redefine (\ref{eq:sqlinter}) by introducing new tids, i.e. tuples $(f(i),\bar{a})$, for each tuple ($i,\bar{a})$ in the condition in (\ref{eq:sqlinter}), with  some function $f$ of tids. The $f(i)$ could be the next tid values after the last one used so far in a list of them. The  operation defined in this would still be non-monotonic.}

 \smallskip

\begin{proof}[Proof of Proposition \ref{prop:stdatinter}] Let $I, P, Q$ be binary predicates whose first arguments hold tids. Let us assume that there is a general Datalog$^{\neg s}$ program $\Pi^I$, with top-predicate $I$, that  defines $\cap_{\nit{mb}}$ over every extensional database (or finite structure)
 of the form $\mc{D} = \langle B, P^\mc{D}, Q^\mc{D}\rangle$, where $B$ is the domain, and
$P^\mc{D}$ and $Q^\mc{D}$ are of the form $\{\langle t_1,a\rangle, \ldots, \langle t_{n_P},a\rangle\}$, $\{\langle t_1',a\rangle, \ldots, \langle t_{n_Q}',a\rangle\}$, respectively.
That is,  $I^\mc{D}$, the extension of $I$ in the standard model of $\Pi^I \cup \mc{D}$ is $P^\mc{D} \cap_{\nit{mb}} Q^\mc{D}$.\ignore{\footnote{If there is such a program, it is not difficult to obtain a Boolean program
instead,  i.e. one with a propositional top-goal, $\nit{true}$, that is computed exactly when $I^\mc{D} = P^\mc{D} \cap_{\nit{mb}} Q^\mc{D}$.}} \
In order to show that such a Datalog$^{\neg s}$-program $\Pi^I$ cannot exists, we assume that it does exist, and we use it to obtain another Datalog$^{\neg s}$-program to define in Datalog$^{\neg s}$-program something else that is provably not definable in Datalog$^{\neg s}$.

Now, consider finite structures of the form $\mc{B} = \langle B, S^\mc{B}\rangle$, with unary $S$. They can be transformed via a Datalog$^{\neg s}$-program into structures of the form
$\mc{D}_B = \langle B, B^c,P_B,Q_B\rangle$, with $B^c = \{\langle b,c\rangle~|~ b \in B\}$, $P_B = \{\langle b,c\rangle~|~ b \in S^\mc{B}\}$, \ $Q_B = B^c \smallsetminus P_B$ (usual set-difference), and $c$ is a fixed, fresh constant. A simple extension of program $\Pi^I$, denoted with $\bar{\Pi}^I$, can be constructed and applied to $\mc{D}_B$ to compute both $P_B \cap_{\nit{mb}} Q_B$ and $Q_B \cap_{\nit{mb}} P_B$ (just extend the original $\Pi^I$ by new rules obtained from $\Pi^I$ itself that exchange the occurrences of predicates $P_B$ and $Q_B$). \ignore{For the latter it is good enough to extend program $\Pi^I$ with simple Datalog rules that invert the position of $P_B$ and $Q_B$ in program $\Pi^I$.}  Then, with one program $\bar{\Pi}^I$, with predicates
   $I_P, I_Q$, we can compute $P\cap_{\nit{mb}} Q$ and $Q\cap_{\nit{mb}} P$, resp. We can add to this program the following three top rules:

\vspace{1mm}\centerline{$\nit{yes} \leftarrow  \nit{yes}_P, \ \nit{yes}_Q. \ \ \ \ \ \ \
\nit{yes}_P \leftarrow I_P(x;a), P(x;a). \ \ \ \ \ \ \
\nit{yes}_Q \leftarrow I_Q(x;a), Q(x;a).$}

\vspace{1mm}
\noindent The second rule obtains $\nit{yes}_P$ when (the extension of) $P\cap_{\nit{mb}} Q$ gives $P$ (because $I_P$ and $P$ have at least an element in common); and the third $\nit{yes}_Q$ when $Q \cap_{\nit{mb}} P$ gives $Q$. The top rule gives $\nit{yes}$ then exactly when the extensions of $P$ and $Q$ have the same cardinality.

All this implies that it is possible to define in Datalog$^{\neg s}$ the class of structures of the form $\mc{B} = \langle B, S^\mc{B}\rangle$ for which $|S^\mc{B}| = |B \smallsetminus S^\mc{B}|$. However, this is not possible
in the logic $L^\omega_{\infty\omega}$ under finite structures \cite[sec. 8.4.2]{ef} (this logic extends FO logic with infinite disjunctions and formulas containing finitely many variables \cite[sec. 3.3]{ef}). Since $L^\omega_{\infty\omega}$ is more expressive than Datalog$^{\neg s}$ (and FO logic and Datalog as a matter of fact) \cite[chap. 8]{ef},
there cannot be a Datalog$^{\neg s}$ program that defines $\cap_{\nit{mb}}$.

The corresponding inexpressibility result for the majority-based difference $\ominus$,
is easily obtained via the equality
$P \cap_{\nit{mb}} Q = P$ iff $P \ominus Q = \emptyset$.
\end{proof}

Notice that proof of Proposition \ref{prop:stdatinter} implies the statement in Proposition \ref{prop:FOinter} for Datalog and FOL. We have kept the two proofs, because they use different techniques from finite model theory.

\ignore{
Nowhere in this work so far,  have we appealed  to an ordering or sequential use of nulls (or tids), but only to the introduction of fresh nulls. \ However, if, in order to represent multiple occurrences of the same tuple,
``colors" are used as {\em local} tids (cf. \cite{msr} or Section \ref{sec:dtbs}), which are ordered and sequentially used, then stratified Datalog could be used to define $\cap_\nit{mb}$. \
In fact, if now $P
= \{P(c_1,\bar{a}), P(c_2,\bar{a}), \ldots, P(c_n,\bar{a}), P(c_1,\bar{b}), \ldots\}$ and  $Q
= \{Q(c_1,\bar{a}), Q(c_2,\bar{a}), \ldots,$\linebreak $ Q(c_k,\bar{a}), Q(c_1,\bar{b}), \ldots\}$,    we can redefine: $$P \ \cap_\nit{mb} \ Q \ := \ \{(c_1,\bar{a}), \ldots, (c_{\nit{min}(n,k)},\bar{a}), (c_1,\bar{b}), \ldots\}.$$
In this case, we can use the following Datalog$^{\neg s}$ program to specify the intersection, as the extension of predicate $I$:
\begin{eqnarray*}
\nit{largerQ}(\bar{x}) \ &\leftarrow& \ Q(z,\bar{x}), \nit{not} \ P(z,\bar{x})\\
 \nit{largerP}(\bar{x}) \ &\leftarrow& \ P(z,\bar{x}), \nit{not} \ Q(z,\bar{x})\\
 I(z,\bar{x}) \ &\leftarrow& \ P(x,a), \nit{largerQ}(\bar{x})\\
  I(z,\bar{x}) \ &\leftarrow& \ Q(x,a), \nit{largerP}(\bar{x})\\
  I(z,\bar{x}) \ &\leftarrow& \ P(z,\bar{x}), \nit{not} \ \nit{largerQ}(\bar{x}), \nit{not} \ \nit{largerP}(\bar{x}).
  \end{eqnarray*}
\comlb{It can be made simpler, just: $I(z;\bar{x}) \leftarrow P(z;\bar{x}), Q(z,\bar{y})$?} +++}

\ignore{\comlb{Is multiset intersection  as defined in (\ref{eq:sqlinter}) definable in stratified Datalog? \ To prove the negative, what comes to my mind is trying to extend Kolaitis and Vardi's pebble games for Datalog. \ Is it definable in  stratified warded \dpm?}
}

\section{Further Material for Section 7}

We conclude with a simple example that illustrates the application of our transformation into set semantics of \dpmsn
for SPARQL with bag semantics.

\newcommand{\ans}{\mathit{ans}}

\begin{example}
\label{ex:sparql}

Consider the following SPARQL query from
\url{https://www.w3.org/2009/sparql/wiki/TaskForce:PropertyPaths#Use_Cases}:

\smallskip

\noindent
PREFIX  foaf: <http://xmlns.com/foaf/0.1/> \\
SELECT  ?name \\
WHERE   \mbox{} \hskip 5pt  \{ ?x foaf:mbox <mailto:alice@example> .   \\
\mbox{} \hskip 51pt    ?x foaf:knows/foaf:name ?name  
\mbox{} \hskip 5pt \}

\smallskip

\noindent
This query retrieves the names of people Alice knows.
In our translation into Datalog, we omit the prefix and use the constant {\it alice}
short for <mailto:alice@example> to keep the notation simple.
Moreover, we assume that the RDF data over which the query has to be evaluated
is given by a relational database with a single ternary  predicate $t$ ,
i.e., the database consists of triples $t(\cdot,\cdot,\cdot)$. \
The following Datalog program with answer predicate $\ans$ is equivalent
to the above query:
\begin{equation*}
\ans(N)  \leftarrow  t(X, \mathit{mbox},  \mathit{alice}),
  t(X,  \mathit{knows}, Y), t(Y,  \mathit{name}, N)
\end{equation*}
We now apply our transformation into Datalog$^{\pm}$ from Section~\ref{sec:datalog}. Suppose that
the query has to be evaluated over a multiset $D$ of triples. Recall that
in our transformation, we would first have to extend all triples in $D$ to quadruples
by adding a null (the tid) in the 0-position. Actually, this can be easily automatized
by the first rule in the program below. We thus get the following Datalog$^{\pm}$ program:
\begin{eqnarray*}
\exists Z \ t(Z;U,V,V) & \leftarrow & t(U,V,V).  \\
\exists Z \ \ans(Z;N)  &\leftarrow & t(Z_1; X,  \mathit{mbox},  \mathit{alice}),
t(Z_2; X,  \mathit{knows}, Y), t(Z_3; Y,  \mathit{name}, N).
\end{eqnarray*}

\vspace{-19pt}
\boxtheorem
\end{example}

\end{document}